\newcommand{\gcon}{G_{\texttt{conn}}}
\newcommand{\gspar}{G_{\texttt{spar}}}
\newcommand{\gcov}{G_{\texttt{cov}}}
\newcommand{\gconf}{G_{\texttt{conf}}}
\newcommand{\hcon}{H_{\texttt{conn}}}
\newcommand{\hcov}{H_{\texttt{cov}}}
\newcommand{\cR}{{\mathcal{R}}}
\newcommand{\omitted}{{\ifthenelse{\boolean{shortver}}{$\spadesuit$}{\!}}}
\newcommand{\cO}{\mathcal{O}}
\newcommand{\no}{\texttt{No}\xspace} 
\newcommand{\yes}{\texttt{Yes}\xspace}
\newcommand*{\rom}[1]{\expandafter\@slowromancap\romannumeral #1@}
\DeclareMathOperator*{\argmax}{arg\,max}
\newcommand{\fpt} {{\sf FPT}\xspace}
\newcommand{\nph} {{\sf NP}-hard}
\newcommand{\woh} {{\sf W}$[1]$-hard\xspace}
\newcommand{\npconp}{{\sf NP} $\subseteq$ {{\sf co-NP/poly}}\xspace}
\definecolor{anti-flashwhite}{rgb}{0.95, 0.95, 0.96}
\newcommand{\OO}{{\mathcal{O}}\xspace}
\newcommand{\cover}{\textbf{{\sf{cover}}}\xspace}
\newtheorem{reduction rule}{Reduction Rule}
\newtheorem*{reduction rule*}{Reduction Rule}
\newcommand{\OPT}{\texttt{OPT}}
\newcommand{\os}{\mathcal{O}^\star}
\newcommand{\Oh}{\mathcal{O}}
\newcommand{\pcds}{{\sc PCDS}}
\newlength{\RoundedBoxWidth}
\newsavebox{\GrayRoundedBox}
\newenvironment{GrayBox}[1]%
   {\setlength{\RoundedBoxWidth}{.93\textwidth}
    \def\boxheading{#1}
    \begin{lrbox}{\GrayRoundedBox}
       \begin{minipage}{\RoundedBoxWidth}}%
   {   \end{minipage}
    \end{lrbox}
    \begin{center}
    \begin{tikzpicture}%
       \node(Text)[draw=black!20,fill=white,rounded corners,%
             inner sep=2ex,text width=\RoundedBoxWidth]%
             {\usebox{\GrayRoundedBox}};
        \coordinate(x) at (current bounding box.north west);
        \node [draw=white,rectangle,inner sep=3pt,anchor=north west,fill=white] 
        at ($(x)+(6pt,.75em)$) {\boxheading};
    \end{tikzpicture}
    \end{center}}     
\newenvironment{defproblemx}[2][]{\noindent\ignorespaces%
                                \FrameSep=6pt%
                                \parindent=0pt%
                \vspace*{-1.5em}
                \ifthenelse{\isempty{#1}}{%
                  \begin{GrayBox}{\textsc{#2}}%
                }{%
                  \begin{GrayBox}{\textsc{#2}  parameterized by~{#1}}%
                }
                \begin{tabular*}{\textwidth}{@{\hspace{.1em}} >{\itshape} p{1.8cm} p{0.8\textwidth} @{}}%
            }{
                \end{tabular*}%
                \end{GrayBox}%
                \ignorespacesafterend
            }  
\newcommand{\lr}[1]{\left(#1\right)}
\newcommand{\lrc}[1]{\left\{#1\right\}}
\newcommand{\lrsq}[1]{\left[#1\right]}
\DeclarePairedDelimiter{\LRb}{\lbrace}{\rbrace}
\newcommand{\LR}{\LRb*}
\newcommand{\blue}[1]{{\color{blue}#1}}
\newcommand{\fptas}{{\sf EPAS}\xspace}
\newcommand{\cU}{\mathcal{U}}
\newcommand{\cF}{\mathcal{F}}
\newcommand{\cI}{\mathcal{I}}
\newcommand{\pcrbdsfull}{{\sc Partial Connected Red-Blue Dominating Set}\xspace}
\newcommand{\pcrbdsshort}{{\sc PartialConRBDS}\xspace}
\newcommand{\stpcrbdsfull}{{\sc Steiner Partial Connected RBDS}\xspace}
\newcommand{\stpcrbdsshort}{{\sc SteinerPartialConRBDS}\xspace}
\newcommand{\ball}{\textnormal{\textsf{Ball}}\xspace}
\newcommand{\algpas}{\texttt{PartialConRBDS-PAS}\xspace}
\newcommand{\extalgpas}{\texttt{InnerPartialConRBDS-PAS}\xspace}
\newcommand{\pcmcshort}{{\sc PCMC}\xspace}
\newcommand{\kdd}{\ensuremath{K_{d,d}}}
\renewcommand{\deg}{{\sf deg}\xspace}
\newcommand{\cov}{\textnormal{{\textsf{cov}}}\xspace}
\newcommand{\cC}{\mathcal{C}}
\newcommand{\kddfree}{{\textnormal{\textsf{$K_{d,d}$-free}}}\xspace}
\newcommand{\cut}{\textnormal{\textsf{cut}}\xspace}
\newcommand{\ti}[1]{\textcolor{Red}{TI: #1}}
\newcommand{\runtimea}{2^{\Oh(kd(k^2+\log d))}}
\newcommand{\dist}{\textnormal{\textsf{dist}}}
\renewcommand{\cover}{\texttt{cover}}
\newcommand{\cH}{\mathcal{H}}
\newcommand{\cD}{\mathcal{D}}
\newcommand{\propone}{\hyperref[prop:conn]{($\p1$)}\xspace}
\newcommand{\proptwo}{\hyperref[prop:dom]{($\p2$)}\xspace}
\definecolor{newgray}{gray}{0.25} 
\newcommand{\Vred}{V_{\mathrm{purple}}}
\newcommand{\Vblue}{V_{\mathrm{green}}}
\renewcommand{\paragraph}{%
  \@startsection{paragraph}{4}%
  {\z@}{1ex \@plus 1ex \@minus .2ex}{-0.5em}%
  {\normalfont\normalsize\bfseries}%
}
\newcommand{\bcdshort}{{\sc BudgetedCDS}\xspace}
\definecolor{anti-flashwhite}{rgb}{0.95, 0.95, 0.96}
\theoremstyle{plain}
\colorlet{mix}{red!50!black}
\definecolor{BrickRed}{rgb}{0.8, 0.25, 0.33}
\definecolor{PineGreen}{rgb}{0.0, 0.47, 0.44}
\definecolor{Brown}{rgb}{0.59, 0.29, 0.0}
\definecolor{bblue}{rgb}{0.1,0.4,0.6}
\definecolor{ForestGreen}{rgb}{0.1333,0.5451,0.1333}
\newtheorem{property}[theorem]{Property}
\Crefname{property}{Property}{property}
\Crefname{claim}{Claim}{Claims}
\Crefname{observation}{Observation}{Observations}
\title{FPT Approximations for Connected Maximum Coverage}
\titlerunning{FPT Approximations for Connected Maximum Coverage}
\author{Tanmay Inamdar}
{Indian Institute of Technology Jodhpur, Jodhpur, India}{taninamdar@gmail.com}{https://orcid.org/0000-0002-0184-5932}{The author is supported by Anusandhan National Research Foundation (ANRF), grant number
ANRF/ECRG/2024/004144/PMS, and IIT Jodhpur Research Initiation Grant, grant number I/RIG/TNI/20240072. }
\author{Satyabrata Jana}{University of Warwick, UK}{satyamtma@gmail.com}{https://orcid.org/0000-0002-7046-0091}{Supported by the  Engineering and Physical Sciences Research Council (EPSRC) via the project MULTIPROCESS (grant no. EP/V044621/1)}
\author{Madhumita Kundu}{University of Bergen, Norway}{kundumadhumita.134@gmail.com}{https://orcid.org/0000-0002-8562-946X}{}
\author{Daniel Lokshtanov}{Department of Computer Science, University of California Santa Barbara, Santa Barbara, USA}{daniello@ucsb.edu}{https://orcid.org/0000-0002-3166-9212}{}
\author{Saket Saurabh}{The Institute of Mathematical Sciences, HBNI, Chennai, India  \and University of Bergen, Norway }{saket@imsc.res.in}{https://orcid.org/0000-0001-7847-6402}{The author is supported by the European Research Council (ERC) under the European Union's Horizon 2020 research and innovation programme (grant agreement No. 819416); and he also acknowledges the support of Swarnajayanti Fellowship grant DST/SJF/MSA-01/2017-18.}
\author{Meirav Zehavi}{Ben-Gurion University of the Negev, Beersheba, Israel}{zehavimeirav@gmail.com}{https://orcid.org/0000-0002-3636-5322}{The author is supported by Israel Science Foundation (ISF), grant no. 1470/24, and by European Research Council (ERC) grant no. 101039913 (PARAPATH).}
\authorrunning{T.~Inamdar, S.~Jana, M.~Kundu, D.~Lokshtanov, S.~Saurabh, M~.Zehavi}
\keywords{Partial Dominating Set, Connectivity, Maximum Coverage, FPT Approximation,  Fixed-parameter Tractability}
\begin{document}

\maketitle


\begin{abstract}

We revisit connectivity-constrained coverage through a unifying model, \pcrbdsfull\ (\pcrbdsshort). 
Given a bipartite graph $G=(R\cup B,E)$ with red vertices $R$ and blue vertices $B$, an auxiliary connectivity graph $G_{\mathrm{conn}}$ on $R$, and integers $k,t$, the task is to find a set $S\subseteq R$ with $|S|\le k$ such that $G_{\mathrm{conn}}[S]$ is connected and $S$ dominates at least $t$ blue vertices.
This formulation captures connected variants of {\sc Maximum Coverage}~[Hochbaum--Rao, Inf.\ Proc.\ Lett., 2020; D'Angelo--Delfaraz, AAMAS 2025], {\sc Partial Vertex Cover}, and {\sc Partial Dominating Set}~[Khuller et al., SODA 2014; Lamprou et al., TCS 2021] via standard encodings.

\paragraph*{Limits to parameterized tractability.}
\pcrbdsshort\ is $\mathsf{W[1]}$-hard parameterized by $k$ even under strong restrictions: it remains hard when $G_{\mathrm{conn}}$ is a clique or a star and the incidence graph $G$ is $3$-degenerate, or when $G$ is $K_{2,2}$-free.

\paragraph*{Inapproximability.} For every $\varepsilon>0$, there is no polynomial-time $(1,\,1-\frac{1}{e}+\varepsilon)$-approximation unless $\mathsf{P}=\mathsf{NP}$. Moreover, under \textsf{ETH}, no algorithm running in $f(k)\cdot n^{o(k)}$ time achieves an $g(k)$-approximation for $k$ for any computable function $g(\cdot)$, or for any $\varepsilon > 0$, a $(1-\frac{1}{e}+\varepsilon)$-approximation for $t$.

\paragraph*{Graphical special cases.} {\sc Partial Connected Dominating Set} is $\mathsf{W[2]}$-hard parameterized by $k$ and inherits the same \textsf{ETH}-based $f(k)\cdot n^{o(k)}$ inapproximability bound as above; {\sc Partial Connected Vertex Cover} is $\mathsf{W[1]}$-hard parameterized by $k$.

These hardness boundaries delineate a natural ``sweet spot'' for study: within appropriate structural restrictions on the incidence graph, one can still aim for fine-grained (FPT) approximations.

\begin{sloppypar}
\paragraph*{Our algorithms.}
We solve \pcrbdsshort exactly by reducing it to {\sc Relaxed Directed Steiner Out-Tree} in time $(2e)^t \cdot n^{\Oh(1)}$. 
For biclique-free incidences (i.e., when $G$ excludes $K_{d,d}$ as an induced subgraph), we obtain two complementary parameterized schemes:
\begin{itemize}
  \item An Efficient Parameterized Approximation Scheme (EPAS) running in time $2^{\Oh(k^2 d/\varepsilon)}\cdot n^{\Oh(1)}$ that either returns a connected solution of size at most $k$ covering at least $(1-\varepsilon)t$ blue vertices, or correctly reports that no connected size-$k$ solution covers $t$; and
  \item A Parameterized Approximation Scheme (PAS) running in time $2^{\Oh(kd(k^2+\log d))}\cdot n^{\Oh(1/\varepsilon)}$ that either returns a connected solution of size at most $(1+\varepsilon)k$ covering at least $t$ blue vertices, or correctly reports that no connected size-$k$ solution covers $t$.
\end{itemize}
Together, these results chart the boundary between hardness and FPT-approximability for connectivity-constrained coverage.
\end{sloppypar}

\end{abstract}

\newpage 

\thispagestyle{empty}
\setcounter{tocdepth}{2} 
\tableofcontents
\pagenumbering{arabic}
\newpage
\setcounter{page}{1}
\section{Introduction} \label{sec:intro}

\begin{sloppypar}
In the last decade, \fpt approximation has enjoyed a tremendous amount of success in obtaining near-optimal approximations for {\sf NP}-hard optimization problems that resist polynomial-time approximations, as well as exact \fpt algorithms. Indeed, this paradigm has led to the best-known approximation guarantees for many fundamental problems, including {\sc Min $k$-Cut} \cite{DBLP:journals/siamcomp/LokshtanovSS24}, {\sc $k$-Median/$k$-Means} clustering~\cite{DBLP:conf/icalp/Cohen-AddadG0LL19}, {\sc $k$-Edge Separator} \cite{DBLP:conf/soda/GuptaLLM019}, {\sc Balanced Separator}~\cite{DBLP:conf/stoc/FeigeM06}, to mention a few. The field’s recent momentum owes much to the emergence of \emph{accompanying hardness-of-approximation} frameworks in the FPT setting, which delineate the achievable frontier and indicate when to stop pursuing better ratios or faster schemes~\cite{DBLP:journals/jacm/SLM19,ChalermsookCKLM17,Wlodarczyk20,ChenL19,Lin18,Lin19,DBLP:conf/stoc/GuruswamiLRS025,DBLP:conf/stoc/GuruswamiLRS024,DBLP:conf/stoc/BafnaSM25}

Possibly second only to clustering and treewidth, the strongest testament to the success of the \fpt approximation paradigm lies in its applications to {\sc Max Coverage} and its variants. In the classical {\sc Max Coverage} problem, we are given a set system $(\cU, \cF)$ and a positive integer $k$, where $\cU$ is a universe of $n$ elements and $\cF$ is a family of $m$ subsets of $\cU$. The goal is to select a subfamily $\cF' \subseteq \cF$ of size $k$ that covers as many elements of $\cU$ as possible. A well-known greedy algorithm achieves a tight $(1 - 1/e)$-approximation~\cite{DBLP:conf/stoc/DinurS14,hochbaum1998analysis}, and the problem is known to be {\sf W}[2]-hard when parameterized by $k$~\cite{DBLP:journals/siamcomp/DowneyF95}.  

Recent results have established that this approximation factor remains tight even if one allows \fpt running time. Specifically, Manurangsi~\cite{DBLP:conf/soda/Manurangsi20} showed that assuming {\sf Gap-ETH}, no algorithm can achieve an approximation ratio better than $(1-1/e)$ in time $f(k) \cdot (|\cU|+|\cF|)^{o(k)}$, even with the promise that there exist $k$ sets that cover the entire universe. Very recently, this result was strengthened by
Guruswami et al.~\cite{DBLP:conf/stoc/GuruswamiLRS025} by weakening the assumption to {\sf ETH}. Despite this barrier, significantly better results can be obtained for {\sc Max Coverage} and its variants when the underlying set system has additional structural properties. This line of research was initiated by Marx~\cite{DBLP:journals/cj/Marx08}, who gave an \fpt $(1-\varepsilon)$-approximation algorithm running in time $\os(2^{\Oh(k^2/\varepsilon)})$\footnote{We use $\os(\cdot)$ to suppress polynomial factors in the input size, i.e., $\os(T) = T \cdot |I|^{\Oh(1)}$, where $|I|$ denotes the input size.} for the {\sc Partial Vertex Cover} (PVC) problem. Here, the task is to select $k$ vertices in a graph that cover as many edges as possible—a special case of {\sc Max Coverage}. Subsequent works extended this to more general settings. In particular, for set systems where each element appears in at most $d$ sets ($d=2$ for PVC), the running time was improved to $\os((d/\varepsilon)^{\Oh(k)})$~\cite{DBLP:journals/jair/SkowronF17,DBLP:conf/soda/Manurangsi19}. More recently, Jain et al.~\cite{DBLP:conf/soda/0001KPSS0U23} further generalized these results to \kddfree set systems, where no $d$ sets share $d$ common elements.  


Building on the success of \fpt approximation for the \emph{vanilla} {\sc Max Coverage}, attention has shifted to \emph{coverage with additional constraints}, a class of problems previously studied in the classical polynomial-time approximation regime. Many of these classical results have been extended to obtain stronger \fpt approximations for coverage problems with fairness~\cite{DBLP:conf/icalp/00020LS0U24}, matroid~\cite{DBLP:conf/esa/Sellier23,DBLP:conf/icalp/00020LS0U24}, and capacity~\cite{DBLP:conf/soda/LokshtanovS0S025} constraints, when the underlying set systems are structurally well-behaved.  


\paragraph*{Connectivity Constraints.} 
Among the constrained variants, \emph{connectivity-constrained coverage} has received considerable attention in the classical setting. Motivated from applications in sensor and social networks, Khuller et al.~\cite{DBLP:journals/siamdm/KhullerPS20} introduced {\sc Connected Partial Dominating Set}, where, given a graph $G$, the goal is to find a minimum size connected subset $S \subseteq V(G)$ that dominates at least $t$ vertices. For this problem, they designed a polynomial-time $\Oh(\log \Delta)$-approximation, where $\Delta$ is the maximum degree in $G$. For the complementary problem, called {\sc Connected Budgeted Dominating Set}---where the goal is to find a connected vertex-subset of size at most $k$ that dominates the maximum number of vertices---they designed an $\lr{\frac{1}{12}(1-\frac{1}{e})}$-approximation~(this was also independently obtained in~\cite{DBLP:journals/tcs/LamprouSZ21}). Subsequently, Hochbaum et al.~\cite{DBLP:journals/ipl/HochbaumR20} introduced a more general {\sc Connected Max Coverage} problem, which is a variant of {\sc Max Coverage}, where the selected sub-family needs to induce a connected subgraph in another \emph{connectivity graph} additionally provided in the input. For this problem, they gave an $\lr{(1-\frac{1}{e})\cdot \max\LR{\frac{1}{R}- \frac{1}{k}, \frac{1}{k}}}$-approximation, where $R$ is the radius of the connectivity graph. Very recently, D'Angelo and Delfaraz~\cite{DBLP:conf/ifaamas/DAngeloD25} designed bicriteria approximations for the same problem; specifically, polylogarithmic approximations for the number of elements covered with a solution of size $(1+\varepsilon)k$. 

However, to the best of our knowledge, these problems have not yet been systematically explored within the parameterized approximation framework. The goal of our work is twofold:

\begin{tcolorbox}[colback=blue!4!white,colframe=blue!60!black]
  \begin{enumerate}
  \setlength{\itemsep}{-2pt}
      \item To formulate a general model of connectivity-constrained coverage that is expressive enough to capture a broad range of problems studied in the prior polynomial-time literature, and 
      \item To develop new techniques within the \fpt approximation paradigm for designing approximation schemes for the two fundamental minimization and maximization problems that naturally arise in this framework.  
  \end{enumerate}
 
\end{tcolorbox}

Our model separates the \emph{constraint layer}—a companion graph $G_{\mathrm{conn}}$ that can enforce connectivity, independence/packing, or fault tolerance—from the \emph{coverage layer}—the incidence bipartite graph $\gcov=(R\cup B,E)$ that represents the set-element incidences from a hypergraph. This separation lets us handle a wide range of set-system (hypergraph) families, including bounded-rank and biclique-free incidences, bounded VC-dimension and geometric ranges, as well as multi-coverage and weighted variants. In the next subsection, we formally define our model and demonstrate its expressivity by showing how previously studied problems can be naturally captured within this framework.


\subsection{Our Model of Connectivity-Constrained Coverage} \label{subsec:conncovmodel}

We introduce the following problem that is central to this work.

\begin{tcolorbox}[enhanced,title={\color{black} {\pcrbdsfull(\pcrbdsshort)}}, colback=white, boxrule=0.4pt,
	attach boxed title to top center={xshift=-.5cm, yshift*=-2.5mm},
	boxed title style={size=small,frame hidden,colback=white}]
	
	\textbf{Input:} An instance $\cI = (\gcon, \gcov,  k, t)$, where
	\begin{itemize}
         \item $\gcon = (R, E)$ is an arbitrary graph, called the \emph{\color{bblue}connectivity graph},
	\item $ \gcov= (R \uplus B, E')$ is a bipartite            graph, called the \emph{\color{bblue}coverage graph}, and
	\item $k$ and $t$ are non-negative integers. 
	\end{itemize}
	\textbf{Question:} \hspace*{.06cm} Does there exist a vertex subset $S \subseteq R$ such that,
	\begin{enumerate}[label=(\arabic*)]
            \item \label{oneprobdef} $\big|S\big| \le k$,
		\item \label{twoprobdef} $\gcon[S]$ is connected, and
		\item \label{threeprobdef} $\big|N_{\gcov}(S)\big| \ge t$ ?
	\end{enumerate}
\end{tcolorbox}

Let us unpack the definition. The input consists of two graphs $\gcon$ and $\gcov$, and two non-negative integers $k$ and $t$. Here, $\gcov$ is a bipartite graph with vertex set $R \uplus B$, that is to be thought of as the \emph{incidence graph} of a set system -- the ``red side'' ($R$) corresponds to the set family $\cF$, and the ``blue side'' ($B$) corresponds to the universe $\cU$, with an edge representing set-element containments. Notice that under this interpretation, {\sc Max Coverage} corresponds to selecting a $k$-sized subset of $R$ that dominates the maximum number of vertices in $B$. Next, we have a connectivity graph $\gcon$, whose vertex set is also $R$. This graph is used to model the connectivity of the solution, i.e., a solution $S \subseteq R$ is required to induce a connected subgraph of $\gcon$. 

While the decision question asks whether there exists such a connected solution of size at most $k$ that dominates at least $t$ vertices in $B$, two natural optimization variants stem from it: find a connected solution $S$ that 
\begin{enumerate}[label=(\roman*)]
\setlength{\itemsep}{-2pt}
    \item maximizes $|N_{\gcov}(S)|$ s.t.~$|S| \le k$, and 
    \item minimizes $|S|$ s.t.~$|N_{\gcov}(S)| \ge t$. 
\end{enumerate}
\noindent 
To handle both variants in a unified way, we define the notion of an $(\alpha, \beta)$-approximation algorithm: such an algorithm either finds a connected solution $S$ such that $|S| \le \alpha k$ and $|N_{\gcov}(S)| \ge \beta t$; or correctly outputs that there is no $S \subseteq R$ satisfying the original requirements. 

\medskip

\paragraph*{Modeling prior problems.} Here we describe how we can model the problems from the prior literature as instances of \pcrbdsshort. Consider {\sc Connected Partial/Budgeted Dominating Set}, studied in \cite{DBLP:journals/siamdm/KhullerPS20,DBLP:journals/tcs/LamprouSZ21}, where we are given a graph $G$, and we want to find a connected dominating set (with dual objectives). To model it as \pcrbdsshort, we let $\gcon \coloneqq G$, and $\gcov$ is a bipartite graph on the vertex set $R \uplus B$, where $R = V(G)$, and $B$ is another copy of $V(G)$. For each $u \in V(G)$, we add the edges between $u$ and the copies of all $w \in N_G[u]$ in the set $B$. Note that there is a close coupling between $\gcon$ and $\gcov$ while modeling {\sc Connected Partial Dominating Set}, which is not always the case. Indeed, consider the {\sc Connected Max Coverage} problem studied in~\cite{DBLP:journals/ipl/HochbaumR20,DBLP:conf/ifaamas/DAngeloD25}, where we are given a set system $(\cU, \cF)$, and a graph $G$ with $V(G)  = \cF$. To model it as \pcrbdsshort, we let $\gcov$ be the set-element incidence graph (as described above), and let $\gcon \coloneqq G$. At the other extreme, the vanilla {\sc Max Coverage} can be modeled by letting $\gcon$ be a clique, rendering the connectivity requirement redundant. 

{\sc Partial Hitting Set} is the ``dual variant'', where the roles of elements of the universe $\cU$ and the sets of $\cF$ are reversed~(\cite{DBLP:journals/jal/GandhiKS04}) --- we want to select a minimum number of elements from $\cU$ that hits at least $t$ elements. This again can be modeled as \pcrbdsshort by letting $\gcov$ be the set-element incidence graph as before; however with the roles of $R$ and $B$ flipped ($R$ corresponds to $\cU$, and $B$ to $\cF$), and $\gcon$ is a clique defined on the vertex set that corresponds to $\cU$.

\subsection{Our Algorithmic Results}
\pcrbdsshort\ is $\mathsf{W[1]}$-hard (\Cref{thm:hardkpds}) parameterized by $k$  even under strong restrictions: it remains hard when $G_{\mathrm{conn}}$ is a clique or a star and the incidence (coverage) graph $\gcov$ is $3$-degenerate, and even when $\gcov$ is $K_{2,2}$-free. For every $\varepsilon>0$, there is no polynomial-time $(1,\,1-1/e+\varepsilon)$-approximation unless $\mathsf{P}=\mathsf{NP}$; moreover, under \textsf{ETH}, no algorithm running in time $f(k)\cdot n^{o(k)}$ achieves either a $(g(k),1)$-approximation (\Cref{thm:hardkapproxpds}) for any computable function $g(\cdot)$ (with respect to the size bound $k$) or a $(1,\,1-1/e+\varepsilon)$-approximation (with respect to the coverage threshold $t$) (\Cref{thm:hardtapproxpds}). The graphical special cases reflect the same picture: \textsc{Partial Connected Dominating Set} is $\mathsf{W[2]}$-hard parameterized by $k$ and inherits the \textsf{ETH}-based $f(k)\cdot n^{o(k)}$ inapproximability bounds above, while \textsc{Partial Connected Vertex Cover} is $\mathsf{W[1]}$-hard parameterized by $k$ (\Cref{theo:whard}). For more details see \Cref{secintro:dichotomy}. These hardness boundaries delineate a natural ``sweet spot'': under suitable structural restrictions on the incidence graph one can still aim for fine-grained (FPT) approximation schemes.

A natural question is: \emph{which} restrictions on the incidence graph suffice?
Recall that \textsc{Partial Vertex Cover} admits a $(1-\varepsilon)$-approximation (for the coverage
target $t$)~\cite{DBLP:journals/cj/Marx08,DBLP:conf/soda/Manurangsi19,DBLP:journals/jair/SkowronF17}
and a $+1$ \emph{additive} approximation (for the size bound $k$)~\cite{DBLP:conf/soda/0001KPSS0U23}.
In fact, these guarantees hold when \(\mathcal{D}\) is the class of incidence graphs of
\(\kddfree\) set systems (i.e., no \(d\) sets share \(d\) common elements).
A natural next step is to ask whether the same techniques extend to \emph{connectivity-constrained}
coverage—namely, when the chosen set \(S\subseteq R\) must also satisfy graph-side constraints
imposed by an arbitrary companion graph \(\gcon\).
Answering this question (almost) affirmatively is the main technical contribution of our work.


Our main algorithmic contribution is parameterized approximation schemes for the two natural optimization variants of \pcrbdsshort when the coverage graph $\gcov$ is \kddfree. Specifically, we prove the following two theorems. The first of the following two theorems is when we insist on a connected solution of size at most $k$, and wish to approximate the number of blue vertices dominated by the solution. 

\begin{restatable}{theorem}{EPASfortcov} \label{thm:introEPASfortcov}
    For any $\varepsilon > 0$, there exists an $(1, 1-\varepsilon)$-approximation algorithm with running time $2^{\Oh(k^2d/\varepsilon)} \cdot |\cI|^{\Oh(1)}$ for an instance $\cI$ of \pcrbdsshort where the coverage graph is \kddfree, and the connectivity graph is arbitrary. That is, this algorithm takes an instance $\cI = (\gcon, \gcov, k, t)$ such that $\gcov$ is \kddfree, runs in time $2^{\Oh(k^2d/\varepsilon)} \cdot |\cI|^{\Oh(1)}$ and either (i) outputs $S \subseteq R$ of size at most $k$ such that $\gcon[S]$ is connected and $|N_{\gcov}(S)| \ge (1-\varepsilon)t$; or (ii) correctly reports that there exists no $S \subseteq R$ of size at most $k$ such that $\gcon[S]$ is connected, and $|N_{\gcov}(S)| \ge t$.
\end{restatable}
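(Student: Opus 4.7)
The plan is to devise a branching algorithm that iteratively extends a connected partial solution in $\gcon$, with the \kddfree structure of $\gcov$ used to tightly cap the branching factor at each step. Conceptually, the algorithm mimics the iterative ``heavy-vertex'' branching used by Marx for \textsc{Partial Vertex Cover}, but adapted to the connectivity constraint and generalized from $K_{2,2}$-free to \kddfree incidences.

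\textbf{Anchor and iterative extension.} For each choice of an initial vertex $v_0 \in R$ (serving as an anchor for the connected solution), we run an iterative branching procedure maintaining a connected partial solution $S \subseteq R$ in $\gcon$, initialized to $\{v_0\}$. If at some point $|N_{\gcov}(S)| \ge (1-\varepsilon) t$, we output $S$; if $|S|$ reaches $k$ without meeting the threshold, we backtrack. Otherwise, we extend $S$ by one more vertex via the branching rule described below.

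\textbf{Branching rule via a heavy-vertex pool.} Assume the invariant that, in the ``correct'' branch, $S \subseteq S^*$ for a fixed hypothetical optimum $S^*$ with $|N_{\gcov}(S^*)| \ge t$. Then the deficit $t - |N_{\gcov}(S)|$ is at least $\varepsilon t$, and is contributed by at most $k - |S|$ vertices of $S^* \setminus S$; an averaging argument yields some $v^* \in S^* \setminus S$ with marginal coverage at least $\varepsilon t / k$. Since $\gcon[S^*]$ is connected, we may further assume (after possibly guessing short intermediate ``connector'' vertices) that a vertex $v^* \in S^* \cap N_{\gcon}(S)$ with substantial marginal coverage exists. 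Define the candidate pool $P_S \subseteq N_{\gcon}(S) \setminus S$ to consist of vertices whose marginal coverage with respect to $S$ exceeds a threshold $\Omega(\varepsilon t / (kd))$. Applying a Kővári--Sós--Turán-type extremal bound to the bipartite subgraph of $\gcov$ between $P_S$ and $B \setminus N_{\gcov}(S)$ (which inherits the \kddfree property), one can argue $|P_S| = O(kd / \varepsilon)$. We then branch over all $2^{O(kd/\varepsilon)}$ subsets of $P_S$ as candidate guesses for $S^* \cap P_S$ and recurse on each, adding one vertex per recursive call.

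\textbf{Running time.} The recursion depth is at most $k$, and at each internal node we branch into $2^{O(kd/\varepsilon)}$ subproblems, giving $2^{O(k^2 d /\varepsilon)}$ leaves, each processed in polynomial time. The anchor enumeration contributes an $|R|$ factor. Overall, the algorithm runs in $2^{O(k^2 d/\varepsilon)} \cdot |\mathcal{I}|^{O(1)}$ time. Correctness follows because in the leaf corresponding to the sequence of ``correct'' guesses of $S^* \cap P_S$ at each level, the partial solution converges to a connected $S \subseteq R$ with $|S| \le k$ and $|N_{\gcov}(S)| \ge (1-\varepsilon) t$; if no branch succeeds, the algorithm reports that no connected size-$k$ solution of coverage $t$ exists.

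\textbf{Main obstacle.} The central technical difficulty is justifying the heavy-vertex pool bound and the validity of branching over its subsets under the connectivity constraint. Specifically, one must establish (i) the quantitative \kddfree bound $|P_S| = O(kd/\varepsilon)$ in terms of $k, d, \varepsilon$; and, more delicately, (ii) that an $O(kd/\varepsilon)$-sized pool captures $S^*$'s next move even when $S^*$ uses some low-marginal connector vertices to reach a heavy target. This requires a careful interplay between the marginal-coverage averaging argument and a witness-tree argument over $\gcon[S^*]$, ensuring that ``connector guesses'' can be amortized into the same $2^{O(kd/\varepsilon)}$ branching budget per level and do not blow up the running time beyond $2^{O(k^2 d/\varepsilon)}$.
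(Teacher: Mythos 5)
Your proposal takes a genuinely different route from the paper (which builds a conflict graph on $R$, isolates the components of $\gconf[S^\star]$ by random separation, sparsifies $\gcov$ to get additive per-vertex weights, and then finds a maximum-weight connected $k$-subtree of $\gcon$ by color coding). Unfortunately, the step you yourself flag as the main obstacle is not just delicate --- it is false as stated, and the whole branching strategy collapses without it.

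The bound $|P_S|=O(kd/\varepsilon)$ does not follow from $\gcov$ being \kddfree. The $K_{d,d}$-free condition controls \emph{pairwise neighborhood overlaps} (this is exactly how the paper bounds $\Delta(\gconf)$ in \Cref{lem:maxdegkddfreepds}: many vertices sharing $\ge \varepsilon t/k^2$ blue neighbors \emph{with a fixed vertex $v$} would force a $K_{d,d}$ inside $N_{\gcov}(v)$). It says nothing about how many red vertices can each individually have large marginal coverage into the uncovered part of $B$. A perfect matching-like incidence graph in which all red neighborhoods are pairwise disjoint of size $\lceil\varepsilon t/(kd)\rceil$ is $K_{2,2}$-free, yet every one of its (arbitrarily many) red vertices lands in your pool; restricting to $N_{\gcon}(S)$ does not help, since $\gcon$ may be a clique or a star. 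The pool bound you are reaching for is the one that holds in the \emph{bounded-frequency} setting (each blue vertex in at most $d$ sets), where the top $O(kd/\varepsilon)$ sets by size are known to contain a near-optimal solution; it does not transfer to \kddfree{} incidences, which is precisely why Jain et al.~and this paper need the conflict-graph/sparsification machinery. A secondary, also unresolved, gap is the connector issue: the averaging argument produces a heavy vertex of $S^\star\setminus S$ that need not be $\gcon$-adjacent to $S$, and the vertices of $S^\star$ that \emph{are} adjacent to $S$ may all have negligible marginal coverage, so there is no bounded candidate set to branch on for the next extension either. The paper sidesteps both problems by never growing the solution greedily: it converts coverage into an (approximately) additive vertex weight valid for all $k$-subsets simultaneously (\Cref{lem:mainlemma,claim:weightneighbor}) and then optimizes connectivity and weight jointly via \textsc{Maximum-Weight $k$-Tree}.
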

We prove a complementary result, where we approximate the solution size, but insist that it dominates at least $t$ blue vertices. More formally, we prove the following theorem.

\begin{restatable}{theorem}{PASforkcov} \label{thm:introPASforkcov}
    For any $\varepsilon > 0$, there exists an $(1+\varepsilon, 1)$-approximation algorithm with running time $2^{\Oh(kd(k^2+\log d))} \cdot |\cI|^{\Oh(1/\varepsilon)}$ for an instance $\cI$ of \pcrbdsshort, where the coverage graph is \kddfree, and the connectivity graph is arbitrary. That is, this algorithm takes an instance $\cI = (\gcon, \gcov, k, t)$ such that $\gcov$ is \kddfree, runs in time $2^{\Oh(kd(k^2+\log d))} \cdot |\cI|^{\Oh(1/\varepsilon)}$, and either (i) outputs $S \subseteq R$ of size at most $(1+\varepsilon)k$ such that $\gcon[S]$ is connected and $|N_{\gcov}(S)| \ge t$; or (ii) correctly reports that there exists no $S \subseteq R$ of size at most $k$ such that $\gcon[S]$ is connected, and $|N_{\gcov}(S)| \ge t$.
\end{restatable}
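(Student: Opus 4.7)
The plan is to build on the structural ideas developed for Theorem~\ref{thm:introEPASfortcov} and for $\kddfree$ coverage in prior work~\cite{DBLP:conf/soda/0001KPSS0U23}, but to route the approximation slack into the size parameter rather than the coverage parameter. The algorithm combines three ingredients: a $\kddfree$-driven candidate reduction, an $|\cI|^{\Oh(1/\varepsilon)}$ enumeration of a small ``witness'' of an optimal solution, and an FPT assembly step contributing the $2^{\Oh(kd(k^2+\log d))}$ factor, with the $\varepsilon k$ slack used for Steiner-type connectivity augmentation in $\gcon$.

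First, I would exploit the $\kddfree$ structure of $\gcov$. Via a sunflower/representative-family argument in the spirit of Jain et al.~\cite{DBLP:conf/soda/0001KPSS0U23}, one can identify a polynomially-sized family $\cR\subseteq R$ of candidate red vertices such that, after a bounded perturbation, some optimum connected solution $S^*$ of size $k$ covering at least $t$ blue vertices can be assumed to lie in $\cR$. This bounds the combinatorial complexity on the coverage side while leaving the connectivity requirement on $\gcon$ untouched. Second, I would guess a ``heavy witness'' $Z\subseteq R$ of size $\lceil 1/\varepsilon\rceil$: an averaging argument shows that the $\Oh(1/\varepsilon)$ red vertices of $S^*$ with the largest ``private'' coverage account for most of the load, while the remaining ``light'' vertices each contribute an $\Oh(\varepsilon/k)$ fraction; enumerating all such witnesses yields the $|\cI|^{\Oh(1/\varepsilon)}$ factor and, for the correct guess, fixes the hard-to-substitute skeleton of $S^*$.

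Third, given a correct witness $Z$, I would complete the solution in two coupled phases. The \emph{coverage phase} extends $Z$ to a red set $S'\subseteq\cR$ covering at least $t$ blue vertices by a bounded-depth branching over the candidate family $\cR$; the $\kddfree$ property bounds the fanout, and combined with depth $\Oh(k)$ this gives the $2^{\Oh(kd(k^2+\log d))}$ factor (structurally mirroring the exact algorithm's reduction to Relaxed Directed Steiner Out-Tree, now restricted by $Z$). The \emph{connectivity phase} uses the $\varepsilon k$ size slack to add Steiner vertices in $\gcon$ that stitch $S'$ into a single connected component, for example via a Dreyfus--Wagner-style or tree-enumeration procedure on $\gcon$ restricted to cheap bridges.

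The main obstacle will be coordinating the two phases correctly. In particular, one must show that a correctly guessed $\Oh(1/\varepsilon)$-size witness both (i) restricts the coverage-branching enough that the remaining extension takes FPT time in $k,d$, and (ii) leaves a residual structure in $\gcon$ that admits a Steiner-style augmentation of size at most $\varepsilon k$. I expect this to be the delicate step, since the coverage and connectivity constraints are only weakly coupled through the shared vertex set $S$; the choice of ``witness'' will likely need to combine high-coverage contributors with structurally pivotal vertices of $S^*$ in $\gcon$, so that both the branching bound and the Steiner-augmentation bound are simultaneously controlled.
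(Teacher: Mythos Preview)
Your high-level architecture matches the paper's: enumerate an $\Oh(1/\varepsilon)$-size ``witness'' from $S^*$ (giving the $|\cI|^{\Oh(1/\varepsilon)}$ factor), then run an FPT routine contributing the $2^{\Oh(kd(k^2+\log d))}$ factor, and spend the $\varepsilon k$ slack on connectivity augmentation. But the \emph{role} you assign to the witness is wrong, and this is exactly the gap you flag as the ``main obstacle'' without resolving.

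You choose the witness $Z$ for \emph{coverage} reasons (the $\Oh(1/\varepsilon)$ vertices of largest private coverage). This gives you no control whatsoever over distances in $\gcon$, so there is no reason the Steiner augmentation in your connectivity phase should cost only $\varepsilon k$ vertices: the remaining vertices of $S'$ could be arbitrarily far from $Z$ in $\gcon$. Conversely, your averaging claim that the heavy $\Oh(1/\varepsilon)$ vertices already account for ``most of the load'' does not close the coverage gap either, since you need to reach $t$ exactly.

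The paper inverts the roles. The witness $C$ is chosen purely for \emph{connectivity}: a short tree-covering argument (the Solution Covering Lemma) shows that any connected $S^*$ on $k$ vertices admits $C\subseteq S^*$ with $|C|\le \lceil 1/\varepsilon\rceil$ such that every vertex of $S^*$ lies within distance $\varepsilon k$ of $C$ in $\gcon$. Having guessed such a $C$, the algorithm restricts attention to $\widehat R = \ball_{\gcon}^{\varepsilon k}[C]$ and invokes the EPAS of Theorem~\ref{thm:introEPASfortcov} (in its Steiner variant, with terminal set $C$ and $\delta=1/(4k)$) as a black box to obtain a connected $\widetilde S\supseteq C$ of size $k$ covering $(1-\tfrac{1}{4k})t$ vertices. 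The $\kddfree$ branching lemma from~\cite{DBLP:conf/soda/0001KPSS0U23} then says: letting $H$ be the $k(d-1)k^{2(d-1)}+1$ highest-degree red vertices in $\widehat R$, either some $h\in H$ satisfies $|N_{\gcov}(\widetilde S\cup\{h\})|\ge t$, or $S^*\cap H\neq\emptyset$. In the first case $h\in\widehat R$ is within $\varepsilon k$ of $C\subseteq\widetilde S$, so a single short path connects it and the output has size $\le (1+\varepsilon)k$; in the second case branch on $h\in H$, add it to the terminal set, and recurse (depth $\le k$, fanout $|H|=d\cdot k^{\Oh(d)}$).

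So the fix is: the $\Oh(1/\varepsilon)$ guess is a \emph{distance-cover of $S^*$ in $\gcon$}, not a coverage skeleton; the coverage work is done entirely by the EPAS subroutine; and only a single extra vertex $h$ (plus its $\varepsilon k$-length connecting path) is ever added beyond size $k$.
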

We reiterate that both of the following results are in the setting where $\cC$ is arbitrary and $\cD$ is a family of bipartite incidence graphs that are \kddfree for some $d \ge 1$. Recall that Khuller et al.~gave an $\Oh(\log \Delta)$-approximation for {\sc Partial Connected Dominating Set}. Our result above (\Cref{thm:introPASforkcov}) improves upon this result by returning an $(1+\varepsilon)$-approximation in $\os(2^{k\Delta (k^2 + \log \Delta)})$ time.

\paragraph*{Neighborhood Sparsifiers: A Conceptual Contribution.} 
A natural approach  to solve our problem is to assign a \emph{weight} to each red vertex proportional to its coverage degree in \(\gcov\)—say \(w(v)\coloneqq |N_{\gcov}(v)|\)—and then pick a connected \(k\)-set in \(\gcon\) with maximum total weight, i.e., a maximum-weight \(k\)-vertex tree in \(\gcon\) (solvable in \(2^{\cO(k)}n^{\cO(1)}\) time; see \Cref{sec:fptasfort}). 
The obstacle is \emph{additivity}: overlaps in neighborhoods mean that a single weight function \(w:R\to\mathbb{Z}_{\ge0}\) satisfying
\[
  w(S)\;-\;\varepsilon t \;\le\; |N_{\gcov}(S)| \;\le\; w(S)\;+\;\varepsilon t
  \qquad\text{for all } S\subseteq R,\ |S|\le k
\]
(and computable in FPT time) appears out of reach.

Instead, when $\gcov$ is \kddfree, we construct a \emph{small family} of surrogate weight functions, one of which has the desired property. This reduces the task to, for each weight function, solving a maximum-weight \(k\)-tree in \(\gcon\) and taking the best outcome. Technically, the family arises from a combinatorial \emph{sparsification} lemma for \(\gcov\): via a greedy step coupled with random separation (later derandomized), we obtain a sparsified incidence graph that \emph{preserves} the neighborhoods of all  \(k\)-sets. In the sparsified instance, the simple per-vertex degree \(w(v)=\deg_{\text{sparse}}(v)\) serves as the required weight, yielding the desired approximation once lifted back to the original graph.



\begin{lemma}[Neighborhood Sparsifiers for \kddfree\ coverage]
\label{lem:intro-weight-family}
Let \(\cI=(\gcon,\gcov,k,t)\) be an instance of \pcrbdsshort\ with \(\gcov\) \kddfree.
There exist \(\ell=f(k,d,\varepsilon) \cdot \log|B|\) weight functions
\[
\{w_1,\ldots,w_\ell\},\qquad w_i: R \to \mathbb{Z}_{\ge 0},
\]
computable in time \(g(k,d,\varepsilon)\,\cdot |B|^{\cO(1)}\), such that the following holds.
For every \(S\subseteq R\) with \(|S|=k\), there is some
\(j\in[\ell]\) with
\[
  w_j(S) - \varepsilon t \;\le\; |N_{\gcov}(S)| \;\le\; w_j(S),
\]
where \(w_j(S)\coloneqq \sum_{v\in S} w_j(v)\).
\end{lemma}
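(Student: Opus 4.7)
The plan is to construct each weight function $w_j : R \to \mathbb{Z}_{\ge 0}$ as a vertex-degree function in a sparsified incidence subgraph $G'_j \subseteq \gcov$, that is, $w_j(v) := \deg_{G'_j}(v)$. With this choice, $w_j(S) = \sum_{v \in S} \deg_{G'_j}(v) = |E_{G'_j}(S, N_{G'_j}(S))| \ge |N_{G'_j}(S)|$, so overestimation of the sparsified coverage is automatic. The task therefore reduces to producing a small family of sparsifications $\{G'_j\}$ so that, for every $k$-subset $S \subseteq R$, some $G'_j$ in the family both (i) preserves $S$'s full $\gcov$-neighborhood (ensuring $w_j(S) \ge |N_{\gcov}(S)|$), and (ii) has overcount $w_j(S) - |N_{\gcov}(S)| \le \varepsilon t$.

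My starting observation is the Kovari--Sos--Turan theorem: for any \kddfree bipartite graph, $|E(S, N(S))| = O(k \cdot |N(S)|^{1 - 1/d} + |N(S)|)$ for every $k$-subset $S$ on one side. Consequently, when the coverage target $t$ is large enough, say $t \ge (k/\varepsilon)^d$, the trivial sparsification $G'_j = \gcov$ already satisfies both (i) and (ii): the overcount on any $k$-set $S$ with $|N(S)| \le t$ is $O(k \cdot t^{1 - 1/d}) \le \varepsilon t$, and $w_j(S) \ge |N(S)|$ holds trivially. So one weight function handles the ``large-$t$'' regime, and the interesting case is the small-$t$ regime in which $t < (k/\varepsilon)^d$.

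For the small-$t$ regime I plan to combine a greedy step with random separation. The greedy step identifies blue vertices that are \emph{heavy} with respect to some potential $k$-subset, i.e., those with $|N_{\gcov}(b) \cap S| \ge d$: by the \kddfree property, at most $(d-1)\binom{k}{d}$ such $b$'s exist for any fixed $S$, so their structural configurations can be enumerated at cost $f(k, d, \varepsilon)$. For the remaining lightly covered $b$'s, I would apply a random separation step, independently assigning each such $b$ to a uniformly chosen representative in $N_{\gcov}(b)$ and keeping only the chosen edge in $G'_j$. For a favorable outcome, every light $b \in N(S)$ has exactly one surviving edge to $S$, and so the per-vertex owner count becomes an essentially exact estimator of the light part of $|N(S)|$; adding the (intact) contribution from the heavy side yields the desired overestimate. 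The random choice is then derandomized via a perfect hash family / splitter of size $2^{O(k \log k)} \cdot \log|B|$, which accounts for the $\log|B|$ factor in $\ell$, while the enumeration of heavy configurations contributes the $f(k, d, \varepsilon)$ factor.

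The main technical obstacle is the small-$t$ analysis. A naive random ownership may \emph{undercount} $|N(S)|$ when $|N_{\gcov}(b)|$ is much larger than $|N_{\gcov}(b) \cap S|$, because the random representative is then unlikely to lie in $S$. To maintain the overestimate property, the greedy extraction has to absorb every such ``dangerous'' blue vertex into the heavy set in a way that is compatible with the random step and that still produces a single linear weight function $w_j$. Verifying that, after derandomization, the resulting family simultaneously satisfies the overestimate and the $\varepsilon t$ additive error for \emph{every} $k$-subset $S$ --- rather than merely with constant probability for a single fixed $S$ --- is where the \kddfree structure (via the Kovari--Sos--Turan bound together with the shatter-function / VC-dimension-$(d-1)$ bounds on $\gcov$) is most essentially invoked.
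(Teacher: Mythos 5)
Your high-level template (weights as degrees in a sparsified subgraph, random separation, derandomization via a splitter of size $f(k,d,\varepsilon)\log|B|$) matches the paper's, but both of your two regimes have genuine gaps. The ``large-$t$'' case is false: K\H{o}v\'ari--S\'os--Tur\'an does \emph{not} give $|E(S,N(S))|-|N(S)| = O(k\,|N(S)|^{1-1/d})$. The correct bound has the form $|E(S,N(S))| \le d\,|N(S)| + (d-1)^{1/d}k\,|N(S)|^{1-1/d}$, so the excess carries a term of order $(d-1)|N(S)|$ coming from blue vertices with between $2$ and $d-1$ neighbors in $S$, which the \kddfree\ condition does not constrain at all. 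Concretely, a $K_{2,2}$-free incidence graph in which every blue vertex is adjacent to a distinct \emph{pair} of vertices of $S$ has $w(S)=2|N(S)|$, i.e.\ overcount $|N(S)|\ge t$, with no sparsification to save you. Moreover the lemma quantifies over all $k$-sets $S$, not only those with $|N(S)|\le t$, so even your intended bound would not close the case. The ``small-$t$'' case has the gap you yourself flag but do not resolve: if each light blue vertex $b$ picks a single owner among \emph{all} of $N_{\gcov}(b)$, then whenever $b$ has many red neighbors outside $S$ the owner typically lies outside $S$, the edge $b$--$S$ is lost, and the required inequality $|N_{\gcov}(S)|\le w_j(S)$ fails. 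This cannot be repaired by a derandomization family of size $f(k,d,\varepsilon)\log|B|$, because you would need, simultaneously for all $|N(S)|$-many blue vertices (a number not bounded by any function of $k,d,\varepsilon$), that each picks an owner inside $S$ --- these are $\Omega(|B|)$ independent constraints, not $O(k)$.

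The paper resolves exactly this point by doing the random separation on the \emph{red} side rather than having blue vertices choose owners. It builds a conflict graph $\gconf$ on $R$ with $uv\in E(\gconf)$ iff $|N_{\gcov}(u)\cap N_{\gcov}(v)|\ge \varepsilon t/k^2$; \kddfree ness bounds $\Delta(\gconf)$ by $(d-1)(ek^2/\varepsilon)^d$ (this is where the counting of $K_{1,d}$'s enters, in place of your KST bound). A lopsided-universal family then yields, for every $k$-set $S$, a coloring under which $S$ is a union of purple components of $\gconf$ of size at most $k$. The sparsification keeps, for each blue vertex, one edge into \emph{each} purple component (not one edge overall); hence every blue neighbor of $S$ retains an edge into $S$ (giving $|N_{\gcov}(S)|\le w(S)$ for component-respecting $S$), within a component the sparsified neighborhoods are disjoint (no within-component overcount), and the cross-component overcount is at most $\binom{k}{2}\cdot \varepsilon t/k^2\le \varepsilon t$ by the conflict-graph threshold. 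Your proposal is missing this conflict-graph/per-component mechanism, which is the idea that simultaneously guarantees the overestimate and the $\varepsilon t$ additive error; without it neither of your cases goes through.
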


We believe that \Cref{lem:intro-weight-family} will find applications beyond this work, in the spirit of seminal cut and spectral sparsifiers~\cite{DBLP:conf/stoc/BenczurK96,DBLP:journals/siamcomp/SpielmanT11}.



\subsection{Potential for a Dichotomy Program} 
\label{secintro:dichotomy}

Our results motivate a clean classification program separating regimes that admit FPT-approximation from those that are provably hard. We study the (parameterized) complexity of \pcrbdsshort\ over “\((\mathcal{C}\times\mathcal{D})\)” families of instances, where the constraint graph \(\gcon\) ranges over a class \(\mathcal{C}\) and the coverage/incidence graph \(\gcov\) ranges over a class \(\mathcal{D}\). We instantiate this framework by varying \(\mathcal{C}\) (e.g., bounded treewidth, clique/star) and \(\mathcal{D}\) (e.g., biclique-free, bounded rank/degeneracy), thereby charting the frontier between FPT-approximability and hardness. We exemplify this by varying $\cC$ and $\cD$ in a few different ways.

\begin{description}[leftmargin=*,format=\textcolor{newgray}]
\setlength{\itemsep}{-2pt}
    \item[$\cC$ = cliques, $\cD$ = arbitrary.] As mentioned earlier, this models an arbitrary instance of {\sc Max Coverage}, and therefore \pcrbdsshort inherits all its positive and negative results. In \Cref{sec:lower}, we explicitly spell out these approximation and parameterized lower bounds in order to contrast them with our positive algorithmic results.
    \item[$\cC$ = max degree $\Delta$, $\cD$ = arbitrary.] In this case, we can enumerate $n \cdot \Delta^{\Oh(k)}$ connected vertex-subsets of $\gcon$ of size at most $k$, and check whether any subset dominates at least $t$ blue vertices in $\gcov$. Therefore, in this case \pcrbdsshort is \fpt parameterized by $k+\Delta$. Note, in particular, that this captures the special case when $\gcon$ is a path (in this case, in fact, the problem is polynomial-time solvable). It may be tempting to conjecture that \pcrbdsshort is (fixed-parameter) tractable even when $\cC$ is the class of trees. This, however, turns out not to be the case.
    \item[$\cC$ = stars, $\cD$ = arbitrary.] In this case, we can again model an arbitrary instance of {\sc Max Coverage}, by adding a dummy set that maps to the central vertex of the star, and all the original sets mapping to the leaves of the star. Thus, all the intractable results carry over from {\sc Max Coverage}. 
    \item[$\cC$ = arbitrary, $\cD$ = arbitrary.] In this most general case, we design an \fpt algorithm for \pcrbdsshort parameterized by $t$, which runs in time $\os((2e)^t)$ in \Cref{sec:fptpds}. This shows that, parameterized by the desired coverage, the problem is fixed-parameter tractable \emph{in spite of} the connectivity requirements. 
    \item[$\cC$ = arbitrary, $\cD$ = each vertex in $R$ has degree at most $\Delta_R$.] In this case, note that any $k$-sized set $S \subseteq R$ has $|N_{\gcov}(S)| \le k \Delta$, implying that if $t > k \Delta_R$ then we can conclude that we have a \no-instance. Otherwise, the \fpt algorithm from the previous paragraph implies that \pcrbdsshort is \fpt parameterized by $k + \Delta_R$.
    \item[$\cC$ = arbitrary, $\cD$ = each vertex in $B$ has degree at most $\Delta_B$.] Note that when $\Delta_B = 2$, this models the coverage of edges by vertices in a graph. Therefore, when $\gcon$ is a clique or a star, one can model {\sc Partial Vertex Cover}, which is known to be W[1]-hard parameterized by $k$; and \pcrbdsshort inherits this hardness.
\end{description}



\section{Technical Overview of Our Main Results} \label{sec:overview}
Next, we describe some of the technical ideas that lead to our algorithmic  results \Cref{thm:introEPASfortcov} and \Cref{thm:introPASforkcov}. Due to the inherent nature of the problem, our algorithms and the corresponding arguments need to juggle between the two graphs $\gcon$ and $\gcov$, and their interplay with a hypothetical solution that we seek. These arguments go via a couple of auxiliary graphs that make finding a solution easier, albeit at the expense of a small loss in the quality of the solution (either in terms of the solution size, or in terms of the number of dominated blue vertices).

\paragraph*{Overview of the proof of \Cref{thm:introEPASfortcov}.} Suppose that the input instance $\cI = (\gcon, \gcov, k, t)$ is a \yes-instance of \pcrbdsshort, where $\gcon$ is arbitrary and $\gcov$ is \kddfree. Let $S^\star \subseteq R = V(\gcon) = V(\gcov)$ be an (unknown) solution of size $k$ that is connected in $\gcon$ and dominates at least $t$ blue vertices. 

\begin{description}[leftmargin=*,format=\textcolor{newgray}]
    \item[1.~Defining and working with a conflict graph.] We construct an auxiliary graph, called the \emph{conflict graph}, denoted as $\gconf$, where $V(\gconf) = R$, and there is an edge between two vertices iff the number of common blue neighbors of the two vertices is a ``significant fraction'' of $t$ (we will use $\frac{\varepsilon t}{k^2}$). Using the fact that $\gcov$ is \kddfree, one can show that the maximum degree of $\gconf$ is bounded by some $f(k, d, \varepsilon)$, notably it is independent of $t$ (cf.~\Cref{lem:maxdegkddfreepds}). That is, for each red vertex $u$, the number of other red vertices that have a significant overlap with $u$ in terms of blue domination is bounded by $f(k, d, \varepsilon)$. Notice that some pairs of vertices from $S^\star$ (our unknown solution) can have significant overlap, but maybe not all. Therefore, $\gconf[S^\star]$ may be disconnected with at most $k$ connected components, denoted by $\mathcal{C}^\star$. Since $\Delta(\gconf) \le f(k, d, \varepsilon)$, we can use the technique of \emph{random separation} to obtain a collection $\cC$ of induced connected subgraphs of $\gconf$ such that, $\cC^\star \subseteq \cC$ (\emph{even though} $S^\star$ is unknown to us). The success probability of this step is inverse \fpt in $k, d$, and $\varepsilon$; and this step can be derandomized using known combinatorial tools. Let us assume henceforth that we have such a collection $\cC$ in our hand. The algorithmic task is to figure out which set of connected components from $\cC$ together form the desired solution. 

    \item[2.~Using $\cC$ to sparsify $\gcov$.] Recall that our unknown solution $S^\star$ is partitioned across $\cC$ in such a way that, either all or none of the vertices of a component in $\cC$ are a part of $S^\star$. We process each blue vertex $u$ as follows: whenever $u$ has edges to two or more red neighbors within the same component of $\cC$, we arbitrarily delete all but one of these edges. We repeat this until every blue vertex has at most one neighbor in each component. The resulting graph $\gspar$ is a subgraph of $\gcov$. This way, each blue vertex $u$ has at most one neighbor in $\gspar$ within any component of $\cC$, and that neighbor will be responsible for counting the domination of $u$. For each red vertex $r \in R$, we define its \emph{weight} as the number of its blue neighbors in the sparsified graph $\gspar$. Note that, while a solution may span across multiple components, and hence some domination overlap across different components may still remain, the construction of the conflict graph guarantees that such inter-component overlap is a negligible fraction of $t$.  Recall that the different weight functions mentioned in \Cref{lem:intro-weight-family} correspond to different weight functions obtained  in this manner, corresponding to different colorings from the random separation step. 
    
    \item[3.~Using sparsified weights to find an approximate solution.] We now view $\gcon$ as a weighted graph, where the weight of each vertex is defined from the sparsification step. Then, we find a connected subgraph of $\gcon$ on at most $k$ vertices with maximum total weight. This can be done using an application of the classical color-coding approach~\cite{AlonYZ95}. If there exists a feasible solution $S^\star$ of size $k$, it corresponds to a connected subgraph with total weight at least $t$. Conversely, if we identify a connected subgraph $S$ with total weight at least $t$, then even after discounting the small amount of over-counting across different components, we are guaranteed that $|N_{\gcov}(S)| \geq (1-\varepsilon)t$. This completes the description of \Cref{thm:introEPASfortcov}.
\end{description}

\paragraph*{Overview of the proof of \Cref{thm:introPASforkcov}.}  
As before, suppose that the given instance $\cI$ is a \yes-instance and let $S^\star$ be an unknown solution. In this setting, the goal of $(1+\varepsilon, 1)$-approximation is flipped: we no longer insist that the solution has size exactly $k$ (we allow solutions of size up to $(1+\varepsilon)k$); but instead require that it dominates at least $t$ blue vertices.

\begin{description}[leftmargin=*, format=\textcolor{newgray}]
    \item[1.~A first attempt, and why it fails.]  We begin by running the algorithm from the previous subsection with a suitable choice of $\delta$, obtaining an $(1,1-\delta)$-approximate solution $S$ of size $k$ that covers at least $(1-\delta)t$ blue vertices. If $|N_{\gcov}(S)| \geq t$, we are done. Otherwise, the solution is only slightly short; but remember that we are allowed to add a few more vertices to reach the required coverage.

    Let $H$ be the set of the $g(k,d,\delta)$ highest-degree red vertices. A combinatorial lemma (cf.~\Cref{lem:addsol} originally from \cite{DBLP:conf/soda/0001KPSS0U23}) guarantees that, when $\gcov$ is \kddfree, at least one of the following holds.  
    \begin{enumerate}
        \item[(i)] the unknown solution intersects $H$, i.e., $H \cap S^\star \neq \emptyset$, or
        \item[(ii)] there exists a vertex $h \in H$ such that $S \cup \{h\}$ dominates at least $t$ blue vertices
    \end{enumerate}
    This suggests a natural strategy: if case (i) holds, then we add $h$ to the solution, and recurse. Otherwise, in case (ii), we simply add $h$ to $S$, yielding a solution of size $k+1$, which is much smaller than the allowed $(1+\varepsilon)k$. But there is a catch: we also require that the solution be connected in $\gcon$, and there is no guarantee that $\gcon[S \cup \{h\}]$ is connected. If $h$ happened to be adjacent to, or close to, a vertex of $S$, this would be easy to fix, but this does not hold in general.

    \item[2.~A structural lemma to the rescue.] To overcome this, we first show the following structural lemma (cf.~\Cref{lem:solutioncovering}) that may be of an independent interest: since $\gcon[S^\star]$ is connected, there exists a subset $C \subseteq S^\star$ of size at most $1/\varepsilon$ such that every vertex of $S^\star$ is at distance at most $\varepsilon k$ from $C$. We ``guess'' $C$ by iterating over all subsets of $R$ of size at most $1/\varepsilon$, and in each iteration, we require that the guessed set $C$ must be a part of our solution that we will build. During recursion, $C$ may grow as we discover additional vertices that must belong to $S^\star$; but the original $C$ serves the purpose of ``cheaply connecting'' a vertex to the solution.

    \item[3.~Amended recursive strategy.] With this additional seed $C$, we run the previous algorithm again to find a set $S$ of size $k$ containing $C$ that dominates at least $(1-\delta)t$ blue vertices (this requires generalizing the previous algorithm so that we can provide to it an additional \emph{terminal} set---here $C$---that must be contained in the solution, cf.~\Cref{subsec:steinerext}). At this point, we focus only on high-degree vertices that are near $C$ in $\gcon$, since $S^\star$ must be contained in this neighborhood. We then apply the lemma from Step 1:
    \begin{itemize}
        \item In case (i), we guess $h \in H$ that is guaranteed to belong to $S^\star$. Then we branch on all possibilities for $h$, extend $C$ by including it, and recurse.
        \item In case (ii), for some $h \in H$, the set $S \cup \{h\}$ dominates at least $t$ blue vertices. Because $h$ is close to $C$, we can connect it to $S$ by adding a path $P$ in $\gcon$ of length at most $\varepsilon k$. The resulting set $S \cup P$ is connected, has size at most $(1+\varepsilon)k$, and dominates at least $t$ blue vertices. Iterating over all $h \in H$, we return a solution whenever this case applies.
    \end{itemize}
\end{description}

This concludes the overview of the proof of \Cref{thm:introPASforkcov}. We note that the running time of this $(1+\varepsilon, 1)$-approximation algorithm is of the form $h(k,d) \cdot |\cI|^{\Oh(1/\varepsilon)}$, in contrast to the $(1,1-\varepsilon)$-approximation of \Cref{thm:introEPASfortcov}, which runs in time $f(k,d,\varepsilon) \cdot |\cI|^{\Oh(1)}$. A natural question is whether the dependence on $\varepsilon$ in \Cref{thm:introPASforkcov} can be improved, i.e., whether one can achieve a running time of the form $h(k,d,\varepsilon) \cdot |\cI|^{\Oh(1)}$. In analogy with PTAS versus EPTAS, this corresponds to improving from a PAS to an EPAS~\footnote{EPAS stands for \emph{Efficient Parameterized Approximation Scheme}, i.e., a $(1\pm \varepsilon)$-approximation in time $f(\kappa, \varepsilon) \cdot |I|^{\Oh(1)}$ for some parameter $\kappa$. PAS stands for \emph{Parameterized Approximation Scheme}, where the running time can be $f(\kappa, \varepsilon) \cdot |I|^{g(\varepsilon)}$.}.

However, obtaining such an improvement would be unlikely due to the following simple reason. Indeed, suppose for any $\varepsilon > 0$ we could compute a connected solution of size at most $(1+\varepsilon)k$ that covers at least $t$ blue vertices. Setting $\varepsilon \gets \frac{1}{2k}$ would then yield a solution of size at most $k+\frac{1}{2}$, which must in fact be of size at most $k$. This would give an exact \fpt\ algorithm for \pcrbdsshort parameterized by $k$ and $d$; however the problem is \textsf{W}[1]-hard when parameterized by $k$ even when $d = 2$ (cf.~{\sc Partial Vertex Cover}). Thus, \Cref{thm:introPASforkcov} achieves, in a precise sense, the best possible form of approximation scheme, apart from potential improvements to the $\fpt$ factor of the running time.


\end{sloppypar}

\section{Preliminaries}\label{sec:preli}
Let $[n]$ be the set of integers $\{1,\ldots, n\}$. For a  graph $G$, we denote the set of vertices of $G$ by $V(G)$ and the set of edges by $E(G)$. For a  directed graph $D$, we denote the set of vertices of $D$ by $V(D)$ and the set of edges by $A(D)$. We denote an edge of an undirected graph by $uv$ and an arc of a directed graph by $(u,v)$.  For  a subset $X$ of vertices $X \subseteq V(G)$, we use the notation $G-X$ to mean the graph $G[V(G) \setminus X]$. For a vertex subset $S \subseteq V(G)$, $G[S]$ denotes the subgraph of $G$ induced on $S$, i.e., $V(G[S])= S,~E(G[S])= \{uv: u,v \in S, uv \in E(G)\}$. For a graph $G$ and a vertex $v \in V(G)$, we define \textit{open neighborhood} of a vertex as $N_G(v) \coloneqq \lrc{u \mid uv \in E(G)}$ and \textit{closed neighborhood} of a vertex as $N_G[v] \coloneqq N_G(v)\cup \{v\}$. Further for a set $X \subseteq V(G)$, open neighborhood of $X$ is defined as 
$$N_G(X) \coloneqq \lr{\bigcup\limits_{x \in X}N_G(x)} \setminus X$$

and closed neighborhood of $X$ is defined as

$$N_G[X] \coloneqq N_G(X) \cup X$$

We use standard terminology from the book of Diestel~\cite{diestel2025graph} for
those graph-related terms that are not explicitly defined here. We refer to \cite{cygan2015parameterized} for an introduction to the area of parameterized complexity and related terminology.



\newcommand{\rdsot}{{\sc RDSOT}\xspace}
\section{\fpt for \pcrbdsshort parameterized by $t$}\label{sec:fptpds}
In this section we design an algorithm for 	 \pcrbdsshort, parameterized by $t$. The result is obtained by giving an \fpt reduction to  {\sc Relaxed Directed  Steiner Out-Tree} problem. A \emph{directed out-tree} (or \emph{arborescence}) is a digraph whose underlying undirected graph is a tree and that has a unique root $r$ such that every arc is directed away from $r$. Equivalently, the root $r$ has in-degree $0$ and every other vertex has in-degree exactly $1$. 
The problem {\sc Relaxed Directed  Steiner Out-Tree} is defined as follows:

\begin{tcolorbox}[enhanced,title={\color{black} {{\sc Relaxed Directed  Steiner Out-Tree} (\rdsot)}}, colback=white, boxrule=0.4pt,
    	attach boxed title to top center={xshift=-2.5cm, yshift*=-2.5mm},
    	boxed title style={size=small,frame hidden,colback=white}]
    	\textbf{Input:} A  directed graph $D = (V, A)$, a set of terminals $T \subseteq V$, an   integer $p$.\\
        \textbf{Question:} Is there  a directed out-tree $D' \subseteq D$ with   $|V(D')| \leq p$ and $T \subseteq V(D')$?
\end{tcolorbox}

If in addition, we are given an additional distinguished vertex $r \in V$ and ask to check whether  $D$ contain a directed out-tree on at most $p$ vertices that is
rooted at $r$ and that contains all the vertices of $T$, then that problem is referred as  {\sc Directed  Steiner Out-Tree}. Misra et al.~\cite{DBLP:journals/jco/MisraPRSS12} gave an algorithm for {\sc Directed  Steiner Out-Tree} that runs in time $2^{|T|} \cdot n^{\mathcal{O}(1)}$. Note that we can also solve {\sc Relaxed Directed  Steiner Out-Tree} in time $2^{|T|} \cdot n^{\mathcal{O}(1)}$ by ``guessing'' (i.e., enumerating $n$ choices for) the root vertex $r$, and solving the resulting instance of {\sc Directed Steiner Out-Tree}. Now we use this result to design a randomized algorithm for \pcrbdsshort.

\paragraph*{Randomized reduction from \pcrbdsshort to \rdsot.} 
Let $\cI = \lr{\gcon, \gcov,  k, t}$ be the given instance of \pcrbdsshort. Let $f:B \to [t]$ denote an arbitrary function (referred to as a \emph{coloring}), where each vertex of $B$ is assigned an integer from $[t]$. For each $i$, let $B_i$ denote the vertices with color $i$ according to $f$, i.e., $B_i= \{v \mid v\in B, f(v)=i\}$ (we omit the dependence of $f$ in the notation for the sake of brevity).

Now, we construct a directed graph $D_f$ that will serve as the input for the eventual instance of {\sc Relaxed Directed Steiner Out-Tree}. We define  $$
V(D) := V(\gcov) \;\uplus\; T, 
\qquad\text{ where } T = \{\tau_i : i \in [t]\}.
$$  

The arc set $A(D_f)$ is constructed by adding the following three types of arcs.
\begin{itemize}
\item Type 1: For every edge $uv \in E(\gcon)$, add both arcs $(u,v)$ and $(v,u)$ to $A(D_f)$.  
\item  Type 2: For every edge $uv \in E(\gcov)$ with $u \in R, v \in B$, add the arc $(u,v)$ to $A(D_f)$. 
\item Type 3: For each $i \in [t]$, add the arcs $\{(v, \tau_i) : v \in B_i\}$ 
to $A(D_f)$.
\end{itemize}
Let $p = k+2t$, and let $\cI'_f = (D_f, T, p)$ be the resulting instance of {\sc Relaxed Directed Steiner Out-Tree}. 

Our randomized algorithm works on the given instance $\cI = (\gcon, \gcov, k, t)$ as follows. First, it obtains a \emph{random} $f: B \to [t]$, i.e., for each $u \in B$, it independently and uniformly assigns a color from $[t]$. Then, it constructs the instance $\cI'_f = (D_f, T, p)$ of \rdsot as described above. Then, it runs the $\os(2^{|T|}) = \os(2^t)$ time algorithm of \cite{DBLP:journals/jco/MisraPRSS12}, which returns \yes iff $\cI'_f$ is a \yes-instance of \rdsot. It repeats this procedure (of selecting a random coloring $f$ and solving the resulting $\cI'_f$) independently up to $e^t$ times. If in any of the iterations, the algorithm of \cite{DBLP:journals/jco/MisraPRSS12} outputs \yes, then our algorithm reports that the original instance $\cI$ of \pcrbdsshort is a \yes-instance. Otherwise, if in all iterations the algorithm returns \no, then the algorithm outputs that $\cI$ is a \no-instance of \pcrbdsshort.


\begin{restatable}{theorem}{fptbytpds} \label{thm:fptbytpds}
	   \pcrbdsshort admits a deterministic (resp.~randomized) algorithm that runs in time $\Oh^\star((2e)^{t + o(t)})$ (resp.~$\Oh^\star((2e)^{t})$ and outputs the correct answer with probability at least $1-1/e$).
\end{restatable}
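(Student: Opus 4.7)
The plan is to verify both directions of the claimed reduction to \rdsot\ and then convert the randomized procedure into a deterministic one via a perfect hash family. Throughout, I would preprocess $\gcov$ by discarding blue vertices with no red neighbor (which cannot contribute to any $N_{\gcov}(S)$), and I would handle the cases $t \le 1$ separately by trivial inspection, so below I assume $t \ge 2$ and that every $v \in B$ has a red $\gcov$-neighbor.

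For the forward direction, suppose $S^\star \subseteq R$ is a valid solution to $\cI$ and fix an arbitrary subset $V' \subseteq N_{\gcov}(S^\star)$ with $|V'| = t$; I would show that whenever $f$ is a bijection on $V'$, the instance $\cI'_f$ is a \yes-instance. Construct the out-tree by taking any spanning tree of $\gcon[S^\star]$, rooting it at an arbitrary $r \in S^\star$ and orienting its edges away from $r$ using Type~1 arcs; then for each color $i \in [t]$, let $v_i \in V'$ be the unique vertex of color $i$, pick any red neighbor $u_i \in S^\star$ of $v_i$ in $\gcov$, and append the Type~2 arc $(u_i, v_i)$ and the Type~3 arc $(v_i, \tau_i)$. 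The resulting digraph is an out-tree rooted at $r$ (each $v_i$ and each $\tau_i$ appears exactly once and has a unique parent), contains $T$, and uses at most $|S^\star| + t + t \le k + 2t = p$ vertices.

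For the reverse direction, let $D'$ be an out-tree in $D_f$ on at most $p$ vertices containing $T$. Since $T$-vertices have out-degree $0$ and each $B$-vertex can reach only its single terminal $\tau_{f(v)}$, the root of $D'$ must lie in $R$; and because every arc into an $R$-vertex comes from another $R$-vertex (Type~1), the restriction of $D'$ to $V(D') \cap R$ is itself an out-tree whose underlying edges lie in $\gcon$, so $S \coloneqq V(D') \cap R$ is connected in $\gcon$. Each $\tau_i$ has in-degree exactly $1$ in $D'$ with parent in $B_i$, so $V(D') \cap B$ contains $t$ pairwise-distinct blue vertices that are $\gcov$-neighbors of $S$ via Type~2, giving $|N_{\gcov}(S)| \ge t$ and $|S| \le p - 2t = k$.

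Finally, for a fixed $V'$ with $|V'| = t$, the probability that a uniformly random $f : B \to [t]$ is a bijection on $V'$ is $t!/t^t \ge e^{-t}$ by Stirling, so $\lceil e^t \rceil$ independent trials succeed with probability at least $1 - (1 - e^{-t})^{e^t} \ge 1 - 1/e$; combined with the $\Oh^\star(2^t)$ cost per trial, this yields the randomized bound $\Oh^\star((2e)^t)$. To derandomize, I would replace the random colorings by an $(|B|, t)$-perfect hash family of size $e^{t + o(t)} \cdot \log |B|$, constructible in time $e^{t + o(t)} \cdot |B|^{\Oh(1)}$ via the Naor--Schulman--Srinivasan construction, giving the deterministic bound $\Oh^\star((2e)^{t + o(t)})$. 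The only non-routine point I expect is the reverse-direction structural claim that the layered arc structure of $D_f$ simultaneously forces the root into $R$ and the $R$-projection of $D'$ to be a $\gcon$-tree of size at most $k$; the rest reduces to the standard color-coding analysis together with the cited algorithm for \rdsot.
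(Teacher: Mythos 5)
Your proposal is correct and follows essentially the same route as the paper: the same randomized reduction to \rdsot\ via a coloring $f:B\to[t]$ that is injective on a fixed $t$-subset of $N_{\gcov}(S^\star)$, the same two-directional correctness argument (the paper peels off $T$ and then the $B$-layer from the out-tree, whereas you argue top-down that the root and the $R$-restriction form a $\gcon$-tree, which is the same structural fact), and the same $(|B|,t)$-perfect-hash-family derandomization. Your explicit construction of the out-tree in the forward direction and the justification that the root lies in $R$ are, if anything, slightly more detailed than the paper's.
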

\begin{proof}
Let $\cI = (\gcon, \gcov, k, t)$ be an instance of \pcrbdsshort. Suppose $\cI$ is a \yes-instance, and let $S \subseteq R$ be a solution of size $k$ such that $\gcon[S]$ is connected and $|N_{\gcov}(S)|\ge t$. Let $Q \subseteq N_{\gcov}(S)$ be an arbitrary subset of size exactly $t$. We say that a coloring $f: B \to [t]$ is a \emph{good coloring} iff $f$ assigns distinct colors to distinct vertices of $Z$. It is easy to see that if $\cI$ is a \yes-instance, then with probability $e^{-t}$, a random coloring $f$ is a good coloring. We prove the following two statements.

\begin{enumerate}
    \item If $\cI$ is a \yes-instance, and $f$ is a good coloring, then the resulting instance $\cI'_f$ is a \yes-instance of \rdsot.
    \item If $\cI$ is a \no-instance of \pcrbdsshort, then for any coloring $f$, $\cI'_f$ is a \no-instance of \rdsot.
\end{enumerate}


    \paragraph*{Proof of statement 1.} Consider the vertex set $ S \cup Q \cup T$. Since $\gcon[S]$ is connected, the Type 1 arcs ensure that $D_f[S]$ is strongly connected. Each $\tau_i \in T$ has an incoming arc from some vertex of $Q \cap B_i$, guaranteed by the good coloring. Thus one can orient a spanning out-tree rooted in $S$ that reaches all of $Q$ and subsequently all of $T$. Hence $D$ contains a directed out-tree on at most $|S| + |Q| + |T| = k+2t$ vertices covering all terminals $T$. This implies that $\cI'_f$ is a \yes-instance of \rdsot.

    \paragraph*{Proof of statement 2.} We prove the contrapositive instead. That is, we show that, for the given instance $\cI$ of \pcrbdsshort, if for some coloring $f$, the resulting instance $\cI'_f$ of \rdsot is a \yes-instance, then $\cI$ is a \yes-instance of \pcrbdsshort. Suppose $D_f$ contains a directed out-tree $T^\star$ on at most $k+2t$ vertices containing all terminals $T$. Note that the terminal vertices do not have any outgoing edges, and therefore must have out-degree $0$ and in-degree $1$ in $T^\star$. For each $\tau_i \in T$, let $b_i \in B$ denote the unique in-neighbor of $\tau_i$ in $T^\star$.
    Let $T_1$ be the out-tree obtained by deleting the terminals $T$ from $T^\star$. Note that the vertices in $B$ had out-degree exactly $1$ in $T^\star$, with each vertex having a unique neighbor in $T$. Therefore, the vertices of $B$ now have out-degree $0$ and in-degree exactly in $T_1$. Furthermore $|V(T_1)| \le k+t$. Now, let $T_2$ be the out-tree obtained by deleting all vertices of $V(T_1) \cap B$ from $T_1$. Since $\LR{b_i: i \in [t]} \subseteq V(T_1) \cap B$, it follows that $|V(T_1)| \le k$.  We claim that $S \coloneqq V(T_2)$ is a solution to $\cI$. As argued earlier, $|S| \le k$. Further, for any $i \in [t]$, the vertex $b_i$ had some in-neighbor $v \in S$ in $T_1$, which comes from type 2 edges directed from $v$ to $b_i$. Therefore, $vb_i \in E(\gcov)$. It follows that $|N_{\gcov}(S)| \ge |\LR{b_i: i \in [t]}| = t$. Finally, we claim that $\gcon[S]$ is connected. Indeed, since we deleted vertices of out-degree $0$ and in-degree exactly $1$, it follows that $T_2$ is still a directed out-tree which solely consists of type 1 edges, each of which corresponds to an edge of $\gcon$. This shows that $\cI$ is a \yes-instance.


    Given the proof of the two statements, we first prove the statement about the randomized algorithm. From statement 1, it follows that if $\cI$ is a \yes-instance, then with probability at least $e^{-t}$, the resulting instance $\cI_f$ in a specific iteration is a \yes-instance. By repeating the algorithm independently $e^t$ times, the probability that at least one of the instances is a \yes-instance is at least $1-1/e$. Then, by the correctness of \rdsot algorithm, our algorithm outputs \yes with probability at least $1-1/e$. On the other hand, if $\cI$ is a \no-instance, then statement 2 implies that all the instances of \rdsot constructed in all iterations will be \no-instances, and hence our algorithm will correctly output \no. It is easy to verify that that the running time of the algorithm is $\os((2e)^t)$.

    \paragraph*{Derandomization.} For designing a deterministic variant, we use the tool called \emph{$(n, t)$-perfect hash family}~\cite{DBLP:books/sp/CyganFKLMPPS15}. Such a family $\cF$ is a family of functions $f: B \to [t]$ such that for each $P \subseteq B$, there exists some $f$ such that $f$ is injective on $P$. Such a family has size at most $e^{t} \cdot t^{\Oh(\log t)} \cdot \log |B|$, and can be constructed in time $e^{t} \cdot t^{\Oh(\log t)} \cdot |B| \log |B| = e^{t+o(t)} \cdot |B|^{\Oh(1)}$. The deterministic algorithm iterates over each $f \in \mathcal{F}$, constructs the instance $\cI'_{f}$ of \rdsot, and runs the algorithm of \cite{DBLP:journals/jco/MisraPRSS12}. It outputs \yes if and only if in any of the iterations, the algorithm for \rdsot outputs \yes. It follows that the deterministic variant has the running time $\os((2e)^{t+o(t)})$ and solves \pcrbdsshort. 
\end{proof}

\section{$(1, 1-\varepsilon)$-approximation for \pcrbdsshort} \label{sec:fptasfort}

In this section we design an \fpt\ algorithm (parameterized by $k+d+1/\varepsilon$) that achieves a $(1,1-\varepsilon)$-bicriteria approximation for \pcrbdsshort when the coverage graph $\gcov$ is \kddfree. 
If $t<\tfrac{k^{2}d}{\varepsilon}$, we invoke the exact algorithm of \Cref{thm:fptbytpds}, which (when a solution exists) returns a set of size at most $k$ dominating at least $t$ vertices. 
When $t\ge \tfrac{k^{2}d}{\varepsilon}$, assuming there exists a solution of size at most $k$ dominating at least $t$ vertices, we compute a set $S\subseteq R$ with $|S|\le k$, $\gcon[S]$ connected, and $S$ dominating at least $(1-\varepsilon)t$ vertices of $B$ in $\gcov$. 
Combining these two cases yields \Cref{thm:kddfptapproxpds}. 
Since the small-$t$ case is handled by \Cref{thm:fptbytpds}, the remainder of this section focuses on the regime $t\ge \tfrac{k^{2}d}{\varepsilon}$. 
We next present the algorithmic components and then assemble them into the full procedure, referring back to the corresponding subsections as needed.


Suppose $\cI=(\gcon,\gcov,k,t)$ is a \yes-instance of \pcrbdsshort; i.e., there exists
$S^\star\subseteq R$ with $|S^\star|\le k$ such that
\begin{enumerate}[label=(P\arabic*),ref=(P\arabic*),itemsep=0pt,leftmargin=*]
  \item\label{prop:conn} $\gcon[S^\star]$ is connected, and
  \item\label{prop:dom} $|N_{\gcov}(S^\star)|\ge t$.
\end{enumerate}
We will refer to these properties as \propone\ and \proptwo\ throughout this section.

\subsection{Step 1\label{stepone}: Construction of the Conflict Graph.}\label{subsec:conflictgraph}

In this subsection, we define the notion of a \emph{conflict graph} as follows.

\begin{definition}[conflict graph] \label{def:conflictgraphpds}
	Let $\gcon$ and $\gcov$ denote the connectivity and coverage graphs, respectively, as given in the input. We define \emph{conflict graph}, denoted by $\gconf$ as follows: the vertex set $V\left(\gconf \right) \coloneqq V \bigl( \gcon  \bigr) = R$, and $E\big(\gconf \big) \coloneqq \bigl\{uv : u, v
    \in R ~\text{\em and}~ \big|N_{\gcov}(u) \cap N_{\gcov}(v)\big| \geq \frac{\varepsilon t}{k^2}\bigr\}$. 
\end{definition}
	
	Note that the conflict graph can be constructed in time quadratic in the size of $\gcov$. In the following lemma, we bound the maximum degree of the conflict graph.

	\begin{lemma}[Restatement of Lemma 4.1 from \cite{DBLP:conf/soda/0001KPSS0U23}] \label{lem:maxdegkddfreepds}
Suppose the coverage  graph $\gcov	$ is \kddfree, and $t \ge \frac{k^2d}{\varepsilon}$. Then, the maximum degree $\Delta \coloneqq \Delta(\gconf)$ of the conflict graph,  $\gconf$ as in \Cref{def:conflictgraphpds} is at most $(d-1) \cdot \left(\tfrac{ek^2}{\varepsilon}\right)^{d}$, where $e$ denotes the base of natural logarithm.
\end{lemma}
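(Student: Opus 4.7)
The plan is to reduce the degree bound to a clean double-counting argument inside a $K_{d,d}$-free bipartite graph. First, I would invoke the mild preprocessing that if any red vertex $u \in R$ satisfies $|N_{\gcov}(u)| \ge t$, then $\{u\}$ is already a trivial feasible solution (connected as a singleton, size $1 \le k$, dominating $\ge t$ blue vertices), so the overall algorithm can return it and stop. Hence, for the remainder of the argument I may assume $|N_{\gcov}(u)| < t$ for every $u \in R$.

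Having set this up, I would fix an arbitrary $u \in R$, enumerate its neighbors $v_1, \ldots, v_N$ in $\gconf$, and define $A_i \coloneqq N_{\gcov}(u) \cap N_{\gcov}(v_i) \subseteq N_{\gcov}(u)$ for each $i \in [N]$. The edge condition in $\gconf$ forces $|A_i| \ge s \coloneqq \varepsilon t/k^2$, and the hypothesis $t \ge k^2 d/\varepsilon$ ensures $s \ge d$, so the binomials $\binom{s}{d}$ below are meaningful. The goal is to show $N \le (d-1)(ek^2/\varepsilon)^d$.

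The key structural step is the following consequence of $K_{d,d}$-freeness: for every subset $T \subseteq N_{\gcov}(u)$ of size exactly $d$, at most $d-1$ of the sets $A_i$ contain $T$. Otherwise, $d$ indices $i_1, \ldots, i_d$ with $T \subseteq A_{i_j}$ for all $j$ would exhibit $d$ red vertices $v_{i_1}, \ldots, v_{i_d}$ and $d$ blue vertices (those of $T$) forming a $K_{d,d}$ subgraph of $\gcov$, contradicting the $K_{d,d}$-freeness hypothesis.

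I would then finish by double-counting the pairs $(i, T)$ with $T$ a $d$-subset of $A_i$. On one side, this count is at least $N \binom{s}{d}$; on the other, the structural step bounds it by $(d-1)\binom{|N_{\gcov}(u)|}{d} < (d-1)\binom{t}{d}$ thanks to preprocessing. Plugging in the elementary inequalities $(n/d)^d \le \binom{n}{d} \le (en/d)^d$ yields $N \le (d-1)(et/s)^d = (d-1)(ek^2/\varepsilon)^d$. The only subtlety, which I would address at the very beginning, is the preprocessing that caps $|N_{\gcov}(u)|$ by $t$: without it the term $\binom{|N_{\gcov}(u)|}{d}$ is uncontrolled and the double count collapses, so although this reduction is conceptually trivial it is essential to the quantitative bound.
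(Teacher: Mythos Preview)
Your proof is correct and follows essentially the same approach as the paper: both fix a vertex, use the singleton-solution preprocessing to cap its blue neighborhood by $t$, and then double-count pairs (conflict-neighbor, $d$-subset of the shared blue neighborhood), invoking $K_{d,d}$-freeness to bound each $d$-subset's multiplicity by $d-1$ before applying the standard binomial estimates. The paper phrases the same count as ``counting copies of $K_{1,d}$,'' but the argument is identical.
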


\begin{proof}
The proof is essentially identical to that of Lemma 4.1 of \cite{DBLP:conf/soda/0001KPSS0U23}, but we describe it here for the sake of completeness, using the terminology used in this paper. Recall that, $\gcon, \gcov, \gconf$ denote the connectivity, coverage and conflict graphs respectively.  We show that the degree of every vertex $v \in V\lr{\gconf}$, that is $\deg_{\gconf}(v)$ is bounded. To this end, fix an arbitrary vertex $v \in V\lr{\gconf}$. WLOG, we only focus on the vertices with $\deg_{\gconf}(v) \geq 1$; 
vertices with degree $0$ trivially have bounded degree in $\gconf$.
Let us consider the sets $P \coloneqq N_{\gconf}(v) \subseteq R$ and $Q \coloneqq N_{\gcov}(v) \subseteq B$. Now consider the induced subgraph $\gcov[P \uplus Q]$ of the original coverage graph $\gcov$. Let this induced subgraph be $\gcov^v$.

Let $\mathcal{K}$ denote the set of all $K_{1, d}$'s in the graph $\gcov^v$, where one vertex (the center vertex of the star) belongs to $N_{\gconf}(v) \subseteq R$ and its $d$ neighbors belong to $ N_{\gcov}(v) \subseteq B$. Since every $u \in P$ has at least $\frac{\varepsilon t}{k^2} \geq d$ neighbors in $Q$, i.e., 
$\big|N_{\gcov^v}(u)\big| \geq \frac{\varepsilon t}{k^2} \geq d$, it follows that each $u \in P$ 
participates in at least $\displaystyle\binom{\frac{\varepsilon t}{k^2}}{d}$ distinct copies of $K_{1,d}$. Therefore, $|\mathcal{K}| \ge |P| \cdot \dbinom{\frac{\varepsilon t}{k^2}}{d}$. 

Further, we claim that $|\mathcal{K}| \le (d-1) \dbinom{|Q|}{d}$. Assume towards contradiction that $|\mathcal{K}| > (d-1) \dbinom{|Q|}{d}$. For each set $Z\subseteq Q$ of size exactly $d$, let $\kappa(Z)$ denote the number of $K_{1,d}$'s in $\mathcal{K}$ that all the vertices of $Z$ together participate in. Then, it follows that $(d-1) \cdot \dbinom{|Q|}{d} < |\mathcal{K}| = \sum\limits_{Z \subseteq Q: |Z| = d} \kappa(Z) \le \dbinom{|Q|}{d}\kappa(Z_{\max})$, where $Z_{\max} \coloneqq \argmax\limits_{Z \subseteq Q: |Z| = d} \kappa(Z)$. This implies that $\kappa(Z_{\max}) > d-1$, i.e., $\kappa(Z_{\max}) \ge d$. However, this implies the existence of a \kdd~in $\gcov^v$, which is a subgraph of $\gcov$, a contradiction. Therefore, we obtain that,
\begin{align*}
	|P| \cdot \dbinom{\frac{\varepsilon t}{k^2}}{d} \le |\mathcal{K}| &\le (d-1) \cdot \binom{|Q|}{d} < (d-1)\cdot\binom{t}{d} \tag{Since $|Q| < t$, else singleton solution}
	\\
    \implies |P| &\le (d-1) \cdot \frac{\binom{t}{d}}{\dbinom{\frac{\varepsilon t}{k^2}}{d}}
	\\&\le (d-1) \cdot \frac{\lr{\frac{et}{d}}^d}{\left(\frac{\varepsilon t}{k^2d}\right)^d} \tag{Since $\lr{\frac{n}{k}}^k \le \binom{n}{k} \le \lr{\frac{en}{k}}^k$}
	\\&\le (d-1) \cdot \left(\tfrac{ek^2}{\varepsilon}\right)^{d}
\end{align*}
This concludes the proof. 
\end{proof}

\subsection{Step 2\label{steptwo}: Isolating Connected Components of $S^\star$ in $\gconf$} \label{step:isolatecc}
In this section we apply the classical \emph{random separation} technique~\cite{DBLP:conf/iwpec/CaiCC06}
to probabilistically isolate the solution vertices from their $\gconf$-neighbors in any \yes-instance.

Let $f: V(\gconf) \to \{\text{\textcolor{purple}{purple}}, \text{\textcolor{PineGreen}{green}}\}$ be a random coloring in which
each vertex of $\gconf$ is colored independently 
\[
\text{\textcolor{purple}{purple}} \text{~with probability~} p \coloneqq \frac{1}{1+\Delta}\quad\text{and}\quad \text{\textcolor{PineGreen}{green}}\ \text{with probability}\ 1-p=\frac{\Delta}{1+\Delta},
\]
where $\Delta \coloneqq \Delta(\gconf)$. For a coloring $f$, write
$\Vred \coloneqq f^{-1}(\text{\textcolor{purple}{purple}})$ and
$\Vblue \coloneqq f^{-1}(\text{\textcolor{PineGreen}{green}})$. 

Our objective for the coloring is the following “good separation” when the instance is a \yes-instance with witness $S^\star$:
\begin{enumerate}[label=(\roman*),itemsep=0pt,leftmargin=*]
  \item $S^\star \subseteq \Vred$, and
  \item $N_{\gconf}(S^\star) \subseteq \Vblue$.
\end{enumerate}
Since $|S^\star|\le k$ and $|N_{\gconf}(S^\star)| \le \Delta\,k$, we have
\[
  \Pr[\text{good separation}] \;\ge\; p^{|S^\star|}\,(1-p)^{|N_{\gconf}(S^\star)|}
  \;\ge\; \Bigl(\tfrac{1}{1+\Delta}\cdot\bigl(\tfrac{\Delta}{1+\Delta}\bigr)^{\Delta}\Bigr)^{k} \geq \Delta^{-k} \cdot 2^{-\Oh(k)}.
\]



\noindent\textbf{Derandomization.}
The random-coloring step can be derandomized using standard tools such as $(n,p,p^2)$-splitters (see, e.g., Exercises~5.15 and~5.21 in~\cite{DBLP:books/sp/CyganFKLMPPS15}) without affecting the final running time\footnote{Note that $\Delta=\Omega(k^2 d/\varepsilon)$ implies $\log \Delta=\Omega(\log k)$.}. 
Rather than presenting a randomized algorithm, we directly give a deterministic one via an $(n,p,q)$ \emph{lopsided-universal} family of functions.

Let $U$ be the ground set of vertices (so $U=V(\gconf)$ and $n\coloneqq |U|$). 
An $(n,p,q)$ \emph{lopsided-universal} family is a collection $\mathcal{H}=\{f_1,\ldots,f_\ell\}$ of functions $f_i:U\to\{\text{\textcolor{purple}{purple}},\text{\textcolor{PineGreen}{green}}\}$ such that for every $A\in \binom{U}{p}$ and $B\in \binom{U\setminus A}{q}$ there exists $f\in \mathcal{H}$ with
$A \subseteq f^{-1}(\text{\textcolor{purple}{purple}})$ and $B \subseteq f^{-1}(\text{\textcolor{PineGreen}{green}})$.

\begin{proposition}[\cite{AlonYZ95,DBLP:books/sp/CyganFKLMPPS15}]
\label{prop:lopsidedUniversal}
There is an algorithm that, given $n$, $p$, and $q$, constructs an $(n,p,q)$ lopsided-universal family $\mathcal{H}$ of size 
\[
\binom{(p+q)^2}{p}\cdot (p+q)^{\cO(1)}\cdot \log n
\]
in time
\[
\binom{(p+q)^2}{p}\cdot (p+q)^{\cO(1)}\cdot n\log n.
\]
In particular, for $p=k$ and $q=k(d-1)\cdot\!\left(\tfrac{e k^2}{\varepsilon}\right)^{d}$, the size of $\mathcal{H}$ is at most
\[
\left(\tfrac{1}{\varepsilon}\right)^{\cO(dk\log k)}\log n.
\]
\end{proposition}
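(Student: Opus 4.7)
The plan is to build $\mathcal{H}$ by a two–stage composition: first compress the ground set $U$ to a small coordinate range via a splitter, then enumerate all two-colorings of that small range that induce $(p,q)$-separations. This is the standard route that the paper signals by pointing to Exercises~5.15 and~5.21 of \cite{DBLP:books/sp/CyganFKLMPPS15}.

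In Stage~1 I invoke the Naor--Schulman--Srinivasan splitter construction: there is an explicit family $\mathcal{S}$ of functions $f:U\to[(p+q)^{2}]$ of size $(p+q)^{\cO(1)}\log n$, constructible in time $(p+q)^{\cO(1)}\,n\log n$, such that for every $W\subseteq U$ with $|W|=p+q$ there exists some $f\in\mathcal{S}$ injective on $W$. (This is the statement of Exercise~5.15; the point is that injectivity on a fixed $(p+q)$-set needs only a range of size $(p+q)^{2}$, which keeps the family size polynomial in $p+q$.) In Stage~2, for every $f\in\mathcal{S}$ and every $T\in\binom{[(p+q)^{2}]}{p}$, I define $g_{f,T}:U\to\{\text{\textcolor{purple}{purple}},\text{\textcolor{PineGreen}{green}}\}$ by setting $g_{f,T}(u)=\text{\textcolor{purple}{purple}}$ if $f(u)\in T$ and $g_{f,T}(u)=\text{\textcolor{PineGreen}{green}}$ otherwise. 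The output family is $\mathcal{H}:=\{g_{f,T} : f\in\mathcal{S},\ T\in\binom{[(p+q)^{2}]}{p}\}$.

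Correctness is a one-liner. Fix any disjoint $A,B\subseteq U$ with $|A|=p$, $|B|=q$. Applying the splitter guarantee to $W=A\cup B$ yields $f\in\mathcal{S}$ injective on $W$, so $f(A)$ and $f(B)$ are disjoint with $|f(A)|=p$. Taking $T:=f(A)$, the function $g_{f,T}$ satisfies $A\subseteq g_{f,T}^{-1}(\text{\textcolor{purple}{purple}})$ and $B\subseteq g_{f,T}^{-1}(\text{\textcolor{PineGreen}{green}})$, as required. The size bound is immediate: $|\mathcal{H}|=|\mathcal{S}|\cdot\binom{(p+q)^{2}}{p}=\binom{(p+q)^{2}}{p}\cdot(p+q)^{\cO(1)}\log n$, and the running time is just that of enumerating the two indices, giving $\binom{(p+q)^{2}}{p}\cdot(p+q)^{\cO(1)}\,n\log n$.

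For the instantiation with $p=k$ and $q=k(d-1)\bigl(\tfrac{ek^{2}}{\varepsilon}\bigr)^{d}$, I estimate $\log(p+q)=\cO\bigl(d\log(k/\varepsilon)\bigr)$, so $\log\binom{(p+q)^{2}}{p}\le 2p\log(p+q)=\cO\bigl(dk\log(k/\varepsilon)\bigr)$, which is absorbed in $\bigl(\tfrac{1}{\varepsilon}\bigr)^{\cO(dk\log k)}\log n$. The only genuinely non-trivial input is the splitter itself; once the injectivity guarantee is invoked, the lopsided-universal property falls out by brute-forcing the $p$-subsets of $[(p+q)^{2}]$, so there is no real obstacle beyond citing the NSS construction correctly.
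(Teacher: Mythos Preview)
Your proposal is correct and is exactly the standard construction the citations point to: compose an $(n,p+q,(p+q)^2)$-splitter \`a la Naor--Schulman--Srinivasan with an exhaustive enumeration of $p$-subsets of the range, precisely as Exercises~5.15 and~5.21 of \cite{DBLP:books/sp/CyganFKLMPPS15} suggest. The paper itself does not supply a proof of this proposition---it is stated as a cited result---so there is no alternative argument to compare against; your write-up is the intended one. One minor point on the instantiation: the absorption into $(1/\varepsilon)^{\cO(dk\log k)}$ tacitly uses $k\ge 2$ and $\varepsilon\le 1/2$ (so that $\log k,\log(1/\varepsilon)\ge 1$), which is harmless but worth noting if you want to be fully precise.
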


For our purposes we set
\[
p \coloneqq k
\quad\text{and}\quad
q \coloneqq k(d-1)\!\left(\tfrac{e k^{2}}{\varepsilon}\right)^{d}.
\]
Equivalently, let \(\Delta_{\star} \coloneqq (d-1)\!\left(\tfrac{e k^{2}}{\varepsilon}\right)^{d}\); then \(q = k\,\Delta_{\star}\).
By \Cref{prop:lopsidedUniversal}, this yields an \((n,p,q)\) lopsided-universal family \(\mathcal{H}\) of size
\[
\left(\tfrac{1}{\varepsilon}\right)^{\cO(dk\log k)} \log n.
\]
If in addition \(\Delta(\gconf) \le \Delta_{\star}\), then \(k\,\Delta(\gconf) \le q\), so the same family suffices for enforcing
\(S^\star \subseteq f^{-1}(\text{\textcolor{purple}{purple}})\) and \(N_{\gconf}(S^\star) \subseteq f^{-1}(\text{\textcolor{PineGreen}{green}})\).



\paragraph*{Filtering of Connected Components.} For every coloring $f\in \mathcal{H}$ we proceed as follows. Delete all \textcolor{PineGreen}{green}
vertices, i.e., set $\Vblue \coloneqq f^{-1}(\text{\textcolor{PineGreen}{green}})$ and form
$\gconf^{f} \coloneqq \gconf - \Vblue$. The remaining graph $\gconf^{f}$ is a disjoint union of
connected components, each an induced subgraph of $\gconf$ on the \text{\textcolor{purple}{purple}}
vertices $\Vred \coloneqq f^{-1}(\text{\textcolor{purple}{purple}})$. Finally, discard every
\text{\textcolor{purple}{purple}} component $C$ with $|C|>k$; let $\mathcal{C}_{\le k}(f)$ denote the
family of surviving components (those with $|C|\le k$).

\smallskip
\noindent
\noindent\textbf{Motivation for the filtering step.}
Assume the instance is a \yes-instance and let $S^\star\subseteq R$, $|S^\star|\le k$, satisfy \propone\ and \proptwo. 
While $\gcon[S^\star]$ is connected, the conflict graph $\gconf$ need not be a subgraph of $\gcon$: there may exist $u,v\in R$ that are nonadjacent in $\gcon$ but adjacent in $\gconf$ due to a large overlap of their neighborhoods in $\gcov$ (so $uv\in E(\gconf)$ but $uv\notin E(\gcon)$). 
Conversely, $\gconf[S^\star]$ need not be connected even though $\gcon[S^\star]$ is: some pairs $u,v\in S^\star$ that are adjacent in $\gcon$ can become nonadjacent in $\gconf$ by the very definition of the conflict graph (so $uv\in E(\gcon)$ but $uv\notin E(\gconf)$).

Let $C_1^\star,\dots,C_\ell^\star$ be the connected components of $\gconf[S^\star]$, where
$1\le \ell\le |S^\star|\le k$. Clearly $S^\star=\bigcup_{i=1}^\ell V(C_i^\star)$. 
(When convenient, we do not distinguish between a component $C$ and its vertex set $V(C)$.)

By the choice of the lopsided-universal family $\mathcal{H}$, there exists $f\in\mathcal{H}$ with $S^\star\subseteq \Vred$ and $N_{\gconf}(S^\star)\subseteq \Vblue$. 
For this \emph{good} $f$, deleting $\Vblue$ removes no vertex of $S^\star$, and the purple subgraph
$\gconf^{f}=\gconf-\Vblue$ still contains $\gconf[S^\star]$ unchanged; in particular,
each $C_i^\star$ remains a connected purple component. 
Since $|V(C_i^\star)|\le |S^\star|\le k$, all these components survive the size filter. 
Thus, the filtering step preserves every piece of $S^\star$ we care about while discarding large purple components that cannot be part of any size-$k$ solution.




\medskip
 \noindent 
 \underline{\textbf{Constructing $\gconf^{\mathrm{purple}}:$\label{graphwithredvertices}}} 
Recall that $\mathcal{C}_{\le k}(f)$ denotes the family of surviving (purple) components, i.e.,
the connected components of $\gconf - \Vblue$ whose sizes are at most $k$.
Fix a coloring $f\in\mathcal{H}$ and enumerate
\[
\cC \;\coloneqq\; \mathcal{C}_{\le k}(f) \;=\; \{C_1,C_2,\dots,C_s\},
\quad\text{so that}\quad 1\le |C_i|\le k\ \text{for all }i\in[s].
\]
We refer to the filtered purple subgraph as
\[
\gconf^{\mathrm{purple}} \;\coloneqq\; \gconf[\Vred]\;-\!\!\!\!\bigcup_{\substack{C\subseteq \Vred\\ C\text{ is a purple component}\\ |C|>k}} \!\!\!\! C,
\]
so each $C_i\in\cC$ is a connected component of $\gconf^{\mathrm{purple}}$ with $1\le |C_i|\le k$.
(When clear from context, we drop the explicit dependence on $f$.) 
The next observation follows from the construction of $\gconf^{\mathrm{purple}}$.




\begin{observation} \label{obs:subsetvertices}
    $V\lr{\gconf^{\mathrm{purple}}} \subseteq V\lr{\gconf} = V\lr{\gcon} =R$.
\end{observation}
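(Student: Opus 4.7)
The plan is to prove this observation by a direct definitional chase, tracing the vertex set through each construction step. Since the statement chains an inclusion with two equalities, I would verify each link in turn and then compose them.

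First, I would unfold the construction of $\gconf^{\mathrm{purple}}$ recorded just before the observation: starting from the induced subgraph $\gconf[\Vred]$ on the purple vertices, we delete every purple connected component of size strictly greater than $k$. Both operations are vertex-deletion operations, so the resulting vertex set satisfies
\[
V(\gconf^{\mathrm{purple}}) \;\subseteq\; V(\gconf[\Vred]) \;=\; \Vred \;\subseteq\; V(\gconf).
\]
This establishes the first inclusion.

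Next, I would invoke \Cref{def:conflictgraphpds}, which sets $V(\gconf) \coloneqq V(\gcon)$, giving the first equality. The second equality is simply the input convention $\gcon = (R, E)$ fixed in the statement of \pcrbdsshort, which gives $V(\gcon) = R$. Composing the inclusion with these two equalities yields the displayed chain.

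There is no real obstacle here: the claim is entirely a bookkeeping consequence of how the various auxiliary graphs are introduced. I would simply write the three-step chain above and be done. The reason this observation is worth isolating, rather than folded into later arguments, is that it licenses freely identifying the vertices of each purple component $C_i \in \cC$ with red vertices of $\gcov$; this identification is used heavily in Step~3 when we sparsify $\gcov$ using $\cC$ and assign surrogate weights to the red vertices of $\gcon$.
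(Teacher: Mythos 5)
Your proposal is correct and matches the paper, which gives no explicit proof and simply notes that the observation follows from the construction of $\gconf^{\mathrm{purple}}$; your three-step definitional chase (vertex deletion gives the inclusion, \Cref{def:conflictgraphpds} gives $V(\gconf)=V(\gcon)$, and the input convention gives $V(\gcon)=R$) is exactly the intended justification.
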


\subsection{Step 3\label{stepthree}: Construction of Sparsified Graph} \label{subsubsec:spargraph}

The goal here is to compute a bipartite graph which is called as \textit{sparsified graph} (will be defined shortly) which will eventually help us find an approximate solution. To this end, we first consider the induced bipartite subgraph of $\gcov$ defined on the vertices $V\lr{\gconf^{\mathrm{purple}}}$ and $B$, namely $\gcov\bigl[V\left(\gconf^{\mathrm{purple}}\right)\uplus B\bigr]$. Based on $\gcov\bigl[V\left(\gconf^{\mathrm{purple}}\right) \uplus B\bigr]$ and $\gconf^{\mathrm{purple}}$, we construct another bipartite graph, called \textit{sparsified graph}, denoted by $\gspar$ as follows:

\medskip
\noindent
 \begin{tcolorbox}[colback=green!5!white,colframe=blue!70!white, title=\textbf{\large{Construction of $\gspar$}}]
\begin{itemize}
    \item $V\lr{\gspar} \coloneqq V\lr{\gconf^{\mathrm{purple}}} \uplus B$ (two parts of the bipartite graph).
     
    \item Let $\cC = \lrc{C_1, \dots C_s}$ be the set of components in $\gconf^{\mathrm{purple}}$. Now we have the following Sparsification process which helps to construct the edges in the sparsified graph.
    
    \textbf{{\color{blue}Sparsification Process:}\label{process}} For every $b \in B$ and every $C_i \in \cC$, let $E_b$ denote the set of edges of $\gcov$ with one endpoint at $b$ and the other endpoint at a neighbor of $b$ in the component $C_i$ in  graph $\gcov$.
    More formally, $E_b \coloneqq \lrc{xb \in E(\gcov) \mid x \in V(C_i) \cap N_{\gcov}(b)}$. 
    
    \hspace{2em}Now we construct $E(\gspar)$ as follows: for every $b \in B$, if $E_b \neq \emptyset$, then add an arbitrary edge from $E_b$ to $E(\gspar)$.
\end{itemize}
\end{tcolorbox}

We now summarize few key properties of $\gspar$.

\begin{tcolorbox}[colback=yellow!5!white,colframe=Brown!75!white,title=\textbf{\large{Properties of $\gspar$}}]

Let $\cC = \big\{C_1, \dots C_s\big\}$ be the set of connected components in $\gconf^{\mathrm{purple}}$ and $\gspar$ be the graph defined above. 

 \begin{property}\label{prop:singlecomp}
     For every connected component $C_i \in \cC$ of $\gconf^{\mathrm{purple}}$ and every subset of vertices $X \subseteq V(C_i)$, it holds that, 
    \begin{align}
    \text{(i)}\quad & \big| N_{\gspar}(X)\big| = \sum_{x \in X}\big|N_{\gspar}(x)\big| = \sum_{x \in X}\deg_{\gspar}(x) 
    \label{eqn:exact} \\
    \text{(ii)}\quad & N_{\gspar}(X) \subseteq N_{\gcov}(X) 
    && \text{$\lr{\text{if}~~ X \subset V(C_i)}$} 
    \label{eqn:propersubsetneighborhood} \\
    \text{(iii)}\quad & N_{\gspar}(V(C_i)) = N_{\gcov}(V(C_i)) 
    && \text{$\lr{\text{if}~~ X = V(C_i)}$} 
    \label{eqn:equalneighborhood}
\end{align}

 \end{property}
\tcblower
  \begin{property}\label{prop:doublecomp}
     Consider any pair of vertices $x \in V(C_i)$ and $y \in V(C_j)$ where $i, j \in [s]$. It holds that, 
     \begin{align}
         &\text{(i)}~~\big| N_{\gspar}(x) \cap N_{\gspar}(y)\big| < \frac{\varepsilon t}{k^2} \quad \text{ when $i \neq j$ and}\label{eqn:smallintersection}\\
         &\text{(ii)}~~N_{\gspar}(x) \cap N_{\gspar}(y) = \emptyset \quad \text{ when $i = j$}\label{eqn:emptyintersection}
     \end{align}
 \end{property}
\end{tcolorbox}


\begin{proof}[\textbf{Proof for  \Cref{prop:singlecomp}}]
(i) \Cref{eqn:exact} follows directly from the construction of $\gspar$, 
obtained as a result of \hyperref[process]{Sparsification Process}. 
More specifically, it follows that for every vertex $b \in B$ and for every component $C_i \in \cC$, 
the vertex $b$ is adjacent to at most one vertex of $C_i$ in $\gspar$. 
Thus, the vertex $b$ contributes either $0$ or $1$ to each of the three terms in \Cref{eqn:exact}, and in each case, its contribution is same. This establishes the desired property.

(ii) It is clear that $N_{\gspar}(X) \subseteq N_{\gcov}(X)$, because every vertex $v \in N_{\gspar}(X)$ must also belong to $N_{\gcov}(X)$, given that $\gspar$ is a subgraph of $\gcov$. 

(iii) Moreover, when $X = V(C_i)$ and $b \in B$ 
is a vertex that has a neighbor in $C_i$ in the graph $\gspar$, then by the definition of the \hyperref[process]{Sparsification Process}, 
we have that $b$ is in fact adjacent to exactly one vertex of $C_i$ in $\gspar$. 
Thus $b$ contributes $1$ to both sides of \Cref{eqn:equalneighborhood}.
\end{proof}


\begin{proof}[\textbf{Proof for \Cref{prop:doublecomp}}]
(i) \Cref{eqn:smallintersection} follows directly from the property of $\gconf$. Note that, $N_{\gspar}(x) \cap N_{\gspar}(y) \subseteq  N_{\gcov}(x) \cap N_{\gcov}(y)$ since $\gspar$ is a subgraph of $\gcov$. Further since $x \in V(C_i), y \in V(C_j)$ for $i \neq j$, we have that $x$ and $y$ are non-adjacent in  $\gconf$, this implies that $\big| N_{\gcov}(x) \cap N_{\gcov}(y) \big| < \frac{\varepsilon t}{k^2}$ (by definition) which further implies that $\big|N_{\gspar}(x) \cap N_{\gspar}(y)\big| < \frac{\varepsilon t}{k^2}$.

\smallskip
(ii) Suppose, for the sake of contradiction, there exists a pair of vertices $x, y \in V(C_i)$ for some $C_i$ satisfying $N_{\gspar}(x) \cap N_{\gspar}(y) \neq \emptyset$. Let $z \in N_{\gspar}(x) \cap N_{\gspar}(y)$. 
This contradicts the property of $\gspar$, 
since $\gspar$ is obtained as a result of \hyperref[process]{Sparsification Process}, 
which ensures that both edges $xz$ and $yz$ cannot simultaneously appear in $E(\gspar)$.
\end{proof}

In the next section, we assign weights to the vertices of $\gconf^{\mathrm{purple}}$, 
where each weight reflects the number of vertices dominated by that vertex in $\gspar$.

\subsection{Step 4\label{stepfour}: Introducing vertex weights to $V\bigl(\gconf^{\mathrm{purple}}\bigr)$}\label{subsec:introweight}

In this section, we assign weights to the vertices of the graph $\gconf^{\mathrm{purple}}$. Let $w : V\bigl(\gconf^{\mathrm{purple}}\bigr) \to \mathbb{Z}_{\geq 0}$ be a weight function defined on $V\bigl(\gconf^{\mathrm{purple}}\bigr)$ as follows: 

\begin{tcolorbox}[colback=green!5!white,colframe=blue!75!black]
\begin{equation}\label{eqn:weightfunction}
    w(v) = \big|N_{\gspar}(v)\big| = \deg_{\gspar}(v)\quad \quad \text{for every } v \in V\lr{\gconf^{\mathrm{purple}}}
\end{equation}
\end{tcolorbox}

By  \Cref{obs:subsetvertices}, we have 
$V(\gconf^{\mathrm{purple}}) \subseteq V(\gconf) = V(\gcon) = R$. 
Therefore, the assigned weights are restricted to a subset of the vertices in $R$. 
Before proceeding further, it is important to note the following assumptions, which will be maintained 
throughout the subsequent discussion.



We now state a crucial lemma that, via the weight function defined above, provides a lower bound on the $\gcov$-coverage of any set \(Y\subseteq V(\gconf^{\mathrm{purple}})\) with \(|Y|\le k\); equivalently, it lower-bounds \(|N_{\gcov}(Y)|\) for all such \(Y\).

\begin{lemma}\label{lem:mainlemma}
    For any subset $Y \subseteq V\bigl(\gconf^{\mathrm{purple}}\bigr)$ of size at most $k$, it holds that 
    \begin{align}
            \big|N_{\gcov}(Y) \big| \geq \big|N_{\gspar}(Y) \big| \geq \sum \limits_{y \in Y} w(y) - \varepsilon t \label{eqn:formainlemma}
    \end{align}
\end{lemma}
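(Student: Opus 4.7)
The first inequality is immediate: by construction, $\gspar$ is a subgraph of $\gcov$ (the Sparsification Process only deletes edges, it never adds new ones), so every edge of $\gspar$ is also an edge of $\gcov$. Hence $N_{\gspar}(Y) \subseteq N_{\gcov}(Y)$, yielding $|N_{\gcov}(Y)| \ge |N_{\gspar}(Y)|$. The real content of the lemma is the second inequality, and my plan is to reduce it to a Bonferroni-style bound after first partitioning $Y$ according to the components of $\gconf^{\mathrm{purple}}$.

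Write $\cC = \{C_1,\ldots,C_s\}$ for the components of $\gconf^{\mathrm{purple}}$ and let $Y_i \coloneqq Y \cap V(C_i)$ so that $Y = \biguplus_i Y_i$. For the intra-component terms, I invoke \Cref{eqn:exact} of \Cref{prop:singlecomp}, which gives the \emph{exact} identity $|N_{\gspar}(Y_i)| = \sum_{y \in Y_i} \deg_{\gspar}(y) = \sum_{y \in Y_i} w(y)$; summing over $i$ yields $\sum_i |N_{\gspar}(Y_i)| = \sum_{y \in Y} w(y)$. For the inter-component terms, I use \Cref{eqn:smallintersection} of \Cref{prop:doublecomp}: for $x \in V(C_i)$ and $y \in V(C_j)$ with $i \ne j$ we have $|N_{\gspar}(x) \cap N_{\gspar}(y)| < \varepsilon t / k^2$. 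A union bound then gives
\[
  |N_{\gspar}(Y_i) \cap N_{\gspar}(Y_j)| \;\le\; \sum_{x \in Y_i}\sum_{y \in Y_j} |N_{\gspar}(x) \cap N_{\gspar}(y)| \;<\; |Y_i|\cdot|Y_j|\cdot \tfrac{\varepsilon t}{k^2}.
\]

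Now I apply the lower-bound Bonferroni inequality to the union $N_{\gspar}(Y) = \bigcup_i N_{\gspar}(Y_i)$:
\[
  |N_{\gspar}(Y)| \;\ge\; \sum_{i=1}^{s} |N_{\gspar}(Y_i)| \;-\; \sum_{1 \le i < j \le s} |N_{\gspar}(Y_i) \cap N_{\gspar}(Y_j)|.
\]
Plugging in the two estimates above, the leading term becomes $\sum_{y \in Y} w(y)$, and the error term is bounded by $\tfrac{\varepsilon t}{k^2} \sum_{i<j} |Y_i|\cdot|Y_j| \le \tfrac{\varepsilon t}{k^2} \cdot \binom{|Y|}{2} \le \tfrac{\varepsilon t}{k^2} \cdot \tfrac{k^2}{2} = \tfrac{\varepsilon t}{2} \le \varepsilon t$, using $|Y| \le k$. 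Combining gives $|N_{\gspar}(Y)| \ge \sum_{y \in Y} w(y) - \varepsilon t$, which completes the proof.

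\textbf{Main obstacle.} The only subtlety is that within a single component there could \emph{a priori} be large overlaps among the $N_{\gspar}(y)$ for $y \in V(C_i)$, which would prevent us from summing the degrees without loss. This is precisely where the Sparsification Process is essential: \Cref{eqn:emptyintersection} guarantees pairwise-disjoint $\gspar$-neighborhoods within any fixed component, giving exact additivity there, so that all the slack $\varepsilon t$ is spent on cross-component overlaps only; those are controlled by the definition of $\gconf$ (non-edges in $\gconf$ correspond to small overlaps in $\gcov$, hence in $\gspar$). The two Properties of $\gspar$ together are tailored so that this clean split between intra- and inter-component analysis goes through.
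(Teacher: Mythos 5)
Your proof is correct and follows essentially the same route as the paper: a truncated inclusion--exclusion (Bonferroni) bound whose leading term is exact by the within-component disjointness of $\gspar$-neighborhoods (\Cref{prop:singlecomp}/\Cref{prop:doublecomp}\,(ii)) and whose error term is controlled by the $\tfrac{\varepsilon t}{k^2}$ bound on cross-component overlaps together with the $\binom{|Y|}{2}\le k^2$ count of pairs. The only difference is organizational—you group $Y$ by components before applying Bonferroni, whereas the paper applies it vertex-by-vertex and then kills the intra-component pairwise terms—so the two arguments are interchangeable.
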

\begin{proof} 
By Inclusion-exclusion Principle, we have 
\begin{eqnarray}
\big| N_{\gspar}(Y) \big| 
  &=& \sum\limits_{y \in Y} \big|N_{\gspar}(y)\big|
   - \sum\limits_{x, y \in Y} \big| N_{\gspar}(x) \cap N_{\gcov}(y) \big|\nonumber\\
    &+& \sum\limits_{x, y,z \in Y}\big| N_{\gspar}(x) \cap N_{\gspar}(y) \cap N_{\gspar}(z) \big|- \dotsm \label{eqn:formulaincluexclu}\\
    &\stackrel{(\P)}{\geq}& \sum\limits_{y \in Y}\big|N_{\gspar}(y)\big| - \sum\limits_{x, y \in Y}\big| N_{\gspar}(x) \cap N_{\gspar}(y)\big|\label{line:incluexclu} \\
    &=&  \sum\limits_{y \in Y}w(y) - \sum\limits_{x, y \in Y}\big| N_{\gspar}(x) \cap N_{\gspar}(y)\big| \label{line:weightneighbor}
\end{eqnarray}

Here, inequality $(\P)$ holds because, in Line~\ref{line:incluexclu}, we truncate the Inclusion--Exclusion formula by discarding all terms from the $3^{\textnormal{rd}}$ term onward, thereby obtaining a lower bound on the original expression. Line~\ref{line:weightneighbor} follows from \Cref{eqn:weightfunction}. Now, it remains to provide an upper bound on $\mathlarger{\sum}\limits_{x, y \in Y} \big| N_{\gspar}(x) \cap N_{\gspar}(y) \big|$. 


Recall that $\cC = \lrc{C_1, \dots C_s}$ is the set of connected components in $\gconf^{\mathrm{purple}}$. 
For each $i\in[s]$, consider the restriction of the connected component $C_i$ to $Y$, namely $C_i^{Y}\coloneqq V(C_i)\cap Y$. Note that $\gconf^{\mathrm{purple}}[C_i^{Y}]$ need not be connected. Without loss of generality, we assume that for every $i 
\in [s]$, $C_i^Y \neq \emptyset$, i.e., it contains at least one vertex from $Y$; otherwise, we discard the empty components from the collection $\lrc{C_i^Y , \dots , C_s^Y}$. Thus we can write,

\begin{sloppypar}
\begin{align}\label{eqn:intersection}
    \sum\limits_{x, y \in Y}\big| N_{\gspar}(x) \cap N_{\gspar}(y)\big| = 
    \sum\limits_{\substack{i \in [s],\\ x,y \in C^Y_i}}\big| 
    N_{\gspar}(x) \cap N_{\gspar}(y)\big| 
    + \sum\limits_{\substack{i,j \in [s],i\neq j\\ x\in C^Y_i,\\y\in C^Y_j}}\big| N_{\gspar}(x) \cap N_{\gspar}(y)\big|
\end{align}
\end{sloppypar}

Notice that due to \Cref{eqn:emptyintersection} of \Cref{prop:doublecomp}, we have that for every $C_i \in \cC$ and for every $x,y~ \in V(C_i)$, it holds that $N_{\gspar}(x) \cap N_{\gspar}(y) = \emptyset$. Hence the first term of \Cref{eqn:intersection} becomes $0$. 

Moreover, by \Cref{eqn:smallintersection} in \Cref{prop:doublecomp}, for all distinct $i,j\in[s]$ and all vertices $x\in C_i^{Y}\subseteq V(C_i)$ and $y\in C_j^{Y}\subseteq V(C_j)$, we have
\[
\bigl|N_{\gspar}(x)\cap N_{\gspar}(y)\bigr| \;<\; \frac{\varepsilon t}{k^{2}},
\]
since $x$ and $y$ are nonadjacent in the conflict graph $\gconf$. Since $|Y| \leq k$, there are at most $\binom{k}{2}$ pairs in $Y$. Hence the second component of
\Cref{eqn:intersection} is upper bounded by $\binom{k}{2} \cdot \frac{\varepsilon t}{k^2} \leq k^2\cdot \frac{\varepsilon t}{k^2} = \varepsilon t$. Substituting these values in \Cref{eqn:intersection}, we obtain 

$$ \sum\limits_{x, y~ \in Y}\big| N_{\gspar}(x) \cap N_{\gspar}(y)\big| \leq\varepsilon t$$ 

Substituting all these values in \Cref{eqn:formainlemma}, we get that $\big|N_{\gspar}(Y) \big| \geq \sum \limits_{y \in Y} w(y) - \varepsilon t$. 
Furthermore, since $\gspar$ is a subgraph of $\gcov$, we have
\[
\bigl|N_{\gcov}(Y)\bigr| \;\ge\; \bigl|N_{\gspar}(Y)\bigr|.
\]
(Equivalently, this follows by repeatedly applying \Cref{eqn:propersubsetneighborhood} from \Cref{prop:singlecomp}.)
This proves the lemma.
\end{proof}

We now require a converse to \Cref{lem:mainlemma}: an upper bound on the $\gcov$-coverage of any set \(X\subseteq V(\gconf^{\mathrm{purple}})\) with \(|X|\le k\), expressed in terms of the weight
function defined above; equivalently, we seek to upper bound \(|N_{\gcov}(X)|\) for all such \(X\).
However, we cannot establish this bound for arbitrary \(X\). Instead, we prove it for a particular
class of sets, which is sufficient for our purposes. The next definition characterizes this class.

\begin{definition}[Component-respecting set]
{\em Let $\cC=\{C_1,\dots,C_s\}$ be the collection of all connected components of $\gconf^{\mathrm{purple}}$.
A set $Z \subseteq V(\gconf^{\mathrm{purple}})$ is \emph{component-respecting} $\lr{\textnormal{w.r.t.}~\gconf^{\mathrm{purple}}}$ if for every $i\in[s]$, either $C_i \subseteq Z$ or $Z \cap C_i = \emptyset$.
Equivalently, $Z$ is a union of some components  from $\cC$, in other words, there exists $I \subseteq [s]$ with
$Z=\bigcup_{i\in I} C_i$.}
\end{definition}

 \begin{lemma}\label{claim:weightneighbor}
        For any component-respecting set $X \subseteq V\lr{\gconf^{\mathrm{purple}}}$, it holds that
        \begin{align}
            \sum\limits_{x \in X}w(x) \geq \big| N_{\gcov}(X)\big| \label{eqn:weightneighbor}
        \end{align}
    \end{lemma}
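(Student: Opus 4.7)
The plan is to exploit the component-respecting structure of $X$ together with the two identities recorded in \Cref{prop:singlecomp}. Write $X=\bigsqcup_{i\in I} V(C_i)$ for some $I\subseteq[s]$, and split the weight sum along components:
\[
  \sum_{x\in X} w(x) \;=\; \sum_{i\in I}\ \sum_{x\in V(C_i)} w(x).
\]
For the inner sum, since $V(C_i)$ is the whole component, I can apply \Cref{eqn:exact} from \Cref{prop:singlecomp} (with $X \coloneqq V(C_i)$) to conclude
\[
  \sum_{x\in V(C_i)} w(x) \;=\; \sum_{x\in V(C_i)} \deg_{\gspar}(x) \;=\; \bigl|N_{\gspar}(V(C_i))\bigr|.
\]
Then I would invoke \Cref{eqn:equalneighborhood} (the ``full-component'' identity) to replace $\gspar$-neighborhoods by $\gcov$-neighborhoods: $|N_{\gspar}(V(C_i))| = |N_{\gcov}(V(C_i))|$.

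Combining these two steps yields
\[
  \sum_{x\in X} w(x) \;=\; \sum_{i\in I} \bigl|N_{\gcov}(V(C_i))\bigr|.
\]
Finally, I would finish with a union-bound (subadditivity) argument:
\[
  \sum_{i\in I} \bigl|N_{\gcov}(V(C_i))\bigr| \;\ge\; \Bigl| \bigcup_{i\in I} N_{\gcov}(V(C_i)) \Bigr| \;=\; \bigl|N_{\gcov}(X)\bigr|,
\]
where the last equality uses $X=\bigcup_{i\in I} V(C_i)$ and the distributivity of $N_{\gcov}(\cdot)$ over unions.

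There is no real obstacle here: the only subtle point is that the identity $|N_{\gspar}(X)| = \sum_{x\in X} w(x)$ from \Cref{eqn:exact} holds only \emph{within a single component}, which is exactly why the component-respecting hypothesis is needed --- it lets us apply that identity inside each $C_i$ independently and then aggregate losslessly via subadditivity (note the inequality is one-sided because a blue vertex may be a $\gcov$-neighbor of several components $C_i$, and it is then counted once on each side that it belongs to, but possibly multiple times on the left). The hypothesis also prevents the issue that would arise if $X$ cut a component $C_i$ in half, in which case $|N_{\gspar}(X\cap V(C_i))|$ would be a strict subset of $N_{\gcov}(X\cap V(C_i))$ and the equality would degrade, as already noted by \Cref{eqn:propersubsetneighborhood}.
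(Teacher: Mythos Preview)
Your proof is correct and takes essentially the same approach as the paper: both rely on \Cref{eqn:exact} and \Cref{eqn:equalneighborhood} from \Cref{prop:singlecomp} to equate per-component weight with per-component $\gcov$-coverage, then handle possible overlaps across components. The only cosmetic difference is that the paper argues element-by-element (showing each $b\in N_{\gcov}(X)$ contributes at least $1$ to the left-hand side), whereas you aggregate component-wise and finish with subadditivity; these are equivalent presentations of the same idea.
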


    \begin{proof} We know that,
        \begin{eqnarray}
            && \sum\limits_{x \in X}w(x) = \sum\limits_{x \in X}\big| N_{\gspar}(x)\big| \quad \quad \quad \quad \textnormal{(due to the definition in \Cref{eqn:weightfunction})}
        \end{eqnarray}
        Thus, it suffices to show that 
        \begin{align} 
            \sum\limits_{x \in X}\big| N_{\gspar}(x)\big| \geq \big| N_{\gcov}(X)\big| \label{line:neighbor-neighbor}
        \end{align}

    Every vertex $b \in  N_{\gcov}(X) \subseteq B$ contributes $1$ to the right-hand side of ~\Cref{line:neighbor-neighbor}, since every neighbor of $X$ in $\gcov$ is counted once in $\big| N_{\gcov}(X) \big|$.  To establish \Cref{claim:weightneighbor}, it suffices to show that every vertex $b \in  N_{\gcov}(X) \subseteq B$ contributes at least $1$ to the left-hand side of \Cref{line:neighbor-neighbor}. Recall that $\cC = \lrc{C_1, \dots, C_s}$ denotes the set of connected components in $\gconf^{\mathrm{purple}}$. Consider an arbitrary vertex $b 
    \in N_{\gcov}(X)$ and an arbitrary component $C_i \in \cC$ of $\gconf^{\mathrm{purple}}$ such that $ b \in N_{\gcov}(V(C_i))$. Note that such a component exists since $b \in N_{\gcov}(X)$. Since $X$ is component-respecting with respect to $\gconf^{\mathrm{purple}}$, it holds that 
    $V(C_i) \cap X = V(C_i)$, i.e., $V(C_i) \subseteq X$. Thus, by \Cref{eqn:equalneighborhood}, we have 
    $N_{\gspar}\!\bigl(V(C_i)\bigr) = N_{\gcov}\!\bigl(V(C_i)\bigr)$. Hence, by the construction of $\gspar$, we have that $b$ is adjacent to some $x \in V(C_i)$ in $\gspar$, i.e., $b \in N_{\gspar}(x)$.  Hence \(b\) contributes \(1\) to \(|N_{\gspar}(x)|\) in the sum on the left-hand side of \Cref{line:neighbor-neighbor} which concludes the Lemma.  
    \end{proof}

\subsection{Step 5\label{stepfive}: Finding a Maximum Weighted Subtree in $\gcon\lrsq{V(\gconf^{\mathrm{purple}})}$}\label{sec:bisppvd}
In this section we work in the vertex-weighted graph \(\gcon\!\lrsq{V(\gconf^{\mathrm{purple}})}\), endowed with
the weight function \(w\) defined above (in \Cref{eqn:weightfunction}). Our goal is to find a subtree on at most  \(k\) vertices
maximizing total weight (the weight of a tree is the sum of the weights of its vertices).  We formalize this as the following subproblem.

\begin{description}
\item[\underline{\sc Maximum-Weight \(k\)-Tree}:]
Given a graph \(G\), a weight function \(\gamma:V(G)\to\mathbb{Z}_{\ge 0}\), and an integer \(k\),
find a subtree \(T_G\subseteq G\) with \(|V(T_G)|\leq k\) maximizing
\(\gamma(T_G)\coloneqq \sum_{v\in V(T_G)} \gamma(v)\).
\end{description}

We solve this via \textsc{Weighted Tree Isomorphism}.

\begin{description}
\item[\underline{\sc Weighted Tree Isomorphism}:]
Given a host graph \(G\), a tree \(T\), and a weight function \(\gamma:V(G)\to\mathbb{Z}_{\ge 0}\),
find (if it exists) a subgraph \(T_G\subseteq G\) isomorphic to \(T\) and of maximum total weight
\(\gamma(T_G)\).
\end{description}

\begin{proposition} 
\label{thm:weightedcsi}
{\sc Weighted Tree Isomorphism} can be solved in time \(2^{\OO(h)}\,n^{\cO(1)}\), where
\(h=|V(T)|\) and \(n=|V(G)|\).
\end{proposition}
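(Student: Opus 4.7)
The plan is to prove \Cref{thm:weightedcsi} via the color-coding framework of Alon, Yuster, and Zwick, adapted for maximization rather than mere existence. First, I would construct an $(n,h)$-perfect hash family $\cF$ of colorings $c:V(G)\to[h]$, of size $2^{\cO(h)} \log n$ and computable in time $2^{\cO(h)} \cdot n^{\cO(1)}$, so that for every $h$-subset $U \subseteq V(G)$ some $c \in \cF$ assigns pairwise distinct colors to the vertices of $U$. Since every subgraph $T_G \subseteq G$ isomorphic to $T$ has exactly $h$ vertices, an optimal $T_G$ is colorful under at least one $c \in \cF$. It therefore suffices, for each $c \in \cF$, to compute the maximum weight of a \emph{colorful} subgraph of $G$ isomorphic to $T$ and return the overall best answer.

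For a fixed coloring $c$, I would root $T$ at an arbitrary vertex $r$, write $T_u$ for the subtree of $T$ rooted at $u$, and set up the dynamic programming table
\[
f[u,v,S] \;=\; \max \sum_{x \in V(T^u_G)} \gamma(x),
\]
where the maximum is taken over all colorful embeddings $T^u_G$ of $T_u$ into $G$ with $u \mapsto v$ and color set exactly $S \subseteq [h]$, setting $f[u,v,S]=-\infty$ if no such embedding exists. For a leaf $u$, put $f[u,v,\{c(v)\}] = \gamma(v)$. For an internal $u$ with children $u_1,\ldots,u_\ell$, I would precompute $M[u_i,v,S'] \coloneqq \max_{v' \in N_G(v)} f[u_i,v',S']$ and then combine the children via a subset-partition DP
\[
g[i,S] \;=\; \max_{S' \subseteq S} \bigl(\, g[i-1,\, S \setminus S'] \;+\; M[u_i, v, S']\,\bigr),
\]
with $g[0,\emptyset]=0$, and finally set $f[u,v,S \cup \{c(v)\}] = \gamma(v) + g[\ell,S]$ whenever $c(v) \notin S$. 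The crucial observation that keeps the DP clean is that colorfulness automatically enforces injectivity: disjoint color classes force disjoint vertex images, so the root images of the child-subtrees are automatically distinct neighbors of $v$, and no explicit matching constraint over children is needed. The best colorful embedding for coloring $c$ is read off as $\max_{v \in V(G),\,|S|=h} f[r,v,S]$.

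For the running time, there are $h-1$ parent-child pairs $(u,u_i)$, at most $n$ choices of $v$, and the subset-partition recurrence at each state costs $\sum_{S \subseteq [h]} 2^{|S|} = 3^h$, yielding $2^{\cO(h)} \cdot n^{\cO(1)}$ per coloring; multiplying by $|\cF| = 2^{\cO(h)} \log n$ gives the claimed $2^{\cO(h)} \cdot n^{\cO(1)}$ bound. I do not anticipate a serious obstacle: the only slightly delicate points are the subset-convolution bookkeeping in the child-combining DP and verifying that every optimal $T_G$ is colorful under at least one $c \in \cF$, but both are standard consequences of the perfect hash family guarantee. Correctness in each direction is straightforward, since the DP visits only valid colorful embeddings and, by the perfect hash family property, captures an optimal $T_G$ in at least one iteration.
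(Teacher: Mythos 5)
Your proposal is correct and is exactly the route the paper takes: the paper proves this proposition by pointing to the color-coding framework of Alon--Yuster--Zwick and noting that the detection DP extends to maximization over states, which is precisely the perfect-hash-family plus colorful-subtree dynamic program you spell out. Your write-up simply fills in the standard details (the subset-partition recurrence and the observation that disjoint color classes enforce vertex-disjointness), and both the correctness argument and the $2^{\OO(h)}n^{\cO(1)}$ accounting are sound.
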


\Cref{thm:weightedcsi} follows from standard techniques. One route is the classic color-coding framework of Alon, Yuster, and Zwick~\cite[Thm.~6.3]{AlonYZ95}, whose dynamic program extends to the weighted objective by maximizing (rather than merely detecting) over states; see the discussion following \cite[Thm.~6.3]{AlonYZ95} (cf.\ also \cite{DBLP:conf/wg/PlehnV90}). Alternatively, representative-family techniques yield the same running time with a weight-aware state space~\cite{DBLP:journals/jacm/FominLPS16}.

To solve \textsc{Maximum-Weight \(k\)-Tree}, we enumerate all non-isomorphic trees on at most \(k\) vertices
and, for each such tree \(T\), run \textsc{Weighted Tree Isomorphism}
(\Cref{thm:weightedcsi}). Otter~\cite{otter1948number} showed that the number of non-isomorphic
(unrooted) trees on \(h\) vertices is \(t_h = 2.956^{h}\). Moreover, all non-isomorphic rooted
trees on \(h\) vertices can be generated in time \(\cO(t_h\,h)\) by the algorithm of
Beyer and Hedetniemi~\cite{DBLP:journals/siamcomp/BeyerH80}.

\begin{proposition}[\cite{otter1948number,DBLP:journals/siamcomp/BeyerH80}]
\label{prop:enumtree}
The number of non-isomorphic trees on \(h\) vertices is \(t_h = 2.956^{h}\).
Furthermore, all non-isomorphic rooted trees on \(h\) vertices can be enumerated in time
\(\cO(t_h\,h)\).
\end{proposition}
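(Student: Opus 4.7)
The statement has two parts—an exponential upper bound on the number of non-isomorphic trees and an enumeration algorithm for rooted trees—both classical results. I would either appeal directly to \cite{otter1948number,DBLP:journals/siamcomp/BeyerH80} or reconstruct the arguments along the lines below.

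For the counting bound, the plan is to analyze the ordinary generating function $R(x) = \sum_{h\ge 1} R_h\,x^h$ for unlabeled rooted trees. Decomposing each rooted tree as a root attached to a multiset of rooted subtrees, P\'olya enumeration yields the functional equation $R(x) = x\cdot\exp\!\bigl(\sum_{k\ge 1} R(x^k)/k\bigr)$. A singularity analysis at the dominant singularity $\rho = 1/\alpha$, with $\alpha$ the unique positive root of an implicit equation derived from this functional equation, gives $R_h \sim c_1\,\alpha^h/h^{3/2}$ with $\alpha \approx 2.9557$. Otter's dissymmetry theorem then links the unrooted count $t_h$ to the rooted and bi-centroidal tree counts, yielding $t_h \sim c_2\,\alpha^h/h^{5/2}$ with the same exponential base. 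Since $2.956 > \alpha$, the polynomial factors are absorbed into the exponential for all but finitely many $h$ (small cases verified by inspection), giving $t_h \le 2.956^h$; the same bound applies to $R_h$, which is what the enumeration claim will need.

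For the enumeration algorithm, I would use the Beyer--Hedetniemi canonical \emph{level-sequence} representation: to each rooted tree $T$ on $h$ vertices, assign the depth sequence $L(T) = (l_1,\ldots,l_h)$ produced by a DFS that visits children in decreasing lexicographic order of their subtree level sequences. Then $L(T)$ is the lexicographic maximum among all level sequences for $T$, so $T \mapsto L(T)$ is a bijection between rooted trees on $h$ vertices and a set of canonical sequences. The algorithm iterates through canonical sequences in lexicographic order via a successor rule: locate the rightmost position $p$ with $l_p > 1$, decrement $l_p$, and refill the suffix $l_{p+1},\ldots,l_h$ by copying a prescribed prefix of the current sequence so that canonicity is preserved. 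A standard amortized analysis shows that each successor computation (including the $O(h)$-size output) costs $O(h)$ time, so the total enumeration time is $\cO(t_h\,h)$.

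The main obstacle will be pinning down the sharp exponential constant $\alpha$, which requires a delicate singularity analysis and the numerical resolution of an implicit equation—this is the technical core of Otter's original paper. For the use in \Cref{thm:weightedcsi} any explicit exponential base suffices, so a self-contained fallback is to upper-bound $R_h$ by the number of plane rooted trees, which equals the Catalan number $C_{h-1} = \cO(4^h/h^{3/2})$. This yields a weaker constant but the same $2^{\cO(h)}$ form and still delivers the claimed $2^{\cO(h)}\,n^{\cO(1)}$ running time.
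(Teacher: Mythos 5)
The paper offers no proof of this proposition—it is stated purely by citation to Otter and Beyer--Hedetniemi—and your proposal appeals to exactly those same classical results, additionally sketching (correctly) how they are established via P\'olya enumeration with singularity analysis and the canonical level-sequence successor algorithm. Your observation that the sharp constant is immaterial for the application and that a Catalan-number bound of $\cO(4^h/h^{3/2})$ on plane rooted trees already yields the needed $2^{\cO(h)}$ form is a sound and self-contained fallback, so the proposal is correct and consistent with the paper's use of the proposition.
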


Combining \Cref{prop:enumtree} with \Cref{thm:weightedcsi} yields:

\begin{lemma}\label{lem:maxweight}
{\sc Maximum-Weight \(k\)-Tree} can be solved in time \(2^{\cO(k)}\,n^{\cO(1)}\).
\end{lemma}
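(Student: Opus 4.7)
The plan is to reduce \textsc{Maximum-Weight $k$-Tree} to a union of \textsc{Weighted Tree Isomorphism} instances, one for every possible shape that an optimum subtree could have. First, I would enumerate all non-isomorphic (unrooted) trees on $h$ vertices for each $h \in \{1,2,\dots,k\}$ using the Beyer--Hedetniemi enumeration algorithm of \Cref{prop:enumtree}; this yields a family $\mathcal{T}$ of pattern trees whose total size is at most $\sum_{h=1}^{k} t_h \le k \cdot 2.956^{k} = 2^{\cO(k)}$, and which can be generated in time $2^{\cO(k)}$.

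Next, for each pattern $T \in \mathcal{T}$, I would invoke \Cref{thm:weightedcsi} on the host graph $G$ with the weight function $\gamma$ to obtain (if one exists) a maximum-weight subgraph $T_G \subseteq G$ that is isomorphic to $T$. This step costs $2^{\cO(|V(T)|)} \cdot n^{\cO(1)} \le 2^{\cO(k)} \cdot n^{\cO(1)}$ per pattern. The algorithm records the best $T_G$ found across all iterations and returns it at the end.

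Correctness is immediate: any subtree of $G$ on at most $k$ vertices is isomorphic to some element of $\mathcal{T}$, so an optimum solution is realized as a copy of some pattern and will be discovered in the iteration that processes that pattern; the \emph{maximum}-weight realization is guaranteed by \Cref{thm:weightedcsi}. For the running time, multiplying the $2^{\cO(k)}$ number of patterns by the $2^{\cO(k)} \cdot n^{\cO(1)}$ cost per invocation yields the claimed bound of $2^{\cO(k)} \cdot n^{\cO(1)}$. The only real obstacle to worry about is keeping the exponent linear in $k$: enumerating labeled or rooted patterns could easily inflate the count to $k^{\cO(k)}$, but Otter's bound together with the Beyer--Hedetniemi enumeration keeps the unrooted count at $2^{\cO(k)}$, which is precisely what preserves the target running time.
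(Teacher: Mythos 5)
Your proposal is correct and follows essentially the same route as the paper: enumerate all non-isomorphic pattern trees on at most $k$ vertices via Otter's bound and the Beyer--Hedetniemi enumeration, run \textsc{Weighted Tree Isomorphism} on each, and return the best result, giving $2^{\cO(k)}\cdot 2^{\cO(k)}\,n^{\cO(1)}=2^{\cO(k)}\,n^{\cO(1)}$ overall. No substantive differences from the paper's argument.
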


\begin{proof}
There are \(t_{k^\star} = 2.956^{k^\star}\), $k^\star \leq k$, non-isomorphic trees on \(k^\star\) vertices. For each such tree \(T\),
\textsc{Weighted Tree Isomorphism} runs in \(2^{\cO(k^\star)}\,n^{\cO(1)}\) time
(\Cref{thm:weightedcsi}). Hence the total running time is
\(t_k \cdot 2^{\cO(k)}\,n^{\cO(1)} = 2^{\cO(k)}\,n^{\cO(1)}\).
\end{proof}


\subsection{Putting all the Pieces Together}
By assembling all the components and present the complete algorithm as pseudocode in \Cref{alg:conprbdspart}, we now obtain the main theorem of this section, and prove it by combining the tools and ideas developed above.

\begin{algorithm}[ht!]
\caption{\texttt{AlgPartialConnRBDS} \label{alg:conprbdspart}}
\KwInput{Coverage graph $\gcov = (R \uplus B, E')$ which is \kddfree, connectivity graph $\gcon = (R, E)$, integers $k, t, \varepsilon \in (0,1)$}
\KwOutput{A subset $S \subseteq R$ of size at most $k$ such that 
\propone$~\gcon[S]$ is connected, 
\proptwo$~|N_{\gcov}(S)| \geq \lr{1 - \varepsilon}t$ or returns $\bot$ \blue{\tcp*{\footnotesize{$\bot$ indicates no such set of size $k$ can dominate $t$ vertices.}}}}
  \uIf{$t \leq \frac{k^2d}{\varepsilon}$}{Run the exact algorithm of \Cref{thm:fptbytpds} and \Return that solution.}
  \Else{
  Construct the conflict graph $\gconf$ (described in \hyperref[stepone]{Step~1} \Cref{subsec:conflictgraph}) \\
  Set $\Delta \gets (d-1) \cdot \left(\tfrac{ek^2}{\varepsilon}\right)^{d}$ \\
  Let $\cH$ be the family of two-coloring functions obtained using \Cref{prop:lopsidedUniversal}, (\Cref{step:isolatecc})\\
  \ForEach{$f \in \cH$}{
    Compute $\gconf^{\mathrm{purple}}$ after color filtration (described in \hyperref[steptwo]{Step~2}, \Cref{step:isolatecc}). \\
    Build the sparsified graph $\gspar$ (described in \hyperref[stepthree]{Step~3}, \Cref{subsubsec:spargraph}). \\
    Assign weights to vertices of $\gconf^{\mathrm{purple}}$ (described in \hyperref[stepfour]{Step~4}, \Cref{subsec:introweight}). \\
   Let $T_f$ be a   maximum-weight subtree in $\gcon[V(\gconf^{\mathrm{purple}})]$  obtained using \Cref{lem:maxweight} (see \hyperref[stepfive]{Step~5} in \Cref{sec:bisppvd}).
    }
    Let $\widetilde{T}$ be the maximum-weight subtree in $\gcon[V(\gconf^{\mathrm{purple}})]$ among all $T_f, f \in \cH$. \\
    \uIf{$w(\widetilde{T}) < t$}{
        \Return $\bot$
    }
    \Else{\Return $V(\widetilde{T})$}}
\end{algorithm}

\begin{restatable}{theorem}{kddfptapproxpds} \label{thm:kddfptapproxpds}
Let $\cI=(\gcov,\gcon,k,t)$ be an instance of \pcrbdsshort, where $\gcov$ and $\gcon$ are the
coverage and connectivity graphs, respectively. If $\gcov$ is \kddfree, then for every
$\varepsilon\in(0,1)$ there is an algorithm running in time
$\os\!\bigl(2^{\cO(k^{2}d/\varepsilon)}\bigr)$ that either
\begin{enumerate}[label=(\roman*),itemsep=0pt,leftmargin=*]
  \item outputs a set $S\subseteq R$ with $|S|\le k$, $\gcon[S]$ connected, and
        $|N_{\gcov}(S)|\ge (1-\varepsilon)t$, or
  \item correctly concludes that no size-$k$ set dominates at least $t$ vertices in $\gcov$.
\end{enumerate}
\end{restatable}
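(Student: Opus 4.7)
The plan is to verify the correctness and running-time bound of \Cref{alg:conprbdspart} by splitting on the magnitude of $t$. For $t \le k^{2}d/\varepsilon$, I would invoke the exact algorithm of \Cref{thm:fptbytpds}, which decides \pcrbdsshort in time $\os((2e)^{t+o(t)}) = 2^{\cO(k^{2}d/\varepsilon)} \cdot n^{\cO(1)}$; this immediately yields a $(1,1)$-approximation whenever a solution exists, or a correct \no-report otherwise, and falls within the time budget.

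The substantive part concerns $t > k^{2}d/\varepsilon$. Suppose a size-$k$ solution $S^\star$ exists. I would exploit the lopsided-universal family $\mathcal{H}$ produced by \Cref{prop:lopsidedUniversal} with parameters $p \coloneqq k$ and $q \coloneqq k \Delta_{\star}$, where $\Delta_{\star}\coloneqq(d-1)(ek^{2}/\varepsilon)^{d}$ is the conflict-graph degree bound of \Cref{lem:maxdegkddfreepds}. By the defining property of a lopsided-universal family, some $f^\star \in \mathcal{H}$ satisfies $S^\star \subseteq \Vred$ and $N_{\gconf}(S^\star) \subseteq \Vblue$. For this good coloring I would establish two structural facts: (i) every connected component of $\gconf[S^\star]$ has size at most $k$ and therefore survives the size filter; and (ii) since $N_{\gconf}(S^\star)$ is deleted, no $S^\star$-component can attach to any other purple vertex, so each such component is in fact a full connected component of $\gconf^{\mathrm{purple}}$. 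Together these imply that $S^\star$ is component-respecting with respect to $\gconf^{\mathrm{purple}}$.

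Next, I would chain the two quantitative lemmas of the section. Applying \Cref{claim:weightneighbor} to the component-respecting set $S^\star$ gives $\sum_{v \in S^\star} w(v) \ge |N_{\gcov}(S^\star)| \ge t$. Since $\gcon[S^\star]$ is connected by \propone, it contains a spanning subtree $T^\star$ with $|V(T^\star)| \le k$ and total weight at least $t$; this tree is a feasible candidate in the $f^\star$-iteration of the \textsc{Maximum-Weight $k$-Tree} solver of \Cref{lem:maxweight}, so the best subtree $\widetilde{T}$ across all iterations satisfies $w(\widetilde{T}) \ge t$. Invoking \Cref{lem:mainlemma} on $V(\widetilde{T})\subseteq V(\gconf^{\mathrm{purple}})\subseteq R$ (of size at most $k$) yields $|N_{\gcov}(V(\widetilde{T}))| \ge w(\widetilde{T}) - \varepsilon t \ge (1-\varepsilon)t$. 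Since $\widetilde{T}$ is a subtree of $\gcon[V(\gconf^{\mathrm{purple}})]$, its vertex set induces a connected subgraph of $\gcon$, so the output $V(\widetilde{T})$ meets all three required conditions. Conversely, if $w(\widetilde{T}) < t$, the contrapositive of the same argument shows that no size-$k$ set can be both connected in $\gcon$ and dominate $t$ blue vertices in $\gcov$, so returning $\bot$ is correct.

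The running-time analysis is routine assembly: $|\mathcal{H}| = (1/\varepsilon)^{\cO(dk \log k)} \log n$ iterations are performed, each spending polynomial time to build $\gconf^{\mathrm{purple}}$ and $\gspar$ and $2^{\cO(k)} n^{\cO(1)}$ time for the \textsc{Maximum-Weight $k$-Tree} subroutine of \Cref{lem:maxweight}. Using $\log k \le k$ and $\log(1/\varepsilon) \le 1/\varepsilon$, the dominant factor $(1/\varepsilon)^{\cO(dk \log k)}$ is absorbed into $2^{\cO(k^{2}d/\varepsilon)}$, giving the claimed overall bound. The most delicate step, and the one I expect to require the greatest care in the write-up, is justifying that the components of $\gconf[S^\star]$ are exact components of $\gconf^{\mathrm{purple}}$ under the good coloring, because this is precisely what makes $S^\star$ component-respecting and thereby unlocks the upper bound in \Cref{claim:weightneighbor}; every other ingredient is a direct instantiation of tools already developed in the preceding subsections.
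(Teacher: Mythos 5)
Your proposal is correct and follows essentially the same route as the paper's proof: the same case split on $t$, the same good coloring from the lopsided-universal family, the same chain $\sum_{v\in S^\star}w(v)\ge t$ (via \Cref{claim:weightneighbor}) $\Rightarrow w(\widetilde{T})\ge t$ (via maximality of the $k$-tree) $\Rightarrow |N_{\gcov}(V(\widetilde{T}))|\ge(1-\varepsilon)t$ (via \Cref{lem:mainlemma}), and the same running-time assembly. In fact you are slightly more explicit than the paper on the one point it glosses over — verifying that under the good coloring each component of $\gconf[S^\star]$ is a \emph{full} component of $\gconf^{\mathrm{purple}}$, so that $S^\star$ is component-respecting and \Cref{claim:weightneighbor} genuinely applies.
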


\begin{proof}
We first show that, if the instance is a \yes-instance, our algorithm returns a set \(S\subseteq R\) with \(|S|\le k\), \(\gcon[S]\) connected, and \(|N_{\gcov}(S)|\ge (1-\varepsilon)t\). Let \(S^\star\subseteq R\) be a size-\(\le k\) witness satisfying \propone\ (i.e., \(\gcon[S^\star]\) is connected) and \proptwo\ (i.e., \(|N_{\gcov}(S^\star)|\ge t\)). 

\medskip 
\noindent 
\emph{Small $t$.} If $t<\tfrac{k^{2}d}{\varepsilon}$, algorithm invokes the exact FPT algorithm of
\Cref{thm:fptbytpds} to decide and, if possible, return a size-$\le k$ connected solution covering
at least $t$ vertices.

\medskip 
\noindent
\emph{Large $t$.} Assume $t \ge \tfrac{k^{2}d}{\varepsilon}$. The algorithm implements the purple/green
separation by enumerating an $(n,p,q)$ lopsided-universal family $\mathcal{H}$
(\Cref{prop:lopsidedUniversal}) with $p=k$ and
$q = k(d-1)\!\left(\tfrac{e k^{2}}{\varepsilon}\right)^{d}$. By the defining property of such
families, there exists a \emph{good} $f\in\mathcal{H}$ with
$S^\star \subseteq \Vred$ and $N_{\gconf}(S^\star) \subseteq \Vblue$. Fix this $f$.

We delete all green vertices and retain only purple components of size at most $k$ (the filtering step);
this preserves each connected piece $C_i^\star$ of $\gconf[S^\star]$. We then define the
vertex-weight function $w$ on $V(\gconf^{\mathrm{purple}})$ (equivalently, on the host
$\gcon[V(\gconf^{\mathrm{purple}})]$), as in \hyperref[stepfour]{Step~4} (\Cref{subsec:introweight}).
Finally, let $T_f$ be a maximum-weight subtree on  $k$ vertices in
$\gcon[V(\gconf^{\mathrm{purple}})]$, computed via \Cref{lem:maxweight}.

We can {\em potentially} take \(S \coloneqq V(T_f)\). Since \(V(T_f)\subseteq V(\gconf^{\mathrm{purple}})\) and \(|V(T_f)|=k\), by \Cref{lem:mainlemma} we have
\begin{equation}
  \bigl|N_{\gcov}(V(T_f))\bigr| \;\ge\; \sum_{\mathfrak{f}\in V(T_f)} w(\mathfrak{f}) \;-\; \varepsilon t .
  \label{eqn:towardslemma}
\end{equation}

Recall that \(S^\star\) has \(|S^\star|\le k\) and \(\gcon[S^\star]\) connected; let \(T^\star\) be any spanning tree of \(\gcon[S^\star]\).
Since \(S^\star \subseteq \Vred\) and each connected piece \(C_i^\star\) has size at most \(k\), the filtering step preserves \(\gcon[S^\star]\), hence \(T^\star \subseteq \gcon[V(\gconf^{\mathrm{purple}})]\).  
Because \(T_f\) is a maximum-weight \(k\)-vertex tree in \(\gcon[V(\gconf^{\mathrm{purple}})]\),
\begin{equation}
  \sum_{\mathfrak{f}\in V(T_f)} w(\mathfrak{f})
  \;\ge\; \sum_{v\in V(T^\star)} w(v)
  \;\ge\; \sum_{\mathfrak{s}\in S^\star} w(\mathfrak{s}) .
  \label{line:greatereq}
\end{equation}

Moreover, by \Cref{claim:weightneighbor} applied to \(S^\star\) and using \(|N_{\gcov}(S^\star)|\ge t\),
\begin{equation}
  \sum_{\mathfrak{s}\in S^\star} w(\mathfrak{s}) \;\ge\; \bigl|N_{\gcov}(S^\star)\bigr| \;\ge\; t .
  \label{line:optgreater}
\end{equation}

Substituting \eqref{line:greatereq} and \eqref{line:optgreater} into \eqref{eqn:towardslemma} yields
\begin{align*}
  \bigl|N_{\gcov}(V(T_f))\bigr|
  &\ge \sum_{\mathfrak{f}\in V(T_f)} w(\mathfrak{f}) - \varepsilon t \\
  &\ge \sum_{\mathfrak{s}\in S^\star} w(\mathfrak{s}) - \varepsilon t \\
  &\ge t - \varepsilon t
   \;=\; (1-\varepsilon)t .
\end{align*}
Thus \(S=V(T_f)\) satisfies \(|S|\le k\), \(\gcon[S]\) is connected, and \(|N_{\gcov}(S)|\ge (1-\varepsilon)t\).

Our algorithm outputs either \(\widetilde{T}\), the maximum-weight \(k\)-vertex subtree in
\(\gcon[V(\gconf^{\mathrm{purple}})]\) over all colorings \(f\in\cH\), or \(\bot\) if
\(w(\widetilde{T})<t\). Since there exists a good \(f\) with
\(w(T_f)\ge \sum_{\mathfrak{s}\in S^\star} w(\mathfrak{s}) \ge t\), we have
\(w(\widetilde{T}) \ge w(T_f) \ge t\), and thus the algorithm never returns \(\bot\).
Set \(S \coloneqq V(\widetilde{T})\). Because \(V(\widetilde{T})\subseteq V(\gconf^{\mathrm{purple}})\) and
\(|V(\widetilde{T})|=k\), by \Cref{lem:mainlemma} we obtain
\begin{align*}
  \bigl|N_{\gcov}(V(\widetilde{T}))\bigr|
  &\ge \sum_{\mathfrak{f}\in V(\widetilde{T})} w(\mathfrak{f}) - \varepsilon t \\
  &\ge \sum_{\mathfrak{f}\in V(T_f)} w(\mathfrak{f}) - \varepsilon t \\
  &\ge \sum_{\mathfrak{s}\in S^\star} w(\mathfrak{s}) - \varepsilon t \\
  &\ge t - \varepsilon t
   \;=\; (1-\varepsilon)t .
\end{align*}
Hence \(S\) satisfies the coverage guarantee.  Since our guarantee is conditioned on \yes-instances and we do not require bounds for the \no-instance case, this completes the description of the algorithm’s correctness and approximation analysis.

\paragraph*{Running time.}
\emph{Small-$t$ case.} When $t<\tfrac{k^{2}d}{\varepsilon}$ we invoke the exact routine of
\Cref{thm:fptbytpds}, which runs in time
\[
T_{\text{small}}=\os\!\bigl(2^{\cO(t)}\bigr)
\;\le\;
\os\!\bigl(2^{\cO(k^{2}d/\varepsilon)}\bigr).
\]

\emph{Large-$t$ case.} For $t\ge \tfrac{k^{2}d}{\varepsilon}$ we enumerate an $(n,p,q)$
lopsided-universal family $\cH$ (\Cref{prop:lopsidedUniversal}) with
$p=k$ and $q=k(d-1)\!\left(\tfrac{e k^{2}}{\varepsilon}\right)^{d}$, whose size satisfies
\(|\cH|\le \left(\tfrac{1}{\varepsilon}\right)^{\cO(dk\log k)}\log n\).
For each \(f\in\cH\), all preprocessing—forming \(\gconf^{\mathrm{purple}}\),
filtering to components of size \(\le k\), and computing the needed sparsifiers—is polynomial in \(n\).
The only exponential step per \(f\) is solving \textsc{Maximum-Weight \(k\)-Tree} in the host
\(\gcon[V(\gconf^{\mathrm{purple}})]\), which takes \(2^{\cO(k)}n^{\cO(1)}\) time
(\Cref{lem:maxweight}). Hence
\[
T_{\text{large}}
=\os\!\Bigl(|\cH|\cdot 2^{\cO(k)}\Bigr)
=\os\!\Bigl(\left(\tfrac{1}{\varepsilon}\right)^{\cO(dk\log k)} \cdot 2^{\cO(k)}\Bigr).
\]

\emph{Combined bound.} Taking the worse of the two regimes and hiding polynomial factors,
the overall running time is
\[
\os\!\bigl(2^{\cO(k^{2}d/\varepsilon)}\bigr),
\]
which matches the bound stated in \Cref{thm:kddfptapproxpds}. This conludes the proof. 
\end{proof}

\subsection{Extension to a Steiner version} \label{subsec:steinerext}
In this section, we design an {\sf EPAS} for a generalization of \pcrbdsshort problem, which we term \stpcrbdsfull (\stpcrbdsshort). While the primary reason for this is to use it as a subroutine in our {\sf PAS} for \pcrbdsshort in \Cref{sec:fptapproxfork}, we also believe that the problem is interesting in and of itself. The formal definition of the problem follows.
\begin{tcolorbox}[enhanced,title={\color{black} {\stpcrbdsfull(\stpcrbdsshort)}}, colback=white, boxrule=0.4pt,
    	attach boxed title to top center={xshift=-4mm, yshift*=-2.5mm},
    	boxed title style={size=small,frame hidden,colback=white}]
    	
    	\textbf{Input:} An instance $\cI = (\gcon, \gcov, T, k, t)$, where
	\begin{itemize}
         \item $\gcon = (R, E)$ is an arbitrary graph, called the \emph{\color{bblue}connectivity graph},
	\item $ \gcov= (R \uplus B, E')$ is a bipartite            graph, called the \emph{\color{bblue}coverage graph},
        \item a {\em terminal set} $T \subseteq R$, and
	\item $k$ and $t$ are non-negative integers. 
	\end{itemize}
    	\textbf{Question:} \hspace*{.06cm} Does there exist a vertex subset $S \subseteq R$ such that,
	\begin{enumerate}[label=(\arabic*)]
            \item \label{one} $\big|S\big| \le k$,
            \item \label{two} $T \subseteq S$,
		\item \label{three} $\gcon[S]$ is connected, and
		\item \label{four} $\big|N_{\gcov}(S)\big| \ge t$ ?
	\end{enumerate}
\end{tcolorbox}
Here, we sketch how to modify the \fptas from the previous section  to get an \fptas for \stpcrbdsshort. An algorithm for \stpcrbdsshort either returns a set $S \subseteq R$ of size at most $k$ such that $(i)~ T \subseteq S, (ii)~ \gcon[S]$ is connected and $(iii)~\big|N_{\gcov}(S)\big| \ge (1-\varepsilon)t$ or concludes that no set of size $k$ exists that satisfies \hyperref[two]{(2)}, \hyperref[three]{(3)}, and \hyperref[four]{(4)} (in which case it returns $\bot$).  

 
We note that by a slight alteration of the color coding-based reduction to {\sc Rooted Directed Steiner Out Tree} problem (cf.~\Cref{thm:fptbytpds} in \Cref{sec:fptpds}), we can design a randomized \fpt algorithm for \stpcrbdsshort that runs in time $\os(2^{(|T|+t} \cdot e^t)$; or a deterministic one in time $\os(2^{(|T|+t)} \cdot e^{t+o(t)})$. We sketch the idea, assuming familiarity with the construction of the digraph $D_f$ as described in \Cref{sec:fptpds}, given a coloring $f: B \to [t]$. We construct the digraph $D_f$ as described therein and make further modifications, as follows. For convenience, let us call the terminal set $Z$ in order to distinguish it from the terminals $T \subseteq R$ given in the input of \stpcrbdsshort. To this terminal set $Z$, we add additional vertices $\upsilon_1, \upsilon_2, \ldots, \upsilon_{|T|}$, and for each $1 \le i \le |T|$, we add a directed edge from a distinct $v_i \in T$ to $\upsilon_i$. Let us refer to the resulting instance as $(D_f, Z, p)$, where $p \coloneqq k + 2t + 2|T|$. Then, as in the algorithm corresponding to \Cref{thm:fptbytpds}, either we choose $e^t$ colorings $f$ independently, and run the $\os(2^{|Z|})$ algorithm for \rdsot from \cite{DBLP:journals/jco/MisraPRSS12}; or use $(n, t)$-perfect hash family of size $\os(e^{t+o(t)})$ for the deterministic variant. The correctness of this algorithm follows by closely mirroring the arguments from the proof of \Cref{thm:fptbytpds}, which we omit. We summarize this in the following proposition.

\begin{proposition} \label{prop:steinerbyt}
    \stpcrbdsshort can be solved deterministically in $\os((2)^{(|T|+t)} \cdot e^{t+o(t)})$; or using a randomized algorithm in time $\os((2)^{(|T|+t} \cdot e^t)$ that outputs the correct answer with probability at least $1-1/e$.
\end{proposition}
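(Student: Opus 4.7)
The plan is to carry out the modification sketched in the paragraph preceding the proposition and verify carefully that it implements the Steiner constraint $T \subseteq S$. Concretely, starting from an instance $\cI = (\gcon,\gcov,T,k,t)$ and a coloring $f: B \to [t]$, I will construct $D_f$ exactly as in the proof of \Cref{thm:fptbytpds} (Type~1 arcs mirroring $E(\gcon)$ in both directions, Type~2 arcs from $R$ to $B$ mirroring $E(\gcov)$, Type~3 arcs from each $b \in B_i$ to a fresh sink $\tau_i$), and then augment it with $|T|$ additional sinks $\upsilon_1,\ldots,\upsilon_{|T|}$, connected by one arc $(v_i,\upsilon_i)$ from the $i$-th terminal $v_i \in T$. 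Taking the terminal set of the new RDSOT instance to be $Z \coloneqq \{\tau_1,\ldots,\tau_t\} \cup \{\upsilon_1,\ldots,\upsilon_{|T|}\}$ and the budget $p \coloneqq k + 2t + 2|T|$, I will feed this into the $\os(2^{|Z|})$-time algorithm of \cite{DBLP:journals/jco/MisraPRSS12} for RDSOT.

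For the Yes direction, I will mimic Statement~1 of \Cref{thm:fptbytpds}: if $S \subseteq R$ is a witness containing $T$, connected in $\gcon$, and dominating at least $t$ blue vertices, and if $f$ is \emph{good}, meaning it assigns distinct colors to some $Q \subseteq N_{\gcov}(S)$ of size exactly $t$, then $S \cup Q \cup Z$ induces an out-tree: Type~1 arcs propagate directedness within the connected piece $S$, Type~2 arcs reach $Q$ from $S$, Type~3 arcs reach $\{\tau_i\}$ from $Q$, and the new $(v_i,\upsilon_i)$ arcs reach $\{\upsilon_i\}$ from $S$ (since $T \subseteq S$). Its vertex count is $|S| + |Q| + t + |T| \le k + 2t + |T| \le p$. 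For the No direction, I will adapt Statement~2: given any out-tree $T^\star$ with $V(Z) \subseteq V(T^\star)$, I peel off the sinks in $Z$ (all have in-degree $1$ and out-degree $0$), then peel off the resulting out-degree-$0$ blue vertices, leaving an out-tree $T_2$ on vertices of $R$ whose edges are all Type~1. The key new observation is that each $\upsilon_i$ has exactly one in-neighbor, namely $v_i \in T$, so every $v_i$ survives as a parent of $\upsilon_i$ in $T^\star$ and hence lies in $V(T_2)$; this guarantees $T \subseteq V(T_2)$, giving a genuine Steiner solution of size $\le k$.

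For the running-time bound, the randomized version independently samples $e^t$ colorings $f \colon B \to [t]$; a good $f$ exists with probability $\ge e^{-t}$ for each run (the analysis is exactly that of \Cref{thm:fptbytpds}), so union bounds over $e^t$ trials give overall success probability $\ge 1 - 1/e$, and each RDSOT call costs $\os(2^{|Z|}) = \os(2^{|T|+t})$, yielding the stated $\os((2)^{|T|+t} \cdot e^t)$ bound. For the deterministic version, I will replace random sampling with a $(|B|,t)$-perfect hash family $\mathcal{F}$ of size $e^{t+o(t)} \cdot \log|B|$ constructed as in \Cref{thm:fptbytpds} so that some $f \in \mathcal{F}$ is guaranteed to be good whenever $\cI$ is a \yes-instance; iterating over all $f \in \mathcal{F}$ gives the $\os((2)^{|T|+t} \cdot e^{t+o(t)})$ bound.

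The main obstacle, and the only place where care is genuinely needed, is the No direction of the reduction: one must be sure that the dummy arcs $(v_i,\upsilon_i)$ actually force $T$ into the extracted solution rather than merely adding spurious vertices. The peeling argument above handles this because each $\upsilon_i$ has a \emph{unique} in-neighbor in $D_f$, so any out-tree through $\upsilon_i$ must go through $v_i$; I expect this to be the critical invariant to state and verify explicitly in the full write-up. Everything else is bookkeeping on top of the machinery already developed in \Cref{thm:fptbytpds}.
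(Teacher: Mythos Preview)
Your approach is exactly the paper's: augment $D_f$ with fresh sinks $\upsilon_1,\ldots,\upsilon_{|T|}$, each reachable only via its designated $v_i \in T$, enlarge the RDSOT terminal set to $Z$, and rerun the argument of \Cref{thm:fptbytpds}. Your identification of the key invariant---that each $\upsilon_i$ has a \emph{unique} in-arc, forcing $v_i$ into any out-tree through $\upsilon_i$---is precisely the point, and your Yes-direction count and randomized/deterministic running-time analyses are correct.

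There is one arithmetic slip (inherited verbatim from the paper's sketch) that breaks the No direction as written. With budget $p = k + 2t + 2|T|$, peeling off $|Z| = t + |T|$ sinks and then the (at least) $t$ blue parents $b_i$ leaves $|V(T_2)| \le p - (t+|T|) - t = k + |T|$, not $\le k$ as you assert. The fix is to take $p = k + 2t + |T|$: your own Yes-direction bound $|S| + |Q| + t + |T| \le k + 2t + |T|$ already shows this suffices for the forward direction, and with it the No-direction peeling yields $|V(T_2)| \le k$ as required. With that single correction your argument is complete.
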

Using the above proposition, we can deterministically solve an instance of \stpcrbdsshort with $t < \frac{k^2d}{\varepsilon}$ in time $\os(2^{\Oh(k^2d/\varepsilon)})$.

Now we turn to the \fpt approximation in the case when $t \geq \frac{k^2d}{\varepsilon}$. 
Let $\cI = (\gcon, \gcov, T, k, t)$ be a \yes instance for \stpcrbdsshort. 
Let $S^\star \subseteq R$ be a set of size at most $k$ such that  
(1) $T \subseteq S^\star$,  
(2) $\gcon[S^\star]$ is connected, and  
(3) $\big|N_{\gcov}(S^\star)\big| \geq t \geq \tfrac{k^2d}{\varepsilon}$.  

\hyperref[stepone]{Step~1}, \hyperref[stepthree]{Step~3}, and \hyperref[stepfour]{Step~4} remain unchanged, since every graph we construct is derived from $\gcov$, which helps in approximating coverage. In these steps, the vertices of $T$ are treated in exactly the same way as any other vertices during the construction of the graphs. In \hyperref[steptwo]{Step~2}, we find a \emph{good separation} of the hypothetical solution $S^\star$ of \stpcrbdsshort, whose details remain unchanged.

The only step where the structure of the connectivity graph matters is \hyperref[stepfive]{Step~5}. 
Similar as before, a slight modification of the classic color-coding framework of Alon, Yuster, and Zwick~\cite[Thm.~6.3]{AlonYZ95} works. The modification proceeds as follows. First, observe that a good separation ensures that the terminal vertices are placed in distinct color classes, with each class containing at most one terminal from $T$. Moreover, if a color class contains a terminal vertex, we simply discard the other vertices in that class. We then apply the same dynamic programming routine as in the non-Steiner version. We omit the details and directly state the problem we solve and present the corresponding theorem.

\begin{description}
\item[\underline{\sc Steiner Maximum-Weight \(k\)-Tree}:]
Given a graph \(G\), a set of terminals $T \subseteq V(G)$, a weight function \(\gamma:V(G)\to\mathbb{Z}_{\ge 0}\), and an integer \(k\),
find a subtree \(T_G\subseteq G\) such that (i) $T \subseteq V(T_G)$, and (ii) \(|V(T_G)|\leq k\) maximizes
\(\gamma(T_G)\coloneqq \sum_{v\in V(T_G)} \gamma(v)\).
\end{description}

\begin{lemma}\label{lem:maxweightsteiner}
{\sc Steiner Maximum-Weight \(k\)-Tree} can be solved in time \(2^{\cO(k)}\,n^{\cO(1)}\).
\end{lemma}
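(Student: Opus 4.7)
The plan is to adapt the color-coding strategy underlying \Cref{thm:weightedcsi} and \Cref{lem:maxweight} so that it additionally enforces the Steiner requirement $T\subseteq V(T_G)$, without blowing up the running time beyond $2^{\cO(k)} n^{\cO(1)}$. First, if $|T|>k$, we immediately declare infeasibility. Otherwise, we use an $(n,k)$-perfect hash family $\mathcal{F}$ of size $2^{\cO(k)}\log n$ (constructible in the same time) that colors $V(G)$ with colors from $[k]$, with the guarantee that for every subset $S\subseteq V(G)$ with $|S|\le k$ some $\chi\in\mathcal{F}$ is injective on $S$. For each $\chi\in\mathcal{F}$, we first check that $\chi$ restricted to $T$ is injective; if not, we skip this coloring. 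Otherwise, for every color $c\in \chi(T)$ we discard all non-terminal vertices of color $c$, obtaining a filtered graph $G_\chi$ in which the unique vertex of color $c$ is exactly the terminal of $T$ with that color.

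On $G_\chi$ we run the standard colorful-subtree dynamic program: let $f(v,C)$ denote the maximum weight of a subtree of $G_\chi$ rooted at $v$ whose vertex set uses exactly the colors in $C$, where $\chi(v)\in C$. The base case is $f(v,\{\chi(v)\})=w(v)$, and the transition merges a child subtree through a neighbor: $f(v,C)=\max\bigl\{f(v,C_1)+f(u,C_2)\,:\, u\in N_G(v),\ C_1\uplus C_2=C,\ \chi(v)\in C_1,\ \chi(u)\in C_2\bigr\}$. This DP runs in $3^k\cdot n^{\cO(1)}$ time by standard subset-convolution-style iteration over $C_1\uplus C_2=C$. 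The value we read off for this coloring is $\max_{v\in V(G_\chi),\ C\supseteq \chi(T)} f(v,C)$, and we output the maximum of these values over all $\chi\in\mathcal{F}$ (together with the corresponding subtree, recovered by standard DP traceback).

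The correctness argument follows the usual color-coding template with one extra twist for the terminals. If an optimal Steiner $k$-tree $T_G^\star$ exists, then $|V(T_G^\star)|\le k$, so by the defining property of $\mathcal{F}$ some $\chi\in\mathcal{F}$ is injective on $V(T_G^\star)$; in particular the terminals, which are contained in $V(T_G^\star)$, receive pairwise distinct colors, hence $\chi$ survives the preliminary check. The filtering step removes no vertex of $V(T_G^\star)$, since terminals are kept and non-terminals in $V(T_G^\star)$ receive colors outside $\chi(T)$ by injectivity. Conversely, any subtree certified by $f(v,C)$ with $C\supseteq \chi(T)$ must contain every terminal, because in $G_\chi$ each terminal is the unique vertex of its color. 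Hence the DP value for the good coloring is at least the weight of $T_G^\star$, and any value reported by the algorithm corresponds to a genuine Steiner subtree of size at most $k$ in $G$.

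The only step I expect to require real care, as opposed to a transcription of \Cref{lem:maxweight}, is wiring the Steiner constraint into the final readout of the DP, i.e.\ enforcing $C\supseteq \chi(T)$ on the color sets considered, rather than on the DP transitions themselves; this preserves the clean subset-convolution structure and therefore the $3^k\cdot n^{\cO(1)}$ per-coloring bound. Multiplying by $|\mathcal{F}|=2^{\cO(k)}\log n$ yields the claimed $2^{\cO(k)}\cdot n^{\cO(1)}$ overall running time. (An alternative, essentially equivalent route is to enumerate all $2^{\cO(k)}$ non-isomorphic trees of size at most $k$ together with the $\binom{k}{|T|}\le 2^k$ choices of which nodes play the role of terminals, and invoke a terminal-constrained variant of \Cref{thm:weightedcsi}; both routes yield the same bound.)
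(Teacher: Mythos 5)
Your proposal is correct and follows essentially the same route as the paper: color-coding with a hash family, discarding non-terminal vertices whose color coincides with a terminal's color, and then running the standard colorful-subtree dynamic program (the paper only sketches this and leaves the readout condition $C\supseteq\chi(T)$ implicit, which you rightly make explicit). No substantive difference to report.
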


 For an \yes instance, recall that $S^\star$ is a set of size at most $k$  satisfying \hyperref[two]{(2)}, \hyperref[three]{(3)}, and \hyperref[four]{(4)}. Following the arguments verbatim in \Cref{thm:kddfptapproxpds}, one can see that a set $S$ of size at most $k$ such that $(i)~ T \subseteq S, (ii)~ \gcon[S]$ is connected and $(iii)~\big|N_{\gcov}(S)\big| \ge (1-\varepsilon)t$ or correctly returns $\bot$, implying no set of size $k$ exists that satisfies \hyperref[two]{(2)}, \hyperref[three]{(3)}, and \hyperref[four]{(4)}. Thus we obtain the following theorem.


\begin{theorem} \label{thm:steinerkddfptapproxpds}
Let $\cI=(\gcov,\gcon,T, k,t)$ be an instance of \stpcrbdsshort, where $\gcov$ and $\gcon$ are the
coverage and connectivity graphs, respectively. If $\gcov$ is \kddfree, then for every
$\varepsilon\in(0,1)$ there is an algorithm running in time
$\os\!\bigl(2^{\cO(k^{2}d/\varepsilon)}\bigr)$ that either
\begin{enumerate}[label=(\roman*),itemsep=0pt,leftmargin=*]
  \item outputs a set $S\subseteq R$ with $|S|\le k$, $T \subseteq S$, $\gcon[S]$ connected, and
        $|N_{\gcov}(S)|\ge (1-\varepsilon)t$, or
  \item correctly concludes that no size-$k$ set dominates at least $t$ vertices in $\gcov$.
\end{enumerate}
\end{theorem}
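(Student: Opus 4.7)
The plan is to mirror the proof of \Cref{thm:kddfptapproxpds} and only inject the terminal constraint at the two places where the argument interacts with $\gcon$: the random-separation step and the final tree-finding step. Concretely, I would first dispose of the small-$t$ regime: when $t<\tfrac{k^{2}d}{\varepsilon}$, invoke \Cref{prop:steinerbyt}. Since $|T|\le k$, the running time $\os\!\bigl(2^{(|T|+t)}\cdot e^{t+o(t)}\bigr)$ is bounded by $\os\!\bigl(2^{\cO(k^{2}d/\varepsilon)}\bigr)$, which matches the claimed bound and already produces an exact optimum (hence in particular a $(1-\varepsilon)$-approximate solution containing $T$).

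For the large-$t$ regime ($t\ge \tfrac{k^{2}d}{\varepsilon}$), assume a witness $S^\star\subseteq R$ with $T\subseteq S^\star$, $|S^\star|\le k$, $\gcon[S^\star]$ connected and $|N_{\gcov}(S^\star)|\ge t$. I would copy \hyperref[stepone]{Step~1}, \hyperref[stepthree]{Step~3}, and \hyperref[stepfour]{Step~4} verbatim, since these are defined entirely from $\gcov$ and do not see the terminals. In \hyperref[steptwo]{Step~2}, the lopsided-universal family with parameters $p=k$, $q=k(d-1)(ek^{2}/\varepsilon)^{d}$ already contains some coloring $f$ with $S^\star\subseteq \Vred$ and $N_{\gconf}(S^\star)\subseteq \Vblue$; because $T\subseteq S^\star$, every terminal is then purple and every connected component of $\gconf[S^\star]$ (necessarily of size $\le k$) survives the component-size filter. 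In particular $T\subseteq V(\gconf^{\mathrm{purple}})$ for this good $f$, and $S^\star$ remains component-respecting with respect to $\gconf^{\mathrm{purple}}$, which is exactly the hypothesis under which \Cref{claim:weightneighbor} gives $\sum_{\mathfrak{s}\in S^\star}w(\mathfrak{s})\ge t$.

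The only genuine modification is in \hyperref[stepfive]{Step~5}. Instead of \textsc{Maximum-Weight $k$-Tree}, I solve \textsc{Steiner Maximum-Weight $k$-Tree} on $\gcon[V(\gconf^{\mathrm{purple}})]$ with terminal set $T$, weights $w$, and budget $k$, obtaining a subtree $T_{f}$ with $T\subseteq V(T_{f})$, $|V(T_{f})|\le k$, and maximum weight. \Cref{lem:maxweightsteiner} guarantees this can be done in $2^{\cO(k)}n^{\cO(1)}$ time (essentially the same color-coding dynamic program, where states are enlarged to record which terminals have been incorporated, or equivalently, color classes containing a terminal are forced to pick that terminal). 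Since any spanning tree of $\gcon[S^\star]$ is a feasible Steiner tree of weight at least $\sum_{\mathfrak{s}\in S^\star}w(\mathfrak{s})\ge t$, we get $w(T_{f})\ge t$. Then \Cref{lem:mainlemma} applied to $V(T_{f})$ yields $|N_{\gcov}(V(T_{f}))|\ge w(T_{f})-\varepsilon t\ge (1-\varepsilon)t$. As in \Cref{thm:kddfptapproxpds}, I output the best $T_{f}$ over all $f\in\cH$, or $\bot$ if the best weight is below $t$; the no-instance direction follows from the contrapositive, since $\bot$ is returned only when no $f$ produces a feasible weighted Steiner tree of weight $\ge t$.

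I expect the main conceptual care to be in verifying that the \textbf{good coloring} simultaneously preserves the terminal set and the component-respecting structure of $S^\star$, so that Lemmas~\ref{lem:mainlemma} and~\ref{claim:weightneighbor} transfer unchanged; everything else is mechanical. The running time is dominated by $|\cH|\cdot 2^{\cO(k)}n^{\cO(1)}=\os\!\bigl(2^{\cO(k^{2}d/\varepsilon)}\bigr)$, identical to the non-Steiner case, since replacing the tree-finding subroutine by its Steiner variant does not worsen the $k$-dependence.
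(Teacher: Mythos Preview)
Your proposal is correct and follows essentially the same approach as the paper: both handle small $t$ via \Cref{prop:steinerbyt}, leave Steps~1--4 unchanged (the terminal set is automatically preserved by the good coloring since $T\subseteq S^\star$), and replace the tree-finding subroutine in Step~5 by the Steiner variant of \Cref{lem:maxweightsteiner}, with the correctness argument via \Cref{lem:mainlemma} and \Cref{claim:weightneighbor} carried over verbatim.
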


When we state that we apply \Cref{thm:steinerkddfptapproxpds} with the input $\lr{\gcon, \gcov, T, k, t, \varepsilon}$, we refer to the instance $\cI = \lr{\gcon, \gcov, T, k, t}$ of the algorithm for \stpcrbdsshort, augmented with a fixed parameter $\varepsilon \in (0,1)$.

\section{$(1+\varepsilon,1)$-approximation for \pcrbdsshort} \label{sec:fptapproxfork}

In this section, we present a {\sf PAS} for \pcrbdsshort. In particular, we obtain a $(1+\varepsilon,1)$-bicriteria approximation for the problem with running time $2^{\Oh(kd(k^2 + \log d))} \cdot|\cI|^{\cO(1/\varepsilon)}$ when the coverage graph $\gcov$ is \kddfree. Unlike the previous section, which approximated the coverage value, here we approximate the solution size. More specifically, for \pcrbdsshort\ we seek a solution of size at most $(1+\varepsilon)k$ that covers at least $t$ edges.


Our new algorithm uses a modified form of \Cref{alg:conprbdspart}, described in \Cref{subsec:steinerext} as a subroutine. We first introduce some definitions.

\begin{definition}[ball of radius $r$]
    Let $r$ be a positive integer. For a graph $G$ and a vertex $v \in V(G)$, we define the \emph{ball of radius $r$ around $v$}, denoted as $\ball^r_{G}[v]$, to be the set of all vertices that are at most distance $r$ from $v$ in $G$. Formally, $\ball^r_{G}[v] \coloneqq \lrc{ u : \dist_{G}(u, v) \leq r }$, where $\dist_{G}(u, v)$ represents the shortest path distance between $u$ and $v$ in $G$, measured in terms of the number of edges. 
\end{definition}
By convention, a vertex is at distance zero from itself, i.e., $\ball^0_{G}[v] = \{ v \}$. We generalize this definition for a subset of vertices $X \subseteq V(G)$, as follows: 
$$\ball^r_{G}[X] \coloneqq \bigcup\limits_{v \in X} \ball^r_{G}[v]$$ 

Finally, we define \textit{$\cover$ of a graph} as follows.

\begin{definition}[$\cover$ of a graph]
    For a graph $G$, we define $\cover_r(G)$ to be the minimum-sized set of vertices $X \subseteq V(G)$ such that $\ball^r_{G}[X] = V(G)$.
\end{definition}
 
 In the next section, we show that if the graph is a tree, say $T$ on $n$ vertices, then the size of $\cover_r(T)$ is approximately bounded above by $\frac{n}{r}$, that is, $\big|\cover_r(T)\big| \lesssim \frac{n}{r}$.

\subsection{Tree Cover and Solution Covering Lemma}

In this section, we first prove that for a tree $T$ on $n$ vertices and a fixed integer $r$, the size of $\cover_r(T)$
is approximately bounded above by $\frac{n}{r}$, that is, $\big|\cover_r(T)\big| \lesssim \frac{n}{r}$.
This result, in turn, provides an upper bound on the cover of the solution subgraph 
for \pcrbdsshort. 

Before proceeding with the proof, we record the following simple observation about trees. 
Since it follows directly from basic properties of trees, we omit the proof. 

\begin{observation}\label{obs:tree_prop}  
    Let $T$ be a tree rooted at $r$, and let $T'$ be a subtree of $T$ rooted at $r'$. 
For any vertex $v \in V(T) \cap V(T')$, let $P_{rv}$ denote the unique path from $r$ to $v$ in $T$, 
and let $P_{r'v}$ denote the unique path from $r'$ to $v$ in $T'$. Then $P_{r'v}$ is a subpath of $P_{rv}$.
\end{observation}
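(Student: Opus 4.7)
The plan is to reduce the observation to two elementary facts about trees: (i) between any two vertices of a tree there is a unique path, and (ii) in a rooted tree the unique path from the root to a vertex $v$ is exactly the sequence of ancestors of $v$. First I would pin down the convention implicitly used in the statement: calling $T'$ a subtree of $T$ rooted at $r'$ means that $T'$ is a connected subgraph of $T$ whose vertex set consists of $r'$ together with a set of descendants of $r'$ in the rooted tree $T$, with the edge set inherited from $T$. Under this reading, $r'$ is an ancestor (in $T$) of every vertex of $V(T')$.

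Next, I would fix an arbitrary $v \in V(T) \cap V(T') = V(T')$ and use fact (ii) to argue that $r'$ lies on $P_{rv}$: since $r'$ is an ancestor of $v$ in $T$, it appears on the ancestor chain from $r$ to $v$, which is exactly $P_{rv}$. Consequently, $P_{rv}$ splits at $r'$ into two pieces, $P_{rv} = P_{rr'} \circ Q$, where $Q$ is the unique path from $r'$ to $v$ inside $T$.

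Finally, I would invoke fact (i) to identify $Q$ with $P_{r'v}$. Because $T' \subseteq T$, the path $P_{r'v}$ taken in $T'$ is also a path from $r'$ to $v$ in $T$; by uniqueness of paths in the tree $T$, it must coincide with $Q$. Therefore $P_{r'v} = Q$ is precisely the tail segment of $P_{rv}$ starting at $r'$, which is exactly what it means for $P_{r'v}$ to be a subpath of $P_{rv}$.

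The only real subtlety is fixing the convention on ``subtree rooted at $r'$''; once that is made explicit, the remainder is a two-step application of path-uniqueness in trees, with no genuine combinatorial obstacle.
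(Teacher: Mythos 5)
The paper explicitly omits a proof of this observation (it is stated as following ``directly from basic properties of trees''), so there is nothing to compare against line by line; your argument is exactly the standard one the authors had in mind. Your proof is correct: under the convention actually used later in the paper (in the proof of \Cref{lem:treecover}, the subtree $T_x$ rooted at $x$ consists of $x$ together with its descendants), $r'$ is an ancestor of every $v\in V(T')$, hence lies on the ancestor chain $P_{rv}$, and uniqueness of paths in $T$ identifies $P_{r'v}$ with the suffix of $P_{rv}$ starting at $r'$. Making the ``subtree rooted at $r'$'' convention explicit, as you do, is the only point of care here.
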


\begin{lemma} \label{lem:treecover}
    Let $T$ be a tree on $n$ vertices. For any $q \in \mathbb{N}$,  $\big|\cover_q(T)\big| \leq \big\lceil \frac{n}{q+1} \big\rceil$.
\end{lemma}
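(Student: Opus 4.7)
The plan is to induct on $n$. For the base case $n \leq q+1$, the tree $T$ has diameter at most $n-1 \leq q$, so any single vertex is a center whose radius-$q$ ball covers all of $V(T)$; hence $|\cover_q(T)| \leq 1 = \lceil n/(q+1) \rceil$. For the inductive step with $n > q+1$, the strategy is to peel off a subtree of size at least $q+1$ that is entirely coverable by one center, and then recurse on what remains. I would root $T$ at an arbitrary vertex and let $v \in V(T)$ be a vertex of maximum depth. If $\text{depth}(v) < q$, then the root alone covers everything. Otherwise, I would take $a$ to be the ancestor of $v$ at distance exactly $q$, and consider the subtree $T_a$ of $T$ rooted at $a$.

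The key claim is that $V(T_a) \subseteq \ball^q_{T}[a]$, which ensures that placing a single center at $a$ covers the entire subtree. This will follow from the maximality of $\text{depth}(v)$: for any descendant $u$ of $a$, \Cref{obs:tree_prop} gives $\dist_T(a,u) = \text{depth}(u) - \text{depth}(a)$, which is at most $\text{depth}(v) - \text{depth}(a) = q$, because otherwise $u$ would have strictly greater depth than $v$. Moreover $|V(T_a)| \geq q+1$, since the $a$-to-$v$ path alone contributes $q+1$ distinct vertices to $T_a$.

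I would then apply the induction hypothesis to $T' \coloneqq T - V(T_a)$, which is still a tree (removing a subtree hanging off $a$'s parent leaves the remainder intact) on $n' \coloneqq n - |V(T_a)| \leq n - (q+1)$ vertices. This yields a $q$-cover $X'$ of $T'$ of size at most $\lceil n'/(q+1) \rceil \leq \lceil n/(q+1) \rceil - 1$. Since distances in the subtree $T'$ can only be at least as large as distances in $T$, the set $X' \cup \{a\}$ is a valid $q$-cover of $T$ of size at most $\lceil n/(q+1) \rceil$, closing the induction. I do not foresee a serious obstacle; the only step requiring care is verifying the ``shallow subtree'' claim $V(T_a)\subseteq \ball^q_T[a]$, which is exactly the place where the choice of $v$ as a deepest vertex is used crucially.
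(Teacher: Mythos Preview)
Your proof is correct and follows essentially the same approach as the paper: induct on $n$, root the tree, take a deepest vertex $v$, peel off the subtree rooted at its ancestor $a$ at distance $q$, show $V(T_a)\subseteq\ball^q_T[a]$ via depth-maximality, and recurse on the remainder. One minor boundary fix: the case split should read $\text{depth}(v)\le q$ (not $< q$), so that in the ``otherwise'' branch $a$ is never the root and $T'$ is genuinely a nonempty tree---this matches the paper's assumption $\dist(r,\ell)>q$ and avoids having to apply the induction hypothesis to the empty graph.
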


\begin{proof}
    We prove the statement by induction on $n$. First, notice that we can assume $n > q$, since otherwise any arbitrary vertex forms a valid $\cover_q(T)$. 
Root $T$ at an arbitrary vertex $r \in V(T)$, and let $\ell$ be a leaf farthest from the root, i.e., for all $v \in V(T)$, we have 
$\dist(r, \ell) \geq \dist(r,v)$.\footnote{Throughout this proof, all distances are measured in the tree $T$, and hence we omit the subscript from $\dist_T(\cdot, \cdot)$.} 

If $\dist(r,\ell) \leq q$, then $\big|\cover_q(T)\big| = 1$, since $r$ lies within distance $q$ of every vertex in $T$, which implies that $\{r\}$ is a valid $\cover_q(T)$. 
Now consider the case when $\dist(r, \ell) > q$. In the base case when $n = 2$, the tree $T$ is a single edge with two endpoints. In this case, $\big|\cover_q\big(T)| = 1$ (either endpoint of the edge).

    Now assume the statement holds for every $2 \leq n' < n$, and we prove it for $n$.
Let $P_{r\ell}$ denote the (unique) path in $T$ from the root $r$ to the farthest leaf $\ell$.
Let $x \in V(T)$ be the vertex on $P_{r\ell}$ at distance exactly $q$ from $\ell$; such a vertex exists because $\dist(r,\ell) > q$.
Let $T_x$ be the subtree of $T$ rooted at $x$, with $V(T_x) = \{\, v \in V(T) : v \text{ is a descendant of } x \,\}$.
We claim that every vertex of $T_x$ lies within distance $q$ of $x$.

    \begin{claim}
          $V(T_x) \subseteq \ball^q_{T}[x]$.
    \end{claim}
    \begin{proof}
        Assume towards contradiction that there exists a vertex $y \in V(T_x)$ such that $\dist(x,y) > q$. Recall that $\dist(x,\ell)= q$. Let $P_{r\ell}$ and $P_{ry}$ denote the paths from $r$ to $\ell$ and $r$ to $y$ in $T$ respectively. Similarly let $P_{x\ell}$ and $P_{xy}$ denote the paths from $x$ to $\ell$ and $x$ to $y$ in $T_x$ respectively. By Observation \ref{obs:tree_prop}, $P_{x\ell}$ is a subpath of $P_{r\ell}$ and likewise $P_{xy}$ is a subpath of $P_{ry}$. Moreover, the path $P_{rx}$ from $r$ to $x$ is a common subpath of both $P_{r\ell}$ and $P_{ry}$. Since $\dist(x,y) = \big|P_{xy}\big| > q$ while $\dist(x,\ell) = \big|P_{x\ell}\big| = q$, it follows that $\big|P_{ry}\big| > \big|P_{r\ell}\big|$, contradicting the assumption that $\ell$ is the leaf farthest from the root $r$ in $T$. Hence every vertex $v \in V(T_x)$ is at distance at most $q$ from $x$ in $T$, i.e. $V(T_x) \subseteq \ball^q_{T}[x]$.
    \end{proof}
Let $T' = T - T_x$ be the graph obtained from $T$ by deleting the vertices of the subtree $T_x$.  
This directly leads to the following observations:  
(i) $T'$ is a subtree of $T$, and  
(ii) $|V(T_x)| \geq (q+1)$.  

Since $|V(T')| \leq n - (q+1) < n$, by the induction hypothesis we have  
$$
|\cover_q(T')| \leq \left\lceil \frac{n - (q+1)}{q+1} \right\rceil 
= \left\lceil \frac{n}{q+1} - 1 \right\rceil 
= \left\lceil \frac{n}{q+1} \right\rceil - 1.
$$  
Since $\cover_q(T') \cup \{x\}$ is a feasible solution for $\cover_q(T)$, it follows that  
$$
|\cover_q(T)| \leq |\cover_q(T')| + 1 
= \left( \left\lceil \frac{n}{q+1} \right\rceil - 1 \right) + 1 
= \left\lceil \frac{n}{q+1} \right\rceil.
$$
\end{proof}

 Let $\cI = (\gcon, \gcov, k, t)$ be an \yes instance for \pcrbdsshort. That is, there exists a solution $S^\star \subseteq R$ such that $\big|S^\star\big| \leq k$, $\gcon[S^\star]$
is connected, and $\big|N_{\gcov}(S^\star)\big| \geq t$.  
Since the subgraph $\gcon[S^\star]$ is connected, it contains a spanning tree $T^\star$.
The following lemma then follows directly from \Cref{lem:treecover}.

\begin{lemma}[Solution Covering Lemma] \label{lem:solutioncovering}
Let $\cI = (\gcon, \gcov, k, t)$ be a \yes instance for \pcrbdsshort and let $r$ be a non-negative integer. Furthermore, let $S^\star \subseteq V\lr{\gcon} = R$ be a solution of size at most $k$ satisfying $(1)~\gcon[S^\star]$ is connected and $(2)~\big| N_{\gcov}(S^\star)\big| \geq t$. Let $T^\star$ denote a spanning tree induced by the vertices of $S^\star$. Then $$\cover_{r}\lr{\gcon[S^\star]} \leq \cover_{r}\lr{T^\star} \leq \left\lceil\frac{k}{r+1}\right\rceil \leq \left\lceil\frac{k}{r}\right\rceil.$$
\end{lemma}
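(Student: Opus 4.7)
The plan is to derive the Solution Covering Lemma as an almost immediate consequence of \Cref{lem:treecover} combined with a monotonicity argument that passes from the tree $T^\star$ to the (possibly denser) induced subgraph $\gcon[S^\star]$. There are effectively three inequalities to prove, and only the middle one requires real work, which is already done in \Cref{lem:treecover}.

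First I would handle the left-most inequality $\cover_r(\gcon[S^\star]) \leq \cover_r(T^\star)$. The key observation is that $T^\star$ is a spanning subgraph of $\gcon[S^\star]$ on the vertex set $S^\star$, so for every pair $u,v \in S^\star$ we have $\dist_{\gcon[S^\star]}(u,v) \leq \dist_{T^\star}(u,v)$, since any $u$--$v$ path in $T^\star$ is also a path in $\gcon[S^\star]$. Consequently, if $X \subseteq S^\star$ satisfies $\ball^r_{T^\star}[X] = S^\star$, then $\ball^r_{\gcon[S^\star]}[X] \supseteq \ball^r_{T^\star}[X] = S^\star$, so every $r$-cover of $T^\star$ is also an $r$-cover of $\gcon[S^\star]$. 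Taking a minimum-sized such $X$ gives the desired inequality.

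Next, for the middle inequality $\cover_r(T^\star) \leq \lceil k/(r+1) \rceil$, I would simply apply \Cref{lem:treecover} to the tree $T^\star$, which has $|V(T^\star)| = |S^\star| \leq k$ vertices, with the parameter $q = r$. This yields $\cover_r(T^\star) \leq \lceil |V(T^\star)|/(r+1) \rceil \leq \lceil k/(r+1) \rceil$, where the last step uses the monotonicity of $\lceil \cdot \rceil$. Strictly speaking, \Cref{lem:treecover} is stated for a tree on exactly $n$ vertices, but the bound trivially extends since covers of a subtree can only get smaller as we remove vertices (or one can pad $T^\star$ with dummy vertices, which is unnecessary given the ceiling).

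Finally, the right-most inequality $\lceil k/(r+1) \rceil \leq \lceil k/r \rceil$ is immediate from $1/(r+1) \leq 1/r$ for $r \geq 1$ (the case $r = 0$ can be handled separately, or simply excluded as the statement is trivial when $r = 0$ since then both sides equal $k$). I do not foresee any real obstacle in this proof: the heavy lifting is entirely contained in \Cref{lem:treecover}, and the lemma above is essentially a corollary that packages this bound into the form needed for the $(1+\varepsilon, 1)$-approximation algorithm in \Cref{sec:fptapproxfork}, where it will be used to ``guess'' a small set of vertices that serves as a connectivity seed for $S^\star$.
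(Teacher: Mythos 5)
Your proposal is correct and matches the paper, which states this lemma without a written proof as following ``directly'' from \Cref{lem:treecover}; your three-step decomposition (monotonicity of covers under adding edges, application of the tree cover lemma with $q=r$ and $n=|S^\star|\le k$, and the ceiling inequality) is exactly the intended argument. The only nitpick is that for $r=0$ the rightmost expression $\lceil k/r\rceil$ is undefined rather than equal to $k$, but as you note this degenerate case is irrelevant to how the lemma is used.
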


\subsection{Algorithm}

In this section, we first present our algorithm and subsequently analyze its approximation guarantee and running time. For our algorithm, we require the following result from Jain et al.~\cite{DBLP:conf/soda/0001KPSS0U23}.

\begin{lemma}{\rm \cite[Lemma~6.1]{DBLP:conf/soda/0001KPSS0U23}} \label{lem:addsol}
Let $G$ be a \kddfree bipartite graph with $V(G) = A \uplus B$, and let $\ell \leq k$ and $t'$ be two positive integers. Let $S'\subseteq A$ be an $\ell$ sized set such that $~\big|N(S')\big|\geq t'(1-\frac{1}{4\ell})$, where $t'\geq 4\ell^2d$,  and $H$ be a set of $\ell (d-1)(\ell^2)^{d-1}+1$ highest degree vertices from $A$. For any $S\subseteq A$ of size $\ell$ with $|N(S)|\geq t'$, either $S\cap H\neq\emptyset$ or there exists a vertex $ x \in H$ such that $|N(\{x\}\cup S')| \geq t'$.  
\end{lemma}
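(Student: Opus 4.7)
The plan is to prove the contrapositive: assuming $S \cap H = \emptyset$, we will exhibit some $x \in H$ with $|N(\{x\} \cup S')| \geq t'$. Set $D := t' - |N(S')|$; by hypothesis $D \leq t'/(4\ell)$. If $D \leq 0$ then $S'$ already covers $t'$ vertices and any $x \in H$ works, so assume $D > 0$. Since $|N(\{x\}\cup S')| = |N(S')| + |N(x)\setminus N(S')|$, it suffices to locate an $x \in H$ with $|N(x) \setminus N(S')| \geq D$.

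The first step is a pigeonhole argument that exposes a ``witness'' vertex of low-ish degree whose neighbors are largely outside $N(S')$. Indeed, $|N(S) \setminus N(S')| \geq |N(S)| - |N(S')| \geq t' - |N(S')| = D$, so by averaging there exists $s \in S$ with $|N(s) \setminus N(S')| \geq D/\ell$, hence $|N(s)| \geq D/\ell$. Because $s \notin H$ and $H$ collects the $\ell(d-1)(\ell^{2})^{d-1}+1$ highest-degree vertices of $A$, every $h \in H$ satisfies $|N(h)| \geq |N(s)| \geq D/\ell$.

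The second step is a double-counting argument that exploits $K_{d,d}$-freeness. Suppose for contradiction that every $h \in H$ has $|N(h) \setminus N(S')| < D$, so $|N(h) \cap N(S')| > |N(h)| - D$. I would count pairs $(h,Y)$ with $h \in H$ and $Y \in \binom{N(h) \cap N(S')}{d}$: each such $h$ contributes at least $\binom{|N(h)| - D}{d}$ pairs, so the total is at least $|H| \cdot \binom{|N(h)| - D}{d}$. On the other hand, $K_{d,d}$-freeness implies that each $d$-subset $Y \subseteq N(S')$ lies in the neighborhood of at most $d-1$ vertices of $A$, bounding the count above by $(d-1)\binom{|N(S')|}{d} \leq (d-1)\binom{t'}{d}$.

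The closing step is a numerical estimate that contradicts the size $|H| = \ell(d-1)(\ell^{2})^{d-1}+1$. Using $|N(h)| - D \geq |N(h)|/2 \geq t'/(2\ell^{2})$ (which is where the hypothesis $t' \geq 4\ell^{2}d$ earns its keep, by guaranteeing both sides are large and non-trivial relative to $d$), one simplifies $\binom{t'}{d}/\binom{|N(h)|-D}{d} \leq (\ell^{2})^{d-1}\cdot(d-1)^{-1}$ up to the chosen slack, yielding $|H| \leq \ell(d-1)(\ell^{2})^{d-1}$, one short of the assumed size. The main obstacle will be choreographing these estimates so that the slack provided by $t' \geq 4\ell^{2}d$ is enough to absorb both the $D/\ell$ loss in the pigeonhole step and the $-D$ correction inside the binomial coefficients; once the arithmetic is tuned, the resulting contradiction with $K_{d,d}$-freeness closes the argument.
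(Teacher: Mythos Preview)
The paper does not include its own proof of this lemma; it is imported verbatim from \cite{DBLP:conf/soda/0001KPSS0U23} and used as a black box. Your architecture---extract a degree lower bound on $H$ from $S$, then double-count $K_{1,d}$'s centered in $H$ with leaves in $N(S')$ and invoke $K_{d,d}$-freeness---is the right one, and it matches the technique the paper \emph{does} reproduce for the related \Cref{lem:maxdegkddfreepds}.

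There is, however, a genuine gap in your pigeonhole step. From $|N(s)\setminus N(S')|\ge D/\ell$ you conclude only $|N(s)|\ge D/\ell$, hence $|N(h)|\ge D/\ell$ for all $h\in H$. But $D$ is merely \emph{upper}-bounded by $t'/(4\ell)$; it could be as small as $1$. Your later chain ``$|N(h)|-D \ge |N(h)|/2 \ge t'/(2\ell^{2})$'' needs $|N(h)|\ge 2D$, which $|N(h)|\ge D/\ell$ does not give (indeed $D/\ell\le D<2D$). As written, the lower bound on $|N(h)\cap N(S')|$ collapses and the double count has no teeth.

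The repair is to average on a different quantity: since $|N(S)|\ge t'$ and $|S|=\ell$, some $s\in S$ has $|N(s)|\ge t'/\ell$; because $S\cap H=\emptyset$, every $h\in H$ then has $|N(h)|\ge t'/\ell$. Now $D\le t'/(4\ell)\le |N(h)|/4$, so $|N(h)\cap N(S')|> |N(h)|-D\ge 3t'/(4\ell)$, and your double count yields
\[
|H|\ \le\ (d-1)\,\binom{|N(S')|}{d}\Big/\binom{3t'/(4\ell)}{d}.
\]
From here the approach is sound, but be warned that squeezing this ratio under the \emph{specific} value $\ell(d-1)(\ell^{2})^{d-1}$ for all $\ell,d$ is more delicate than your sketch suggests; the crude estimate $(2\ell)^{d}$ is not enough for small $\ell$, and one either needs a sharper bound on the binomial ratio (using $t'\ge 4\ell^{2}d$ more carefully) or to route the count through an individual $s'\in S'$ via a second pigeonhole, as is done in the proof of \Cref{lem:maxdegkddfreepds}.
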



\begin{sloppypar}
Let \( \cI= \lr{\gcon, \gcov, k, t}\) be an \yes instance for \pcrbdsshort where $\gcov$ is 
\kddfree. Let \( S^\star \) be a set of \( k \) vertices such that $ (1)~ \gcon[S^\star]$ is connected, $(2)~\big|N_{\gcov}(S)\big| \geq t$. Now we provide a detailed description of our algorithm for \pcrbdsshort followed by the pseudocode of our algorithm.

\medskip
\noindent
\textbf{Description of \Cref{alg:addconprbdspark} and \Cref{alg:extaddconprbdspark}:} \Cref{alg:addconprbdspark} is an outer algorithm that, after making certain 
guesses, calls the subroutine \Cref{alg:extaddconprbdspark}, which serves as the 
main workhorse of our approach. Specifically, \Cref{alg:addconprbdspark} begins 
by enumerating a subset \(C\) of size at most \(\lceil 1/\varepsilon \rceil\) such that  $S^\star \subseteq \ball^{\varepsilon k}_{\gcon}[C] \subseteq V(\gcon)$. Essentially $C$ is a cover of $S^\star$. The existence of such a set $C$ is guaranteed by \Cref{lem:solutioncovering}, by setting the value of the radius, $r \coloneqq  \varepsilon k$. Consider $\widehat{R} \coloneqq \ball^{\varepsilon k}_{\gcon}[C]$.


 For every subset $C \subseteq R$, we invoke \Cref{alg:extaddconprbdspark}~($\extalgpas$) with input
\[
\lr{\,\hcon=\gcon\lrsq{\widehat{R}},\ \hcov=\gcov\lrsq{\widehat{R}\uplus N_{\gcov}\!\lr{\widehat{R}}},\ C,\ k,\ t\,}
\]
at \Cref{line:recursivecall}. The call returns either a set $S$ of size at most $(1+\varepsilon)k$ satisfying
\begin{enumerate}
\setlength{\itemsep}{-2pt}
  \item\label{prop:connNew} $\gcon\lrsq{S}$ is connected,
  \item\label{prop:subset} $C \subseteq S$, and
  \item\label{prop:covera} $\bigl|N_{\gcov}(S)\bigr| \ge t$,
\end{enumerate}
or correctly reports that no size-$k$ solution exists satisfying \Cref{prop:connNew,prop:subset,prop:covera}.

\medskip

\noindent
\Cref{alg:extaddconprbdspark} proceeds as follows. It first tests whether there exists a size-$k$ set $S_0 \supseteq C$ with $\gcon\lrsq{S_0}$ connected and
\[
\bigl|N_{\gcov}(S_0)\bigr| \ge (1-\delta)\,t,
\]
for a carefully chosen constant $\delta=\delta(k)$. By the guarantees of the subroutine, either such an $S_0$ is found or we correctly conclude that no size-$k$ solution exists satisfying \cref{prop:connNew,prop:subset,prop:covera}; in the latter case we return $\bot$.

Assume now that a size-$k$ solution ${\sf O}$ satisfying \Cref{prop:connNew,prop:subset,prop:covera} exists. We apply a combinatorial  Lemma (\Cref{lem:addsol}): let $H$ be the set of the top $f(k,\delta)$ vertices of $\widehat{R}$ by degree in $\hcov$. Then there is $w\in H$ such that either
\[
\bigl|N_{\gcov}(S_0 \cup \{w\})\bigr| \ge t
\quad\text{or}\quad
w \in {\sf O}.
\]
In the first outcome, since every vertex of $\hcon$ is within distance at most $\varepsilon k$ of some vertex in $C$, adding a path in $\hcon$ from $w$ to $S_0$ yields a connected solution of size at most $(1+\varepsilon)k$ that covers at least $t$ vertices. In the second outcome, we branch on candidates in $H$: for each $v \in H$, we recurse with parameters
\[
\lr{\,\hcon=\gcon\lrsq{\widehat{R}},\ \hcov=\gcov\lrsq{\widehat{R} \uplus N_{\gcov}\!\lr{\widehat{R}}},\ C\cup\{v\},\ k,\ t\,}.
\]

To test whether there exists a size-$k$ set $S_0 \supseteq C$ with $\gcon\lrsq{S_0}$ connected and
\[
\bigl|N_{\gcov}(S_0)\bigr| \ge (1-\delta)\,t,
\]
we invoke a routine that enforces inclusion of $C$ in the returned solution. This matches the specification of \stpcrbdsshort, where the solution must contain a designated terminal set. We instantiate its terminal set with $C$. A feasible output $S_0$ then certifies the test, while infeasibility certifies that no size-$k$ superset of $C$ satisfies the above connectivity and coverage requirements.

We now formalize the above intuition as the steps of \Cref{alg:extaddconprbdspark}~($\extalgpas$). After the sanity checks and base cases, at \Cref{line:steiner} we invoke the
\stpcrbdsshort algorithm guaranteed by \Cref{thm:steinerkddfptapproxpds} on the instance
\[
\lr{\hcon,\ \hcov,\ X,\ k,\ \delta=\tfrac{1}{4k}\,}.
\]
If the call returns a feasible solution $S_0$, we analyze it using \Cref{lem:addsol};
otherwise it returns $\bot$.

 \end{sloppypar}




\begin{algorithm}[ht!]
\caption{\algpas \label{alg:addconprbdspark}}
\KwInput{Connectivity graph $\gcon = (R, E)$, coverage graph $\gcov = (R \uplus B, E')$ which is \kddfree,  integers $k, t, \varepsilon \in (0,1)$}
\KwOutput{A set $S \subseteq R$ of size at most $(1+\varepsilon)k$ such that 
$(i)~\gcon[S]$ is connected, 
$(ii)~\big|N_{\gcov}(S)\big| \geq t$ or returns $\bot$ \blue{\tcp*{\footnotesize{$\bot$ indicates no such set of size $k$ can dominate $t$ vertices.}}}}
  Let $\cR$ be the family of $\big \lceil\frac{1}{\varepsilon} \big\rceil$ sized subsets of $R$, that is, $\cR = \lrc{C : C \subseteq R, \big|C \big| = \big\lceil \frac{1}{\varepsilon} \big\rceil}$\label{line:enumerate}\\
  \ForEach{$C \in \cR$}{
     Let $\widehat{R} \coloneqq \ball^{\varepsilon k}_{\gcon}[C]$\\ 
     \uIf{$\gcon[\widehat{R}]$ is connected}{
        Let $S_C \gets \extalgpas\lr{\gcon[\widehat{R}], \gcov\lrsq{\widehat{R} \uplus N_{\gcov}(\widehat{R})}, C,  k ,t}$\label{line:recursivecall}\\
     }
     \Else{
        Let $S_C \coloneqq \bot$
     }
    }
    \uIf{for all $C \in \cR$, $S_C = \bot$}{
        \Return $\bot$}
    \Else{
        \Return the smallest $S_C$ among all the valid solutions \blue{\tcp*{\footnotesize{valid solution}}} 
    }
\end{algorithm}

 \medskip

\begin{algorithm}[ht!]
\caption{\extalgpas \label{alg:extaddconprbdspark}}
\KwInput{Connectivity graph $\hcon=(\widetilde{R},\widetilde{E})$, coverage graph $\hcov= (\widetilde{R}\uplus \widetilde{B}, \widetilde{E}')$ which is \kddfree,  a set $X \subseteq \widetilde{R}$, integers $k, t$. }

\KwInvariant{$\hcon$ is connected.}
\KwOutput{\label{outputofalgext}A subset $S \subseteq \widetilde{R} = V(\hcon)$ of size at most $(k+p)$ such that
$(i)~\hcon\lrsq{S}$ is connected,
$(ii)~X \subseteq S$, and
$(iii)~\big|N_{\hcov}(S)\big| \geq t$; or returns $\bot$. Here, $p=\max_{v\in \widetilde{R}} \dist_{\hcon}(v, X)$
\blue{\tcp*{\footnotesize{$\bot$ indicates that no subset of size $k$ containing $X$ exists that dominates $t$ vertices}}}}
  \If{$\big| X \big| = k$}{
        \uIf{$\hcon\lrsq{X}$ is connected and $\big|N_{\hcov}\lr{X}\big| \geq t$ \label{lin:baseifconditionpas}}{
            \Return $X$ \label{lin:basecasereturnpas}\textnormal{\blue{\tcp*{\footnotesize{base case, valid solution}}}}
            }\Else
            {\Return $\bot$ \textnormal{\blue{\tcp*{\footnotesize{base case, invalid solution}}}}} 
    }
        Apply \Cref{thm:steinerkddfptapproxpds} with the input $\lr{\hcon, \hcov, X, k ,t, \tfrac{1}{4k}}$, and let $\widetilde{S}$ be the solution returned by it \label{line:steiner} \blue{\tcp*{\footnotesize{call approximate solution  for $t$}}}
        \uIf{$\widetilde{S} = \bot$}{\Return $\bot$ \label{line:invalidfromepas}} 
        \Else{
            Let $H$ be the set of $k(d-1)(k^2)^{d-1} + 1$ highest degree vertices from $\widetilde{R}$ in $\hcov$\\
            \If{there exists an $h \in H$ such that $\big| N_{\hcov}\lr{\widetilde{S} \cup \{h\}} \big| \geq t$\label{line:branchingonh}}{ 
                Let $c$ be a closest vertex in $X$ from $h$ in terms of $\dist_{\hcon}(\cdot, \cdot)$\\
                Consider the shortest path $P_{ch}$ between $c$ and $h$ in $\hcon$ \label{lin:closestvertexpas}\\
                \Return $\widetilde{S} \cup V(P_{ch})$ \blue{\tcp*{\footnotesize{additive approximate solution for $k$}}}
            }
            \ForEach{ $h \in H$\label{line:recursivecalls}}{
                $S_h \gets \extalgpas\lr{\hcon, \hcov, X \cup \{h\}, k, t}$\label{line:residueinstance}\\
            }
            \uIf{for all $h \in H, S_h = \bot$}{
                \Return $\bot$
            }
            \Else{
                \Return the smallest among all the valid solutions \blue{\tcp*{\footnotesize{valid solution}}} 
            }
        }
\end{algorithm}

\subsection{Approximation Factor Analysis and Running time}

\begin{sloppypar}
We proceed as follows. In \Cref{lem:ext:alg}, we prove the correctness of \Cref{alg:extaddconprbdspark} $\lr{\extalgpas}$. Then in \Cref{thm:pasforchs}, we prove the correctness of \Cref{alg:addconprbdspark} $\lr{\algpas}$. And finally we prove the main theorem. Now we have the following lemma. 


\end{sloppypar}

\begin{lemma} \label{lem:ext:alg}
\begin{sloppypar}
Let \(\cI=(\hcon,\hcov,X,k,t)\) be an input to \(\extalgpas\) with \(\hcon\) connected. Then, in time
\(\runtimea\cdot |\cI|^{\cO(1)}\), \Cref{alg:extaddconprbdspark} either outputs a set
\(S\subseteq V(\hcon)\) with \(|S|\le k+p\) such that
\begin{enumerate}[label=(\roman*),itemsep=0pt,leftmargin=*]
  \item \(\hcon[S]\) is connected,
  \item \(X\subseteq S\), and
  \item \(\bigl|N_{\hcov}(S)\bigr|\ge t\),
\end{enumerate}
or correctly returns \(\bot\), certifying that no solution of size at most \( k\) satisfies \((i)\)–\((iii)\).
Here \(p \coloneqq \max_{v\in \widetilde{R}} \dist_{\hcon}(v,X)\).
\end{sloppypar}
\end{lemma}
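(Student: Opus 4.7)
The plan is to prove the lemma by induction on $k-|X|$, simultaneously establishing correctness, the size bound $|S|\le k+p$, and the running time. The two engines of the argument will be the \stpcrbdsshort-subroutine from \Cref{thm:steinerkddfptapproxpds}, invoked at \Cref{line:steiner} with $\varepsilon\gets 1/(4k)$, and the combinatorial dichotomy of \Cref{lem:addsol} applied with $\ell=k$ and $t'=t$. Throughout, I would silently pad the Steiner output $\widetilde{S}$ up to size exactly $k$ by appending neighbors in $\hcon$ (which preserves connectivity and only increases coverage), so that the ``$|S'|=\ell$'' hypothesis of \Cref{lem:addsol} applies directly.

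For the base case $|X|=k$, the test at \Cref{lin:baseifconditionpas} is obviously correct, since any solution of size at most $k$ containing $X$ must coincide with $X$. For the inductive step with $|X|<k$, I would first observe that if the Steiner call returns $\bot$, then by \Cref{thm:steinerkddfptapproxpds} no connected size-$k$ superset of $X$ covers $t$ vertices of $\hcov$, and so the $\bot$ returned at \Cref{line:invalidfromepas} is justified. Otherwise $\widetilde{S}$ is a connected size-$k$ superset of $X$ with $|N_{\hcov}(\widetilde{S})|\ge (1-\tfrac{1}{4k})t$. The core step is to consider any hypothetical witness $\mathsf{O}$ of size $k$ with $X\subseteq \mathsf{O}$, $\hcon[\mathsf{O}]$ connected, and $|N_{\hcov}(\mathsf{O})|\ge t$, and apply \Cref{lem:addsol} with $S=\mathsf{O}$, $S'=\widetilde{S}$, and $H$ as defined in the algorithm. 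This yields the dichotomy: either \emph{(a)} some $h\in H$ satisfies $|N_{\hcov}(\widetilde{S}\cup\{h\})|\ge t$, which is exactly the test at \Cref{line:branchingonh} and allows the algorithm to return $\widetilde{S}\cup V(P_{ch})$; or \emph{(b)} $\mathsf{O}\cap H\neq\emptyset$, so for some $h^\star\in \mathsf{O}\cap H$ the recursive call at \Cref{line:residueinstance} on $X\cup\{h^\star\}$ returns a valid solution by the induction hypothesis (the invariant that $\hcon$ is connected is preserved because $\hcon$ is never modified). Hence, if the algorithm reports $\bot$ at the end of the inductive step, then alternative~(a) failed for every $h\in H$ and every recursive branch returned $\bot$, which together rule out the existence of any such $\mathsf{O}$, making $\bot$ correct.

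The size/connectivity checks and the running time then follow by routine bookkeeping. For size, the base case returns $|X|=k\le k+p$; in Case~(a), because $c\in X\subseteq\widetilde{S}$ lies on $P_{ch}$ and $\dist_{\hcon}(c,h)\le p$ by the choice of $c$ as the nearest vertex of $X$ to $h$, I get $|\widetilde{S}\cup V(P_{ch})|\le k+p$, and connectivity follows since $\widetilde{S}$ is connected in $\hcon$ and the path attaches at $c\in\widetilde{S}$; in Case~(b), the recursive parameter $p'\coloneqq\max_v \dist_{\hcon}(v,X\cup\{h\})\le p$, so by induction the returned solution has size at most $k+p'\le k+p$. For the running time, $|X|$ strictly increases along each branch, so the recursion depth is at most $k$; the branching factor is $|H|=\cO(dk^{2d-1})$, yielding at most $|H|^{k}=2^{\cO(kd\log k+k\log d)}$ leaves, and the dominant per-node cost is the Steiner call, which runs in $\os(2^{\cO(k^3 d)})$ time by \Cref{thm:steinerkddfptapproxpds} with $\varepsilon=1/(4k)$. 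The main obstacle will be aligning the two subroutine guarantees (in particular, that $\varepsilon=1/(4k)$ exactly matches the slack hypothesis $t'(1-1/(4\ell))$ of \Cref{lem:addsol} with $\ell=k$); once this alignment is set up, multiplying the two factors yields the claimed $\runtimea\cdot|\cI|^{\cO(1)}=2^{\cO(kd(k^2+\log d))}\cdot|\cI|^{\cO(1)}$ running time.
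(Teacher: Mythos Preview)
Your proposal is correct and follows essentially the same approach as the paper's proof: induction on $k-|X|$, the same base case, the Steiner call combined with \Cref{lem:addsol} to obtain the two-way branch (augment $\widetilde{S}$ by a short path versus recurse on $X\cup\{h\}$), and the same running-time recurrence. You are in fact slightly more careful than the paper in one place---explicitly padding $\widetilde{S}$ to size exactly $k$ so that the $|S'|=\ell$ hypothesis of \Cref{lem:addsol} applies---which the paper glosses over.
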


\begin{proof}
\begin{sloppypar}
    By a slight abuse of notation, we refer to a particular input $(\hcon, \hcov, X, k, t)$ as an \emph{instance} of the algorithm \extalgpas.     Suppose an instance $\cI = \lr{\hcon, \hcov, X , k , t}$ admits a solution $S^\star$ of size at most $k$ satisfying properties $(i) - (iii)$, then we say that it is a \yes-instance of \extalgpas. Then we show that \Cref{alg:extaddconprbdspark} outputs a solution of size at most $k+p$ satisfying properties $(i) - (iii)$. 
    We prove the correctness of \Cref{alg:extaddconprbdspark} by induction on the measure $\mu(\cI) = k - |X|$.
\end{sloppypar}


    \medskip
    \noindent
    \textbf{Base case:} When $\mu(\cI) = 0$, it follows that $|X| = k$. Since we assume that the input instance $\cI$ is a \yes-instance, we need to show that the algorithm does not return $\bot$. In this case, note that any solution to $\cI$ that is of size $k$ and that contains $X$, must be precisely $X$ itself. Further, since $X$ is a solution, it holds that (i) $\hcon[X]$ is connected, and (ii) $|N_{\hcov}(X)| \ge t$. These are precisely the conditions checked by the algorithm in the base case in \Cref{lin:baseifconditionpas}. Therefore, the algorithm will return $X$ in \Cref{lin:basecasereturnpas}. Note that $X$ is of size $k \le k+p$, since $p \ge 0$. This finishes the base case. 
    
    
  \medskip

    \begin{sloppypar}
        \noindent
    \textbf{Induction Hypothesis:} Suppose for any instance $\cI' = \lr{\hcon, \hcov, X' , k , t}$ of \extalgpas with measure $\mu(\cI') = \mu^\star - 1 \geq 1$, the statement is true. That is, if $\cI'$ is a \yes-instance to \extalgpas, i.e., there is some $S \subseteq \widetilde{R}$ of size at most $k$ such that (I) $X' \subseteq S$, (II) $\hcon[S]$ is connected, and (III) $|N_{\hcov}(S)| \ge t$, then the algorithm outputs a solution $S'$ of size at most $k+p'$ satisfying (I)-(III), where $p' = \max_{v \in \widetilde{R}} \dist_{\hcon}(v, X')$.
    \end{sloppypar}

    \medskip
  \noindent 
    \textbf{Inductive Step:}
    Now we prove the statement for an arbitrary instance $\cI = (\hcon, \hcov, X, k, t)$ with $\mu(\cI) = \mu^\star$. Suppose $\cI$ is a \yes-instance of \extalgpas. To reiterate, this means that there exists a solution $S^\star \subseteq \widetilde{R}$ of size at most $k$ such that (i) $\hcon[S^\star]$ is connected, (ii) $X \subseteq S^\star$, and (iii) $|N_{\hcov}(S^\star)| \ge t$.


    It follows that, when we apply \Cref{thm:steinerkddfptapproxpds} in \Cref{line:steiner} with the input $(\hcon, \hcov, X, k, t, \frac{1}{4k})$, it must return a solution $\widetilde{S}$ (that is not equal to $\bot$) of size at most $k$ such that $X \subseteq \widetilde{S}$, $\hcon[\widetilde{S}]$ is connected, and $|N_{\hcov}(\widetilde{S})| \ge (1-\frac{1}{4k})t$. 

 Observe that \Cref{lem:addsol} applies to any arbitrary pair of sets $(S', S^\star)$, provided they satisfy the preconditions of \Cref{lem:addsol}.
In particular, it also holds for a specifically chosen pair $(\widetilde{S}, S^\star)$ where $\widetilde{S}$ (a solution from \Cref{line:steiner}) and $S^\star$ (hypothetical solution) induce connected subgraphs in $\gcon$. We apply \Cref{lem:addsol} with the following parameters: (1) $S' \gets$ solution $\widetilde{S}$ returned by \Cref{thm:steinerkddfptapproxpds}, (2)  $H \gets$ the set of $k(d-1)(k^2)^{d-1} + 1$ highest degree vertices in $\widetilde{R}$, (3) $\ell \gets k$, (4) $t' \gets t$, and (5) $S \gets S^\star$ (hypothetical solution as assumed above). Then, \Cref{lem:addsol} implies that there are following two possibilities.  
    \begin{enumerate}
        \item Either      
        for some $h \in H$, $\big| N_{\hcov}(\widetilde{S} \cup \{h\})\big| \ge t$. 

        \item  Or 
        $S^\star \cap H \neq \emptyset$. 
    \end{enumerate}
    Now we consider each case separately.

    In the first case, consider the iteration corresponds to $h \in H$ that satisfies the condition.
    This implies that $\widetilde{S} \cup \{h\}$ is a potential solution, provided that conditions $(i) - (iii)$ are satisfied. Since $X \subseteq \widetilde{S} \subseteq \widetilde{S} \cup \LR{h}$, condition $(ii)$ is satisfied. Condition $(iii)$ is satisfied due to the choice of $h$. However, condition $(i)$ may not be satisfied, i.e., $\hcon[\widetilde{S} \cup \LR{h}]$ may not be connected. To this end, consider the vertex $c \in X$ from $h$ that is chosen in \Cref{lin:closestvertexpas}. It follows that $\dist_{\hcon}(c,h) \le p$, which implies $|V(P_{ch})| \le p$, where $P_{ch}$ is the shortest path between $c$ and $h$ in $\hcon$. This implies that $\widetilde{S} \cup V(P_{ch})$ has size at most $k+p$, and it is easily verified that it satisfies properties $(i)-(iii)$. This finishes the first case. 

        

    Now let us consider the second case, i.e., when we have that $S^\star \cap H \neq \emptyset$. Let $h$ be an arbitrary vertex in $S^\star \cap H$. Then, in \Cref{line:residueinstance}, consider the recursive call made to the instance $\cI_h = \lr{\hcon, \hcov, C, X \cup \{h\}, k, t}$, which is $\extalgpas\lr{\cI_h}$. Notice first that the graph $\hcon$ is unchanged and the invariant is maintained. Now we show that, $\cI_h$ is a \yes-instance of \extalgpas. To this end, recall that $S^\star$ is a hypothetical solution of size at most $k$ for $\cI$ satisfying ~$(i)$--$(iii)$ w.r.t $\cI$. Furthermore, since $X \cup \{h\} \subseteq S^\star$, it follows that $S^\star$ is also a valid solution of size at most $k$ for $\cI_h$ satisfying $(i) - (iii)$ w.r.t.~$\cI_h$; thereby making $\cI_h$ a \yes-instance of \extalgpas. Further, note that $\mu(\cI_h) = k-|X \cup \LR{h}| = k - |X| - 1 = \mu(\cI) - 1 = \mu^\star - 1$. 
   
    Hence, by induction hypothesis, the recursive call corresponding to $h$ made in \Cref{line:residueinstance} returns a solution $S_h$ of size  at most $k+p_{\scriptscriptstyle h}$, such that $S_h$ satisfies conditions $(i) - (iii)$ w.r.t.~$\cI_h$. Here, $p_{\scriptscriptstyle h} = \max_{v \in \widetilde{R}} \dist_{\hcon}(v, X \cup \LR{h})$. Note that $p_{\scriptscriptstyle h} \le p$ since $X \subseteq X \cup \LR{h}$. Due to condition $(ii)$, since $X \cup \{h\}\subseteq S_h$, this implies that $X \subseteq S_h$. Hence $S_h$ is also a solution for $\cI$ of size at most $k+ p_{\scriptscriptstyle h} \leq k+p$ that satisfies conditions $(i) -(iii)$ w.r.t. $\cI$. This concludes the proof of the approximation factor of \Cref{lem:ext:alg}. Next we prove the running time.

\smallskip
\begin{sloppypar}
\noindent
\textbf{Running time:} 
We analyze the running time by induction on the measure $\mu(\cI)$. 
Let $T(\mu(\cI))$ denote the total time required for one execution of 
$\extalgpas(\cI)$ when the measure is $\mu(\cI) = k - |X|$. 
The contributions those dominate the running time come from two steps: 
\Cref{line:steiner}, where the algorithm in \Cref{thm:steinerkddfptapproxpds} is invoked, 
and \Cref{line:recursivecalls}, where $|H|$ recursive calls are made. 

In \Cref{line:steiner}, since the algorithm in \Cref{thm:steinerkddfptapproxpds} 
is applied to the instance 
$\cI = \lr{\hcon, \hcov, X, k, t, \delta \gets \tfrac{1}{4k}}$ to compute a solution 
$\widetilde{S}$, its running time is 
\[
2^{\Oh(k^2 d / \delta)} \cdot |\cI|^{\cO(1)} 
= 2^{\Oh(k^3 d)} \cdot |\cI|^{\cO(1)}.
\]

Furthermore, the set $H$ has size 
\[
|H| = k(d-1)(k^2)^{d-1} + 1 = d \cdot k^{\cO(d)},
\] 
and for each $h \in H$, the algorithm makes a recursive call to an instance whose measure is equal to $\mu(\cI) - 1$.


Therefore, the recurrence relation for the running time of 
$\extalgpas(\cI)$ is given by
\begin{eqnarray*}
T\lr{\mu(\cI)} &\leq &  |H| \cdot T(\mu(\cI) - 1)+ 2^{\Oh(k^3d)}\cdot|\cI|^{\cO(1)} + |\cI|^{\cO(1)} \\
&\leq &  |H| \cdot T(\mu(\cI) - 1)+ 2^{\Oh(k^3d)}\cdot|\cI|^{\cO(1)}\\
& \leq & |H|^{\mu(\cI)}\cdot 2^{\Oh(k^3d)}\cdot |\cI|^{\cO(1)}
\end{eqnarray*}

Due to space constraints, we omit the details of solving the recurrence and directly state its compact form. 
Since the upper bound on the measure $\mu(\cI)$ is $k$, solving the recurrence yields
\[
T(k) \;\leq\; 2^{\Oh(k^3 d + dk \log d)} \cdot |\cI|^{\cO(1)} = 2^{\Oh(kd (k^2 +\log d))} \cdot |\cI|^{\cO(1)}.
\]
This completes the running time analysis.
\end{sloppypar}
\end{proof}

\begin{lemma} \label{lem:alg}
\begin{sloppypar}
Let $\cI = (\gcon, \gcov, k, t)$ be an instnace of \pcrbdsshort. Then, when \extalgpas is called with $(\gcon, \gcov, k, t, \varepsilon)$ for some $\varepsilon \in (0, 1)$, the following holds.
The algorithm \extalgpas runs in time $2^{\Oh(kd(k^2 + \log d))} \cdot |\cI|^{\Oh(1/\varepsilon)}$, and if $\cI$ is a \yes-instance of \pcrbdsshort, then it outputs a set $S \subseteq V \lr{\gcon} = R$ of size at most $(1+ \varepsilon)k$ such that $(i)~\gcon\lrsq{S}$ is connected, and $(ii)~\big|N_{\gcov}(S)\big| \geq t$; or correctly returns $\bot$, $\cI$ is a \no-instance of \pcrbdsshort.
\end{sloppypar}
\end{lemma}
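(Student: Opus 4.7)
The plan is to bootstrap from \Cref{lem:ext:alg} using the Solution Covering Lemma (\Cref{lem:solutioncovering}) as the pivot. Concretely, for any \yes-instance witnessed by $S^\star$, I will identify a ``good'' $C\in\cR$ enumerated in \Cref{line:enumerate} such that $C\subseteq S^\star$ and $S^\star\subseteq \widehat{R}\coloneqq \ball^{\varepsilon k}_{\gcon}[C]$; then the call to $\extalgpas$ on this $C$ produces the desired $(1+\varepsilon)k$-sized approximate solution.

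For correctness on \yes-instances, I first apply \Cref{lem:solutioncovering} with $r=\varepsilon k$ to obtain a set $C^\star\subseteq S^\star$ of size at most $\lceil 1/\varepsilon\rceil$ with $S^\star\subseteq \widehat{R}$. (A minor mismatch, that $\cR$ enumerates subsets of size exactly $\lceil 1/\varepsilon\rceil$ while \Cref{lem:solutioncovering} only guarantees ``at most,'' is resolved either by padding $C^\star$ with vertices of $S^\star$, or by reinterpreting the enumeration as all subsets of size at most $\lceil 1/\varepsilon\rceil$, which costs the same $|R|^{\Oh(1/\varepsilon)}$ asymptotically.) I then verify the call-site precondition that $\gcon[\widehat{R}]$ is connected: for any $v\in \widehat{R}$, a shortest $\gcon$-path of length $\ell\le \varepsilon k$ from $v$ to some $c\in C^\star$ has each internal vertex at $\gcon$-distance at most $\ell\le \varepsilon k$ from $c$, hence lies entirely in $\widehat{R}$; combined with the fact that $C^\star\subseteq S^\star$ is connected in $\gcon[S^\star]\subseteq \gcon[\widehat{R}]$, this yields the connectivity of $\gcon[\widehat{R}]$. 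The sub-instance $\cI'=(\gcon[\widehat{R}],\gcov[\widehat{R}\uplus N_{\gcov}(\widehat{R})],C^\star,k,t)$ is then a \yes-instance of \extalgpas witnessed by $S^\star$ (since $S^\star\subseteq \widehat{R}$ implies $N_{\hcov}(S^\star)=N_{\gcov}(S^\star)$), and the same path argument gives $p=\max_{v\in \widehat{R}}\dist_{\hcon}(v,C^\star)\le \varepsilon k$. Invoking \Cref{lem:ext:alg}, this call returns a set $S_{C^\star}$ of size at most $k+p\le (1+\varepsilon)k$ satisfying $(i)$ and $(ii)$ in the host graphs, which translate immediately to $\gcon[S_{C^\star}]$ being connected and $|N_{\gcov}(S_{C^\star})|\ge t$. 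The outer algorithm returns the smallest such valid $S_C$, which inherits these properties.

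For the $\bot$ branch and running time: if \algpas returns $\bot$, then every $S_C=\bot$, so by the contrapositive of the argument above no such $S^\star$ can exist and $\cI$ is a \no-instance. The running time decomposes as $|\cR|\le |R|^{\Oh(1/\varepsilon)}$ outer iterations, each performing polynomial-time preprocessing (building $\widehat{R}$ and the induced subgraphs) and one call to \extalgpas costing $2^{\Oh(kd(k^2+\log d))}\cdot|\cI|^{\Oh(1)}$ by \Cref{lem:ext:alg}; the product is $2^{\Oh(kd(k^2+\log d))}\cdot|\cI|^{\Oh(1/\varepsilon)}$, as claimed.

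The main obstacle I expect is the connectivity check on $\gcon[\widehat{R}]$ for the good $C^\star$, which is required by the invariant of \extalgpas; it is resolved by the ``shortest paths stay inside the ball'' observation that crucially uses $\widehat{R}$ being defined as a $\gcon$-ball around $C^\star$ (and not, say, around some unrelated anchor). The secondary bookkeeping issue around exact-size versus at-most-size enumeration is cosmetic, as noted above.
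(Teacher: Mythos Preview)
Your proposal is correct and follows essentially the same approach as the paper: enumerate candidate covers $C$ via the Solution Covering Lemma with $r=\varepsilon k$, check the connectivity precondition on $\gcon[\widehat{R}]$, and apply \Cref{lem:ext:alg} to the good branch with $p\le\varepsilon k$. In fact you supply more detail than the paper does---it simply asserts that $\gcon[\widehat{R}]$ is connected from $C\subseteq S^\star$ and $\gcon[S^\star]$ connected, whereas you spell out the ``shortest paths stay in the ball'' step; and it does not address the exact-size versus at-most-size enumeration mismatch you flag.
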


\begin{proof}

    Consider $\cI$ and $\varepsilon \in (0, 1)$ as in the statement of the lemma. Further suppose that $\cI$ is a \yes-instance. Then, it admits a solution $S^\star$ of size at most $k$ satisfying properties~$(i)$ and~$(ii)$ mentioned in the statement. Then we show that \Cref{alg:addconprbdspark} produces a solution $S$ of size at most $(1+\varepsilon)k$ that also satisfies $(i)$ and~$(ii)$.

     In \Cref{line:enumerate}, algorithm guesses (enumerates) a subset $C$ of size at most $\lceil 1/\varepsilon \rceil$ such that $S^\star \subseteq \ball^{\varepsilon k}_{\gcon}[C] \subseteq V(\gcon)$. The existence of such a set $C$ is guaranteed by \Cref{lem:solutioncovering}, by setting the value of the radius, $r \coloneqq  \varepsilon k$. Consider $\widehat{R} \coloneqq \ball^{\varepsilon k}_{\gcon}$. Note that since $\gcov[S^\star]$ is connected, and $C \subseteq S^\star$, it follows that $\gcon[\widehat{R}]$ is connected. Therefore, we make a call $\extalgpas\lr{\gcon[\widehat{R}], \gcov\lrsq{\widehat{R} \uplus N_{\gcov}(\widehat{R})}, C, k ,t}$ in \Cref{line:recursivecall}. Consider this call.

    By Lemma~\ref{lem:ext:alg}, the algorithm $\extalgpas$ must return a $S_C$ of size at most $k + p = k + \varepsilon k = (1+\varepsilon)k$ satisfying properties: $(1)~\gcon[S_C]$ is connected, $(2)~C \subseteq S_C$, and $(3)~\big| N_{\gcon}(S_C)\big|\geq t$ or it returns $\bot$. Finally \Cref{alg:addconprbdspark} outputs the set that has the minimum size among all the $|R|^{1/\varepsilon}$ guesses.
    
    The running time is bounded by the execution time of $\extalgpas$, multiplied by the number of guesses for $C$, that is $\big|V\lr{\gcon}\big|^{1/\varepsilon} = |R|^{1/\varepsilon} \le |\cI|^{1/\varepsilon}$. Thus, the bound on the running time follows.
\end{proof}

This directly leads to our main theorem of this section.

\begin{restatable}{theorem}{kddfptapproxfork}\label{thm:pasforchs}
Let \(\cI=(\gcov,\gcon,k,t)\) be an instance of \pcrbdsshort, where \(\gcov\) and \(\gcon\) are the
coverage and connectivity graphs, respectively. For any \(0<\varepsilon<1\), if \(\gcov\) is \kddfree,
there is an algorithm running in time
\[
  2^{\Oh(kd(k^2 + \log d))} \cdot|\cI|^{\cO(1/\varepsilon)}
\]
that either
\begin{enumerate}[label=(\roman*),itemsep=0pt,leftmargin=*]
  \item outputs a set \(S\subseteq R\) with \(|S|\le (1+\varepsilon)k\), \(\gcon[S]\) connected, and
        \(|N_{\gcov}(S)|\ge t\), or
  \item correctly concludes that no connected size-\(k\) set dominates \(t\) blue vertices in \(\gcov\).
\end{enumerate}
\end{restatable}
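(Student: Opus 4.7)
The plan is to establish \Cref{thm:pasforchs} by invoking \Cref{lem:alg}, which already packages the desired approximation and running-time guarantees for \algpas (\Cref{alg:addconprbdspark}). Specifically, the proof will simply observe that \algpas\ is the algorithm promised by the theorem: on input $(\gcov, \gcon, k, t, \varepsilon)$, it either outputs some $S_C$ produced by a recursive call to \extalgpas, or correctly reports $\bot$ when all such recursive calls fail. The two items of the theorem statement are precisely the output specification of \Cref{lem:alg}.

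For correctness on a \yes-instance, I would argue as follows. Fix a hypothetical solution $S^\star\subseteq R$ with $|S^\star|\le k$, $\gcon[S^\star]$ connected, and $|N_{\gcov}(S^\star)|\ge t$. Applying \Cref{lem:solutioncovering} with radius $r=\varepsilon k$ yields a set $C\subseteq S^\star$ with $|C|\le \lceil 1/\varepsilon\rceil$ such that $S^\star\subseteq \ball^{\varepsilon k}_{\gcon}[C]=\widehat{R}$. Since $\gcon[S^\star]$ is connected and $C\subseteq S^\star\subseteq \widehat{R}$, each vertex in $\widehat{R}$ is reachable (via $S^\star$) to within distance $\varepsilon k$ of $C$, so $\gcon[\widehat{R}]$ is connected and the call $\extalgpas(\gcon[\widehat{R}],\gcov[\widehat{R}\uplus N_{\gcov}(\widehat{R})],C,k,t)$ satisfies the invariant required by \Cref{lem:ext:alg}. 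Here $p=\max_{v\in\widehat{R}}\dist_{\gcon[\widehat{R}]}(v,C)\le \varepsilon k$. Since $S^\star$ is a valid solution of size at most $k$ containing $C$ for this restricted instance, \Cref{lem:ext:alg} guarantees that the call returns a set $S_C$ with $|S_C|\le k+p\le (1+\varepsilon)k$, with $\gcon[S_C]$ connected and $|N_{\gcov}(S_C)|\ge t$. The outer algorithm iterates over all $\lceil 1/\varepsilon\rceil$-sized subsets of $R$, so the correct guess $C$ is eventually examined, and the smallest returned $S_C$ is output. Conversely, if \algpas\ returns $\bot$, then every guess $C$ returned $\bot$, so by the contrapositive of \Cref{lem:ext:alg} there is no valid solution of size $k$ containing any of these sets $C$; since every hypothetical $S^\star$ would force at least one good guess, no such $S^\star$ exists, justifying the $\bot$ output.

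For the running time, the outer loop enumerates $\binom{|R|}{\lceil 1/\varepsilon\rceil}\le |R|^{\cO(1/\varepsilon)}\le |\cI|^{\cO(1/\varepsilon)}$ candidate sets $C$, and for each such $C$ invokes \extalgpas, which by \Cref{lem:ext:alg} runs in time $2^{\Oh(kd(k^2+\log d))}\cdot |\cI|^{\cO(1)}$. Multiplying these bounds gives the claimed total running time of $2^{\Oh(kd(k^2+\log d))}\cdot |\cI|^{\cO(1/\varepsilon)}$.

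The only conceptual obstacle in this plan is the bookkeeping for the radius parameter $p$ passed to \extalgpas: I must verify that the ``enlargement'' step $\widehat{R}=\ball^{\varepsilon k}_{\gcon}[C]$ does not blow up the relevant distance beyond $\varepsilon k$, which is immediate since $\widehat{R}$ is defined exactly as the $\varepsilon k$-ball around $C$. All other pieces (the existence of $C$, the inductive correctness of \extalgpas, and the degree bound used inside it via \Cref{lem:addsol}) are already established, so the proof reduces to the two lemmas already proved.
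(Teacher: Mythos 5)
Your proposal is correct and follows essentially the same route as the paper: the paper derives \Cref{thm:pasforchs} directly from \Cref{lem:alg}, whose proof is exactly your argument (guess $C$ via \Cref{lem:solutioncovering} with $r=\varepsilon k$, note $\gcon[\widehat{R}]$ is connected and $p\le\varepsilon k$, invoke \Cref{lem:ext:alg} on the restricted instance, and multiply the $|R|^{\cO(1/\varepsilon)}$ guesses by the per-call running time). Your explicit check that shortest paths to $C$ stay inside the ball, so the induced distance bound $p\le\varepsilon k$ holds, is a point the paper glosses over but handles identically in spirit.
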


\section{Lower Bounds}\label{sec:lower}
In this section we establish several parameterized and approximation lower bounds for \pcrbdsshort. Many of these results are established by modeling already existing problems such as {\sc Max Coverage, Connected Partial/Budgeted Dominating Set} as special cases of \pcrbdsshort. 

\subsection{\pcrbdsshort parameterized by $k$}
In this subsection, we show a {\sf W[1]}-hardness and two \fpt-time hardness-of-approximation results for \pcrbdsshort. We start by our {\sf W[1]}-hardness result. 

\begin{restatable}{theorem}{hardkpds}  \label{thm:hardkpds}
	 \pcrbdsshort is \woh when parameterized by $k$, even if the coverage graph is $3$-degenerate and the connectivity graph is either a clique or a star. 
\end{restatable}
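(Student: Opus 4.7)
\textbf{Proof plan for \Cref{thm:hardkpds}.} The plan is to give two short parameterized reductions from \textsc{Partial Vertex Cover} (PVC), which is known to be \woh parameterized by the solution size (Guo--Niedermeier--Wernicke). Both reductions share the same coverage graph, which will turn out to be $2$-degenerate (hence $3$-degenerate), so the only additional work is to arrange $\gcon$ as either a clique or a star.

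Given a PVC instance $(G,k,t)$, first construct $\gcov = (R \uplus B, E')$ by setting $R \coloneqq V(G)$, $B \coloneqq E(G)$, and adding the edges $ue$ and $ve$ of $\gcov$ for every edge $e = uv$ of $G$. In $\gcov$ every vertex of $B$ has degree exactly $2$, so any nonempty subgraph either contains a blue vertex (of degree at most $2$) or consists solely of isolated red vertices; either way $\gcov$ is $2$-degenerate, hence $3$-degenerate.

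For the clique variant, let $\gcon$ be the complete graph on $R$. Since every subset of $R$ induces a connected subgraph of a clique, the problem reduces to: find $S\subseteq V(G)$ with $|S|\le k$ and $|N_{\gcov}(S)|\ge t$, which is exactly PVC. For the star variant, augment $R$ with a fresh dummy vertex $c$ that has no neighbors in $\gcov$, let $\gcon$ be the star on $R\cup\{c\}$ with center $c$, set $k' \coloneqq k+1$, and keep $t$ unchanged. Any connected subset of a star of size at least $2$ must contain the center; conversely $\{c\}\cup S$ is connected in $\gcon$ for every $S\subseteq V(G)$. Because $c$ contributes nothing to $\gcov$-neighborhoods, feasible solutions of size $k+1$ correspond bijectively to PVC solutions of size $k$.

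Both reductions are polynomial-time and change $k$ by at most an additive constant, so they are parameterized reductions transferring \textsf{W}[1]-hardness to \pcrbdsshort under the stated restrictions on $(\gcon,\gcov)$. The only subtlety I foresee is handling the degenerate corner cases (e.g., $k\in\{0,1\}$ in the star reduction, where a size-$1$ solution is trivially connected without using the center) and a routine verification of the equivalence ``PVC has a size-$k$ cover of $t$ edges iff the constructed instance is a \yes-instance''; neither affects the hardness conclusion. No new techniques seem needed—the argument is a direct encoding.
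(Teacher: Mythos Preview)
Your reduction is correct and follows the same template as the paper (encode the source problem's incidences as $\gcov$ and trivialize connectivity via a clique or a star with a dummy center), but you start from a different source: the paper reduces from \textsc{Partial Dominating Set} on $2$-degenerate graphs (Golovach--Villanger), building $\gcov$ as the closed-neighborhood bipartite graph and using the $2$-degeneracy ordering of $G$ to certify that $\gcov$ is $3$-degenerate. Your choice of \textsc{Partial Vertex Cover} makes the degeneracy argument simpler (every blue vertex has degree exactly $2$, so $\gcov$ is in fact $2$-degenerate) and does not require the source problem to be restricted to a sparse graph class. As a bonus, since $G$ is simple your $\gcov$ is even $K_{2,2}$-free, which the paper asserts separately in the introduction but does not obtain from its PDS reduction. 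The paper's route, on the other hand, keeps a closer link to domination-type problems and plugs directly into the graphical special cases discussed elsewhere. Your corner-case remark about $k\in\{0,1\}$ in the star variant is accurate and harmless for hardness; the backward direction goes through because a connected subset of a star that avoids the center has size at most $1\le k$.
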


\begin{proof}

 We give a parameterized reduction from {\sc Partial Dominating Set (PDS)} which is known to be  \woh for 2-degenerate graphs when parameterized by $k$ (\cite{DBLP:conf/wg/GolovachV08}). In PDS we are given a graph $G$ with two integers $k$ and $t$ as input, where our goal is to check whether there is a vertex set of size at most $k$ that dominates  at least $t$ vertices in $G$.     

Given an instance \((G,k,t)\) of \textsc{PDS}, we build an instance
\(\cI=(\gcon,\gcov,k',t)\) of \pcrbdsshort as follows.

\medskip\noindent\textbf{Coverage graph \(\gcov\).}
Create two copies of \(V(G)\): let
\(R=\{v^{R}: v\in V(G)\}\) and \(B=\{v^{B}: v\in V(G)\}\).
For each \(u\in V(G)\), make \(u^{R}\) adjacent in \(\gcov\) to all \(v^{B}\) with \(v\in N_G[u]\)
(the closed neighborhood in \(G\)). Thus \(\gcov\) is a bipartite graph with sides \(R\) and \(B\).

\medskip\noindent\textbf{\(3\)-degeneracy of \(\gcov\).}
Let \(v_1,\dots,v_n\) be a \(2\)-degeneracy ordering of \(G\); that is, each \(v_i\) has at most
two neighbors among \(\{v_j : j>i\}\).
Consider the interleaved ordering
\[
v_1^{R},\, v_1^{B},\, v_2^{R},\, v_2^{B},\,\dots,\, v_n^{R},\, v_n^{B}.
\]
Then:
\begin{itemize}[itemsep=0pt,leftmargin=*]
  \item The later neighbors of \(v_i^{R}\) are exactly \(v_i^{B}\) and \(v_j^{B}\) for those
        \(j>i\) with \(v_j\in N_G(v_i)\); hence \(v_i^{R}\) has at most \(1+2=3\) later neighbors.
  \item The later neighbors of \(v_i^{B}\) are \(v_j^{R}\) for those \(j>i\) with
        \(v_j\in N_G(v_i)\); hence \(v_i^{B}\) has at most \(2\) later neighbors.
\end{itemize}
Therefore every vertex of \(\gcov\) has at most three later neighbors in this ordering, and so
\(\gcov\) is \(3\)-degenerate.

\paragraph*{Connectivity graph \(\gcon\).}
We will define \(\gcon\) differently in the two cases considered below.


 \begin{description}
     \item[(i) $\gcon$ is a clique:] 	Let $\gcon$ be the complete graph on $R$, and set $k' = k$. Now we show the correctness of the reduction. In the forward direction, if $S$ is a solution of PDS to $(G,k,t)$ then   all the copies of vertices of $S$ in $R$ must dominate all the copies of vertices of $N_G[S]$ in $B$. Additionally, $\gcon[S]$ is connected since it induces a clique. Hence $\cI$ is \yes-instance of \pcrbdsshort. In the backward direction, let $S' \subseteq R$ be a solution to $\cI$. Then the corresponding vertex set of $S'$ in $G'$ also dominates same number of vertices in $G$. Hence  $(G,k,t)$ is an \yes-instance of PDS.

     \item[(ii) $\gcon$ is a star:] Add a new vertex $z$ to $R$, and let $\gcon$ be the star centered at $z$, with all other vertices in $R$ as leaves.  Set $k' = k+1$.  Now we show the correctness of the reduction. In the forward direction, if $S$ is a solution of PDS to $(G,k,t)$ then   all the copies of vertices of $S$ in $R$ must dominate all the copies of vertices of $N[S]$ in $B$. And $\gcon[S \cup \{z\}]$ is connected. Hence $\cI$ is \yes-instance of \pcrbdsshort. In the backward direction, let $S' \subseteq R$ be a solution to $\cI$. Clearly $z \in S'$ as $\gcon[S']$ is connected. Now the vertex $z \in R$ has no neighbour in $B$. So  the corresponding vertex set of $S' \setminus \{z\}$ in $G$ also dominates same number of vertices in $G$. Hence  $(G,k,t)$ is an \yes-instance of PDS.
 \end{description}
 	We have shown that for both cases ($\gcon$ being a clique or a star), $(G,k,t)$ is a \yes-instance of PDS if and only if $\cI$ is a \yes-instance of \pcrbdsshort.  Since PDS is \textsf{W[1]}-hard on $2$-degenerate graphs parameterized by $k$, and our reduction creates  instances where $\gcov$ is $3$-degenerate, the proof follows. 
    \end{proof}

In the next sections, we present our hardness-of-approximation results.

\subsection{$(1,1-e^{-1}+\varepsilon)$-approximation of \pcrbdsshort.} 

Here we show that assuming the  Exponential Time Hypothesis ({\sf ETH}) \cite{DBLP:journals/jcss/ChenHKX06}, for any $\varepsilon>0$, any $(1, 1-e^{-1}+\varepsilon)$-approximation algorithm for \pcrbdsshort   must take   $\Omega_k(n^{k^{\Omega(1)}})$ running time.  
The reduction is from the {\sc Max $(\beta,k)$-Coverage} problem  \cite{{DBLP:conf/stoc/GuruswamiLRS025}}. For $0 < \beta \le t$, an instance of {\sc Max $(\beta,k)$-Coverage} consists of a family $\mathcal{F}$ of $m$ sets over a universe
$\mathcal{U}$ of size $n$, and the goal is to distinguish
between the following two cases.

\begin{itemize}
    \item Type I:  There exists $k$ subsets from $\mathcal{F}$ whose union equals $\mathcal{U}$.

    \item Type II: Any $k$ subsets from $\mathcal{F}$ has the union size at most $\beta|\mathcal{U}|$.
\end{itemize}

\noindent The following result is known for  {\sc Max $(\beta,k)$-Coverage} under {\sf ETH}.

\begin{proposition}{\rm \cite[Theorem 1.5]{DBLP:conf/stoc/GuruswamiLRS025}}\label{prop:cover}
Assumming {\sf ETH}, for any constant $\varepsilon \in (0,1)$, any algorithm deciding {\sc Max $(1-e^{-1}+\varepsilon,k)$-Coverage} must take runtime $\Omega_k(|\mathcal{F}|^{k^{\Omega(1)}})$.
\end{proposition}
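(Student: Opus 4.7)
The statement is quoted verbatim from Guruswami--Lin--Ren--Sun (STOC 2025) as a black box, so a truly independent proof is well beyond a short sketch. What I can outline is the overall architecture of their argument, which is the natural plan one would follow in attempting such a result.

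The plan is to build an ETH-preserving chain of reductions ending at a gap version of \textsc{Max $k$-Coverage}. First I would start from ETH, which asserts that $3$-SAT on $n$ variables requires time $2^{\Omega(n)}$, and invoke the (sparsified) PCP theorem to transform a $3$-SAT instance into a two-prover one-round \textsc{Label Cover} instance over alphabet $\Sigma$ whose size is $f(k)\cdot n^{O(1)}$ and whose completeness vs.\ soundness gap is some absolute constant. Next I would run enough rounds of (a parameterized version of) parallel repetition or a gap-amplification primitive to blow up the soundness while keeping the completeness equal to $1$, producing a \textsc{Label Cover} instance in which the left side has a small number of ``supervariables'' tied to the FPT parameter $k$. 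The critical feature to preserve in this step is that the blowup in instance size is only polynomial in $|\mathcal{F}|$ while the lost fraction in soundness is quantitative enough to later translate into a $1-1/e+\varepsilon$ gap.

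Next I would apply a Feige-style partition gadget to convert the amplified \textsc{Label Cover} instance into a \textsc{Max $k$-Coverage} instance. Concretely, for each left supervariable I create a block of sets indexed by candidate assignments; for each right constraint I create a universe element together with a random $r$-partition, and I place these elements into sets so that a ``good'' global assignment covers roughly a $1-(1-1/r)^r \to 1-1/e$ fraction of each partition class, while any assignment of low value leaves many classes entirely uncovered. The Feige calculation then shows completeness $1$ versus soundness $\le 1-1/e+\varepsilon$ for the resulting \textsc{Max $(\beta,k)$-Coverage} instance, giving precisely the promised gap. The number of sets in the final instance is polynomial in the \textsc{Label Cover} size; tracking this through the chain yields $|\mathcal{F}|=n^{O(1)}$, so an algorithm running in time $f(k)\cdot |\mathcal{F}|^{o(k^{\Omega(1)})}$ would solve the starting \textsc{Label Cover} instance faster than ETH permits.

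The hard part, and the place where Guruswami--Lin--Ren--Sun do real new work rather than combining existing gadgets, is obtaining the \emph{ETH}-based (as opposed to Gap-ETH-based) lower bound at the $n^{k^{\Omega(1)}}$ scale while simultaneously pinning down the constant $1-1/e$. Earlier Gap-ETH proofs~\cite{DBLP:conf/soda/Manurangsi20} essentially assume the gap already exists; here one must \emph{create} the gap from a purely decisional hypothesis. This requires a PCP with quasi-linear proof length and near-linear alphabet reduction that is also parameterized-friendly, since a generic alphabet reduction destroys the FPT structure. My plan would therefore lean on their ``baby PIH''-style recursion: start with a weak gap that is easy to obtain from ETH, and then iteratively sharpen it via a parameterized analog of direct-product testing, taking care at each step that the parameter $k$ only grows by a function of itself so the final bound $|\mathcal{F}|^{k^{\Omega(1)}}$ survives the recursion. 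This last ingredient is where I would expect to spend essentially all of the technical effort, and it is also where I would most likely need to defer to the cited paper rather than reconstruct the argument from scratch.
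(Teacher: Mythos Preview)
The paper does not prove this proposition at all: it is stated with a citation to \cite{DBLP:conf/stoc/GuruswamiLRS025} and used purely as a black box in the subsequent reductions (\Cref{thm:hardtapproxpds}, \Cref{theo:relatecov}). You correctly identify this in your first sentence, and your high-level sketch of the Guruswami--Lin--Ren--Sun architecture is a reasonable orientation, but there is no ``paper's own proof'' to compare against here---so no comparison is possible or expected.
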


We now show the analogous result  for \pcrbdsshort.

\begin{restatable}{theorem}{hardtapproxpds}  \label{thm:hardtapproxpds}
Assuming {\sf ETH}, there is no \fpt\ algorithm parameterized by \(k\) that achieves a
\((1,\,1-e^{-1}+\varepsilon)\)-approximation for \pcrbdsshort.
Moreover, under {\sf ETH}, for any fixed \(\varepsilon\in(0,1)\), every
\((1,\,1-e^{-1}+\varepsilon)\)-approximation algorithm for \pcrbdsshort must run in time
\(\Omega_k\!\bigl(|\mathcal{F}|^{\,k^{\Omega(1)}}\bigr)\), even when the connectivity graph \(\gcon\)
is a clique or a star.
\end{restatable}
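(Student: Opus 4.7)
The plan is a parameterized gap-reduction from \textsc{Max $(1-e^{-1}+\varepsilon/2,\,k)$-Coverage} to \pcrbdsshort, so that any $(1,\,1-e^{-1}+\varepsilon)$-approximation for \pcrbdsshort that runs faster than $\Omega_k(|\mathcal{F}|^{k^{\Omega(1)}})$ would violate \Cref{prop:cover}. Given an instance $(\mathcal{F},\mathcal{U},k)$ of \textsc{Max $(\beta,k)$-Coverage} with $\beta=1-e^{-1}+\varepsilon/2$, build the following \pcrbdsshort instance $\cI=(\gcon,\gcov,k',t)$. Let $R=\{v_F:F\in\mathcal{F}\}$, $B=\{u_e:e\in\mathcal{U}\}$, and add the edge $v_F u_e$ to $\gcov$ iff $e\in F$; set $t=|\mathcal{U}|$. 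For the clique version take $\gcon$ to be the complete graph on $R$ and $k'=k$. For the star version add a fresh vertex $z$ with no neighbors in $\gcov$, let $\gcon$ be the star with center $z$ and leaves $R$, and set $k'=k+1$; since $z$ covers no blue vertex, every connected solution of size $\le k+1$ effectively chooses at most $k$ sets from $\mathcal{F}$.

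The correctness argument is the usual gap-preservation. In the Type~I case there exist $k$ sets covering all of $\mathcal{U}$, so the corresponding $S\subseteq R$ (together with $z$ in the star version) is a connected solution of size $\le k'$ with $|N_{\gcov}(S)|\ge t$; hence any $(1,\,1-e^{-1}+\varepsilon)$-approximation is forbidden from returning option (ii) and must output a set $S$ with $|N_{\gcov}(S)|\ge (1-e^{-1}+\varepsilon)\,t$. In the Type~II case every $k$-subfamily covers at most $\beta t=(1-e^{-1}+\varepsilon/2)t<(1-e^{-1}+\varepsilon)t$, so no $S\subseteq R$ with $|S|\le k$ achieves $|N_{\gcov}(S)|\ge (1-e^{-1}+\varepsilon)t$, and in particular none achieves $|N_{\gcov}(S)|\ge t$; hence the only legal output is option (ii). Thus a single run of the approximation distinguishes the two cases.

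Combining, a $(1,\,1-e^{-1}+\varepsilon)$-approximation for \pcrbdsshort yields a decision procedure for \textsc{Max $(1-e^{-1}+\varepsilon/2,\,k)$-Coverage} with the same asymptotic running time (the reduction is polynomial in $|\mathcal{F}|+|\mathcal{U}|$ and preserves the parameter $k$ up to $+1$), so \Cref{prop:cover} forces a lower bound of $\Omega_k(|\mathcal{F}|^{k^{\Omega(1)}})$ under \textsf{ETH}; in particular no FPT algorithm (time $f(k)\cdot n^{O(1)}$) achieves this approximation. The only real subtlety is choosing the two gap parameters so that the \emph{strict} separation $\beta<1-e^{-1}+\varepsilon$ holds and so that Type~II instances rule out option~(i), which we handle by slackening from $\varepsilon$ to $\varepsilon/2$ on the Max-Coverage side; the star construction additionally requires noting that single-leaf solutions in $\gcon$ correspond to a single set of $\mathcal{F}$ and are controlled by the same Type~II bound (since $\beta t<t$ and a single set covers at most $\beta t$ in Type~II). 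Aside from these bookkeeping checks, the argument is essentially mechanical.
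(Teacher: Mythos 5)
Your proposal is correct and follows essentially the same route as the paper: reduce from \textsc{Max $(\beta,k)$-Coverage} via the set--element incidence bipartite graph as $\gcov$, with $\gcon$ a clique (or a star with a dummy center $z$ and $k'=k+1$), and invoke \Cref{prop:cover}. Your only deviation is slackening the coverage gap to $\beta=1-e^{-1}+\varepsilon/2$, which cleanly handles the boundary case where a Type~II instance could cover exactly $\beta|\mathcal{U}|$ elements --- a point the paper glosses over --- but this is a bookkeeping refinement rather than a different argument.
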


\begin{proof}
	 We give a  reduction from {\sc Max $(\beta,k)$-Coverage} problem with $\beta = 1-e^{-1}+\varepsilon$. Given an instance $(\mathcal{U}, \mathcal{F})$ of the {\sc Max $(\beta,k)$-Coverage}  problem with $\beta= 1-e^{-1}+\varepsilon$, we construct an instance $\cI = (\gcon, \gcov,  k', n)$ of \pcrbdsshort as follows:  
	$$
	V(\gcov) \leftarrow R = \{ v_F : F \in \mathcal{F} \} \uplus B = \{ u : u \in \mathcal{U} \},
	$$
	$$
	E(\gcov) \leftarrow \{ v_Fu  : F \in \mathcal{F}, u \in \mathcal{U} \}
	$$
Observe that \(\gcov\) is the incidence bipartite graph of the set system \((\mathcal{U},\mathcal{F})\).
We now analyze the two cases for the connectivity graph separately.

\paragraph*{(i) $\gcon$ is a clique:} 	Let $\gcon$ be the complete graph on $R$, and set $k' = k$.  
	Then any subset $R' \subseteq R$ is connected in $\gcon$.    It is easy to observe that  there exists $k$ subsets from $\mathcal{F}$ whose union size is  $\ell$ if and only if there is a vertex set $R'\subseteq R \subseteq V(\gcov)$ such that $|R'|=k, |N_{\gcov}(R)|=\ell$ and $\gcon[R']$ is connected. 
    
   We now show that an $(1, 1-e^{-1}+\varepsilon)$-approximation algorithm for \pcrbdsshort to the instance $\cI = (\gcon, \gcov,  k, n)$   can solve    {\sc Max $(\beta,k)$-Coverage} for the instance $(\mathcal{U}, \mathcal{F})$. This would imply \Cref{thm:hardtapproxpds} (for the case when corresponding connectivity graph  is a  clique) by \Cref{prop:cover}.
   
   Our approach is as follows: Let $\mathcal{A}$ be  $(1, 1-e^{-1}+\varepsilon)$-approximation algorithm for \pcrbdsshort to the instance $\cI$. 
    
    \begin{itemize}
        \item If $\mathcal{A}$ returns no, i.e., there is no $k$ size vertex set in $ R$  that dominates  $B$ such that  $\gcon[S]$ is connected  then we return that $(\mathcal{U}, \mathcal{F})$ is of Type II. 
        \item If $\mathcal{A}$ returns  some vertex subset of $R$ then we return that $(\mathcal{U}, \mathcal{F})$ is of Type I.  
    \end{itemize}

    \paragraph*{Correctness (soundness).}
Suppose the algorithm \(\mathcal{A}\) outputs \emph{no}. We claim that the corresponding
\((\mathcal{U},\mathcal{F})\) instance of \textsc{Max\((\beta,k)\)-Coverage} is of Type~II.
Assume, toward a contradiction, that it is Type~I. Then there exist \(k\) sets
\(\mathcal{F}'\subseteq \mathcal{F}\) whose union equals \(\mathcal{U}\).
Let \(R' \coloneqq \{v_{F} : F\in \mathcal{F}'\}\subseteq R\). By construction of \(\gcov\),
we have \(|N_{\gcov}(R')| = |\mathcal{U}|=n\). Moreover, by our construction of \(\gcon\)
(clique), the induced subgraph \(\gcon[R']\) is connected. Hence \(\mathcal{A}\)
should have answered \emph{yes}, a contradiction. Therefore, if \(\mathcal{A}\) returns
\emph{no}, the instance \((\mathcal{U},\mathcal{F})\) must be Type~II.
\paragraph*{Correctness (completeness).}
Assume \(\mathcal{A}\) outputs a vertex set \(R^*\). By the guarantee of \(\mathcal{A}\),
\(|R^*|\le k\) and \(R^*\) dominates at least \((1-e^{-1}+\varepsilon)\,n\) vertices of \(B\).
We claim that the corresponding \((\mathcal{U},\mathcal{F})\) instance of
\textsc{Max\((\beta,k)\)-Coverage} is Type~I.

Suppose, for contradiction, that \((\mathcal{U},\mathcal{F})\) is Type~II; that is, for
\(\beta=1-e^{-1}+\varepsilon\), every \(k\)-subfamily \(\mathcal{F}'\subseteq\mathcal{F}\)
covers fewer than \(\beta|\mathcal{U}|\) elements. By the reduction, domination in \(\gcov\)
matches coverage in \((\mathcal{U},\mathcal{F})\): for any \(k\)-subset \(R'\subseteq R\),
if \(\mathcal{F}'=\{F:\ v_F\in R'\}\), then \(R'\) dominates exactly the \(B\)-vertices
corresponding to the elements covered by \(\mathcal{F}'\). Hence no \(k\)-subset of \(R\)
can dominate \((1-e^{-1}+\varepsilon)\,n\) vertices of \(B\), contradicting the existence
of \(R^*\). Therefore the instance must be Type~I.


Therefore, a \((1,\,1-e^{-1}+\varepsilon)\)-approximation for \pcrbdsshort would let us
distinguish Type~I from Type~II instances of \textsc{Max} \((\beta,k)\)\textsc{-Coverage}.
By \Cref{prop:cover}, this distinction requires time
\(\Omega_k\!\bigl(|\mathcal{F}|^{\,k^{\Omega(1)}}\bigr)\) under \textsf{ETH}; hence any
\((1,\,1-e^{-1}+\varepsilon)\)-approximation for \pcrbdsshort must also run in
\(\Omega_k\!\bigl(|\mathcal{F}|^{\,k^{\Omega(1)}}\bigr)\) time. \qed

\paragraph*{(ii) $\gcon$ is a star:} 	
Introduce a new vertex \(z\) into \(R\), and let \(\gcon\) be the star centered at \(z\) with
leaf set \(R\setminus\{z\}\). Set \(k' \coloneqq k+1\) and \(t \coloneqq n=|\mathcal{U}|\).
Then any vertex subset \(S\subseteq R\) that contains \(z\) induces a connected subgraph of \(\gcon\)
(while any subset of size \(\ge 2\) avoiding \(z\) is disconnected). The rest of the reduction
mirrors the clique case; we include it for completeness.

It is immediate that there exist \(k\) sets \(\mathcal{F}'\subseteq \mathcal{F}\) whose union has
size \(\ell\) if and only if there is a set
\(R'\subseteq R\) with \(|R'|=k+1\), \(z\in R'\), \(|N_{\gcov}(R')|=\ell\), and \(\gcon[R']\)
connected. Indeed, take \(R'=\{z\}\cup\{v_F : F\in\mathcal{F}'\}\).

\medskip
\noindent\textbf{Using \(\mathcal{A}\) to decide Type~I vs.~Type~II.}
Let \(\mathcal{A}\) be a \((1,\,1-e^{-1}+\varepsilon)\)-approximation algorithm for \pcrbdsshort
on the instance \(\cI=(\gcon,\gcov,k',n)\).

\begin{itemize}[leftmargin=*]
  \item If \(\mathcal{A}\) returns \emph{no}, i.e., there is no \(S\subseteq R\) with \(|S|=k'\)
        (here \(k'=k+1\)) such that \(\gcon[S]\) is connected and \(S\) dominates \(n\) vertices
        of \(B\), we output that \((\mathcal{U},\mathcal{F})\) is \emph{Type~II}.
  \item If \(\mathcal{A}\) returns some \(R^{*}\subseteq R\), we output that
        \((\mathcal{U},\mathcal{F})\) is \emph{Type~I}.
\end{itemize}

\paragraph*{Correctness.}
If \(\mathcal{A}\) returns \emph{no} but \((\mathcal{U},\mathcal{F})\) were Type~I, then there would
exist \(\mathcal{F}'\subseteq\mathcal{F}\) with \(|\mathcal{F}'|=k\) covering all \(n\) elements,
and hence \(R''\coloneqq \{z\}\cup\{v_F:F\in\mathcal{F}'\}\) would satisfy \(|R''|=k+1\),
\(|N_{\gcov}(R'')|\ge n\), and \(\gcon[R'']\) connected—a contradiction. Thus, if \(\mathcal{A}\)
returns \emph{no}, the instance is Type~II.

Conversely, if \(\mathcal{A}\) outputs \(R^{*}\), then by the approximation guarantee
\(|R^{*}|\le k+1\), \(R^{*}\) dominates at least \((1-e^{-1}+\varepsilon)n\) vertices of \(B\), and
\(\gcon[R^{*}]\) is connected; in a star this forces \(z\in R^{*}\).
If \((\mathcal{U},\mathcal{F})\) were Type~II (with \(\beta=1-e^{-1}+\varepsilon\)), then no
\(k\)-subfamily would cover \(\beta n\) elements, and therefore no \((k+1)\)-subset of \(R\) with
\(z\) included would dominate \(\beta n\) vertices—contradicting the existence of \(R^{*}\).
Hence the instance must be Type~I.

\medskip
Therefore, distinguishing Type~I from Type~II instances of
\textsc{Max}\((\beta,k)\)\textsc{-Coverage} reduces to obtaining a
\((1,\,1-e^{-1}+\varepsilon)\)-approximation for \pcrbdsshort. By \Cref{prop:cover},
\textsc{Max}\((\beta,k)\)\textsc{-Coverage} requires
\(\Omega_k\!\bigl(|\mathcal{F}|^{\,k^{\Omega(1)}}\bigr)\) time under \textsf{ETH}, and the claimed
hardness in \Cref{thm:hardtapproxpds} follows for the star connectivity case as well.

Introduce a new vertex \(z\) into \(R\), and let \(\gcon\) be the star centered at \(z\) with
leaf set \(R\setminus\{z\}\). Set \(k' \coloneqq k+1\) and \(t\coloneqq n=|\mathcal{U}|\).
Then any vertex subset \(S\subseteq R\) containing \(z\) induces a connected subgraph in \(\gcon\).

The rest of the reduction is analogous to the clique case; we include it for completeness.
For any subfamily \(\mathcal{F}'\subseteq \mathcal{F}\) with \(|\mathcal{F}'|=k\), define
\[
R' \;\coloneqq\; \{z\}\ \cup\ \{v_F : F\in \mathcal{F}'\}\ \subseteq\ R .
\]
By construction of \(\gcov\), we have
\(|N_{\gcov}(R')| \;=\; \bigl|\bigcup_{F\in \mathcal{F}'} F\bigr|\),
and since \(z\in R'\), the subgraph \(\gcon[R']\) is connected. Conversely, any
\(S\subseteq R\) with \(z\in S\) and \(|S|=k+1\) corresponds to a \(k\)-subfamily
\(\mathcal{F}'\) obtained from \(S\setminus\{z\}\), with the same coverage size in \(\gcov\).

Therefore, a \((1,\,1-e^{-1}+\varepsilon)\)-approximation for \pcrbdsshort\ on the instance
\(\cI=(\gcon,\gcov,k',t)\) (with \(\gcon\) a star) would solve
\textsc{Max\((\beta,k)\)-Coverage} on \((\mathcal{U},\mathcal{F})\) for
\(\beta=1-e^{-1}+\varepsilon\). By \Cref{prop:cover}, this implies the lower bound in
\Cref{thm:hardtapproxpds} also holds when the connectivity graph is a star.	This completes the proof.
\end{proof}

\subsection{Hardness of Approximating  Budgeted Connected Dominating Set}
In this section we observe that even a \emph{graphical} variant of \pcrbdsshort\ is hard to approximate.
In this variant the input is a single graph \(G\) and integers \(k,t\), and the goal is to decide
whether there exists a \(k\)-vertex set \(S\subseteq V(G)\) such that \(G[S]\) is connected and
\(|N_G[S]|\ge t\). In literature this problem is called {\sc Budgeted Connected Dominating Set} (\bcdshort)~\cite{DBLP:journals/siamdm/KhullerPS20}. 

This can be realized as \pcrbdsshort\ with tightly coupled coverage and connectivity graphs:
set \(\gcon := G\) and define \(\gcov\) as follows. Create two copies of \(V(G)\), denoted
\(R=\{v^R:v\in V(G)\}\) and \(B=\{v^B:v\in V(G)\}\). For each \(u\in V(G)\), make \(u^R\) adjacent
in \(\gcov\) to every \(v^B\) with \(v\in N_G[u]\) (the closed neighborhood in \(G\)).
Under this construction, for any \(S_0\subseteq V(G)\) and its copy \(S=S_0^R\subseteq R\),
we have \(N_{\gcov}(S)=\{v^B: v\in N_G[S_0]\}\), so \(|N_{\gcov}(S)|=|N_G[S_0]|\).
Thus the graphical variant is exactly the special case of \pcrbdsshort\ with \(\gcon=G\) and
\(\gcov\) defined as above.



We derive our hardness via the reduction underlying the polynomial-time
hardness-of-approximation for \bcdshort~\cite{DBLP:journals/siamdm/KhullerPS20}, which
reduces from \textsc{Max}\((\beta,k)\)\textsc{-Coverage}.  For completeness, we present the full proof.

\begin{theorem}\label{theo:relatecov}
    Assuming {\sf ETH}, for any $\varepsilon >0$ in \fpt time  it is
impossible to distinguish between the following instances $(G,k,t)$ of the \bcdshort 

   \begin{itemize}
   \setlength{\itemsep}{-2pt}
    \item Type 1:  There exists a set of $k$ vertices $S$ which dominates at least $t$ vertices  such that $G[S]$ is connected.

    \item Type 2: Any $k$ subsets from $V(G)$ dominates at most  $(1-e^{-1}+2\varepsilon)t$ vertices such that $G[S]$ is connected.
\end{itemize}
Moreover, assuming {\sf ETH}, any algorithm  to distinguish between the above instances  of the \bcdshort   must take runtime $\Omega_k(|\mathcal{F}|^{k^{\Omega(1)}})$.

\end{theorem}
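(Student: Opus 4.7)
My plan is to give a gap-preserving FPT reduction from \textsc{Max}$(\beta,k)$\textsc{-Coverage} with $\beta = 1 - e^{-1} + \varepsilon$, so that \Cref{prop:cover} transfers the \textsf{ETH} lower bound to \bcdshort.

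Given an instance $(\mathcal{U},\mathcal{F},k)$ with $|\mathcal{U}|=n$ and $|\mathcal{F}|=m$, my first step will be to \emph{amplify the element side}: I would replace each $u\in\mathcal{U}$ by $L$ identical copies (each inheriting the same set-memberships), choosing $L=\Theta(m/(\varepsilon n))$. This duplication preserves the $(\beta,k)$-coverage property verbatim. I then build the graph $G$ with vertex set $R\uplus\mathcal{U}'$, where $R=\{v_F:F\in\mathcal{F}\}$ is turned into a clique and each $v_F$ is joined to every copy of every element of $F$. The parameters are $k'\coloneqq k$ and $t\coloneqq m+Ln$. Completeness (Type I $\Rightarrow$ Type~1) will be immediate: a $k$-sized covering family $\mathcal{F}'$ of $\mathcal{U}$ yields $S=\{v_F:F\in\mathcal{F}'\}$, which is a clique in $G$ and dominates all of $R\cup\mathcal{U}'$, giving $|N_G[S]|=m+Ln=t$.

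The main obstacle will be soundness: I must argue that under the Type~II promise, \emph{every} connected $k$-subset $S\subseteq V(G)$—not merely those lying entirely in $R$—satisfies $|N_G[S]|\le (1-e^{-1}+2\varepsilon)\,t$. The key structural observation is that an element-vertex has all its neighbors in $R$, so any connected $k$-set either contains at least one set-vertex or degenerates (a pure element set is disconnected whenever $k\ge 2$, and a singleton dominates at most $m+1$ vertices, which is far below the threshold). A swap argument then disposes of the mixed case: replacing each element-vertex $u\in S$ by the set-vertex $v_{F_u}$ for some arbitrary $F_u\ni u$ can only increase $|N_G[S]|$, reducing the analysis to $S\subseteq R$, for which $|N_G[S]|=m+\bigl|\bigcup_{v_F\in S}F\bigr|\le m+\beta Ln$ by Type~II. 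A short calculation then confirms that $m+\beta Ln\le(1-e^{-1}+2\varepsilon)(m+Ln)$ precisely when $L\ge (e^{-1}-2\varepsilon)m/(\varepsilon n)$, which holds for the chosen value.

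Finally, the reduction runs in time polynomial in $|\mathcal{F}|$ and keeps the parameter fixed at $k$, so $|V(G)|$ is polynomial in $|\mathcal{F}|$. Hence any algorithm distinguishing Type~1 from Type~2 of \bcdshort\ in time $f(k)\cdot|V(G)|^{o(k^{\Omega(1)})}$ would yield an algorithm of the same asymptotic form for \textsc{Max}$(\beta,k)$\textsc{-Coverage}, contradicting \Cref{prop:cover} under \textsf{ETH}.
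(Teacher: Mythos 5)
Your proposal is correct and follows essentially the same route as the paper: a gap-preserving reduction from \textsc{Max}$(\beta,k)$\textsc{-Coverage} with $\beta=1-e^{-1}+\varepsilon$ that makes the set-vertices a clique and blows up each element into $\Theta(|\mathcal{F}|/(\varepsilon n))$ copies so that the additive $|\mathcal{F}|$ slack from the clique is absorbed into the $\varepsilon$ gap. The only differences are cosmetic --- you fold the clique into the threshold $t$ and handle mixed solutions via a swap argument rather than the paper's direct subfamily extraction with an overcount of at most $|\mathcal{F}|$ --- and both yield the same bound via \Cref{prop:cover}.
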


\begin{proof}
Given an instance \((\mathcal{U},\mathcal{F})\) of \textsc{Max}\((\beta,k)\)\textsc{-Coverage} with
\(|\mathcal{U}|=n\) and \(\beta=1-e^{-1}+\varepsilon\), we construct an instance \((G,k',t)\) of
\bcdshort as follows. Let
\[
M \;\coloneqq\; \frac{|\mathcal{F}|}{\varepsilon n}\quad\text{(assume \(M\in\mathbb{N}\); otherwise take \(M=\lceil |\mathcal{F}|/(\varepsilon n)\rceil\)).}
\]
Define the vertex sets
\[
V_{\mathrm{comp}} \;=\; \{\,v_F : F\in\mathcal{F}\,\},
\qquad
V_{\mathrm{ind}} \;=\; \{\,u_i : u\in\mathcal{U},\ i\in[M]\,\},
\]
and set \(V(G) \coloneqq V_{\mathrm{comp}}\uplus V_{\mathrm{ind}}\).
The edge set is
\[
E(G) \;\coloneqq\; \{\, v_F u_i : u\in F,\ i\in[M]\,\} \;\cup\; \{\, v_F v_{F'} : F,F'\in\mathcal{F},\ F\neq F'\,\},
\]
i.e., \(G[V_{\mathrm{comp}}]\) is a clique and each \(v_F\) is adjacent to the \(M\) copies
of every element \(u\in F\).
Finally, set \(k'\coloneqq k\) and \(t\coloneqq Mn\).

We show that any procedure that distinguishes \emph{Type~1} from \emph{Type~2} instances of
\bcdshort (for the above thresholds) also distinguishes \emph{Type~I} from \emph{Type~II}
instances of \textsc{Max}\((\beta,k)\)\textsc{-Coverage}. By \Cref{prop:cover}, this yields
\Cref{theo:relatecov}.

\paragraph*{(\(\text{\bcdshort Type~2} \Rightarrow \text{Coverage Type~II}\)).}
Assume, for contradiction, that the constructed \textsc{BCDS} instance is Type~2 but
\((\mathcal{U},\mathcal{F})\) is Type~I. Then there exists \(\mathcal{F}'\subseteq\mathcal{F}\)
with \(|\mathcal{F}'|=k\) and \(\bigl|\bigcup_{F\in\mathcal{F}'} F\bigr|=n\).
Let \(S\coloneqq \{v_F : F\in\mathcal{F}'\}\subseteq V_{\mathrm{comp}}\).
Since \(G[V_{\mathrm{comp}}]\) is a clique, \(G[S]\) is connected, and every \(u\in\mathcal{U}\)
is adjacent to each \(v_F\) with \(u\in F\), the set \(S\) dominates all of \(V_{\mathrm{ind}}\),
so \(|N_G[S]|\ge |V_{\mathrm{ind}}|=Mn=t\). Hence the \bcdshort instance is Type~1,
a contradiction.

\paragraph*{(\(\text{Coverage Type~II} \Rightarrow \text{\bcdshort Type~2}\)).}
Assume \((\mathcal{U},\mathcal{F})\) is Type~II (so no \(k\)-subfamily covers
\((1-e^{-1}+\varepsilon)n\) elements), and suppose, for contradiction, that the constructed
\bcdshort instance is \emph{not} Type~2. Then there exists \(S\subseteq V(G)\) with \(|S|\le k\),
\(G[S]\) connected, and
\[
|N_G[S]| \;>\; (1-e^{-1}+2\varepsilon)\,Mn.
\]
We build a \(k\)-subfamily \(\mathcal{F}''\subseteq \mathcal{F}\) as follows:
(i) include every \(F\) with \(v_F\in S\cap V_{\mathrm{comp}}\);
(ii) for each \(u_i\in S\cap V_{\mathrm{ind}}\), if possible add some \(F\) with \(v_F\notin S\) and \(u\in F\).
Clearly \(|\mathcal{F}''|\le |S|\le k\).
Each dominated vertex in \(V_{\mathrm{ind}}\) corresponds to one of the \(M\) copies of some
covered element \(u\in\mathcal{U}\), while we may overcount by at most \(|\mathcal{F}|\) due to
neighbors inside \(V_{\mathrm{comp}}\). Therefore the total coverage of \(\mathcal{F}''\) exceeds
\[
\frac{(1-e^{-1}+2\varepsilon)Mn - |\mathcal{F}|}{M}
\;\;=\;\; (1-e^{-1}+\varepsilon)\,n,
\]
contradicting that \((\mathcal{U},\mathcal{F})\) is Type~II.

\noindent
This completes the reduction and the proof.
\end{proof}

As a standard corollary—translating the gap reduction into a hardness-of-approximation lower bound—we obtain the following from \Cref{theo:relatecov}.

\begin{corollary}
\label{cor:fpthardnessbcds}
Assuming {\sf ETH}, there is no \fpt\ algorithm parameterized by \(k\) that achieves a
\((1,\,1-e^{-1}+\varepsilon)\)-approximation for \bcdshort.
Moreover, under {\sf ETH}, for any fixed \(\varepsilon\in(0,1)\), every
\((1,\,1-e^{-1}+\varepsilon)\)-approximation algorithm for \bcdshort must run in time
\(\Omega_k\!\bigl(|\mathcal{F}|^{\,k^{\Omega(1)}}\bigr)\). 
\end{corollary}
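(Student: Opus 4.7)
\textbf{Proof proposal for Corollary \ref{cor:fpthardnessbcds}.} The plan is to derive the corollary as a direct contrapositive of Theorem \ref{theo:relatecov}, by showing that any sufficiently strong FPT approximation for \bcdshort would distinguish the Type~1 / Type~2 gap instances produced by that theorem. Concretely, fix any $\varepsilon \in (0,1)$ and suppose, toward a contradiction, that there is an $(1,\,1-e^{-1}+\varepsilon)$-approximation algorithm $\mathcal{A}$ for \bcdshort with running time $f(k)\cdot |V(G)|^{o(k^{\Omega(1)})}$. Applying Theorem \ref{theo:relatecov} with slack parameter $\varepsilon':=\varepsilon/3$ produces, from an instance $(\mathcal{U},\mathcal{F})$ of \textsc{Max}$(\beta,k)$\textsc{-Coverage}, a \bcdshort instance $(G,k,t)$ of one of two types, which are ETH-hard to distinguish in time $o_k(|\mathcal{F}|^{k^{\Omega(1)}})$.

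Next I would run $\mathcal{A}$ on $(G,k,t)$ and read off the decision from its output. If $(G,k,t)$ is Type~1, there exists $S^\star \subseteq V(G)$ with $|S^\star|=k$, $G[S^\star]$ connected, and $|N_G[S^\star]|\ge t$; hence by the approximation guarantee $\mathcal{A}$ must return some $S$ with $|S|\le k$, $G[S]$ connected, and $|N_G[S]|\ge (1-e^{-1}+\varepsilon)t$. If instead $(G,k,t)$ is Type~2, then every $k$-vertex set $S$ with $G[S]$ connected satisfies
\[
|N_G[S]|\;\le\;(1-e^{-1}+2\varepsilon')t \;=\;(1-e^{-1}+\tfrac{2\varepsilon}{3})t \;<\; t,
\]
so in particular the underlying feasibility question (a connected $k$-set dominating $t$ vertices) admits no solution at all; by the specification of a bicriteria approximation algorithm, $\mathcal{A}$ must then correctly report infeasibility.

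Thus the two possible outputs of $\mathcal{A}$ (``returns a solution'' versus ``reports infeasibility'') cleanly separate Type~1 from Type~2, giving a decision procedure of the same asymptotic complexity as $\mathcal{A}$. By Theorem \ref{theo:relatecov} this decision task requires time $\Omega_k(|\mathcal{F}|^{k^{\Omega(1)}})$ under ETH, which contradicts the assumed running time of $\mathcal{A}$ and establishes both claims of the corollary simultaneously: no FPT-time $(1,\,1-e^{-1}+\varepsilon)$-approximation exists, and more strongly every such approximation must take time $\Omega_k(|\mathcal{F}|^{k^{\Omega(1)}})$.

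I do not anticipate a significant technical obstacle here, since all the heavy lifting, namely the gap-producing reduction and its ETH lower bound, has already been established in Theorem \ref{theo:relatecov}. The only point worth double-checking is the choice of slack $\varepsilon'=\varepsilon/3$ (any $\varepsilon'<\varepsilon/2$ suffices) to ensure that the Type~2 coverage bound $(1-e^{-1}+2\varepsilon')t$ is strictly less than $t$, so that the algorithm is forced into the ``infeasible'' branch and the two cases become distinguishable from $\mathcal{A}$'s output alone.
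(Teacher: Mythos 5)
Your proposal is correct and is exactly the ``standard corollary'' argument the paper alludes to but does not spell out: run the hypothesized \((1,\,1-e^{-1}+\varepsilon)\)-approximation on the gap instances of \Cref{theo:relatecov} instantiated with a smaller slack, and read off Type~1 versus Type~2 from whether the algorithm returns a set or reports infeasibility.

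One small slip worth fixing: your parenthetical claim that any \(\varepsilon'<\varepsilon/2\) ensures \((1-e^{-1}+2\varepsilon')t<t\) is not right --- that inequality requires \(2\varepsilon'<e^{-1}\), independently of \(\varepsilon\). With \(\varepsilon'=\varepsilon/3\) and \(\varepsilon\) close to \(1\), the Type~2 bound \((1-e^{-1}+2\varepsilon/3)t\) can exceed \(t\), in which case Type~2 instances need not be infeasible and the ``must report infeasibility'' step (and indeed the disjointness of the two promise classes) breaks down. The condition \(2\varepsilon'<\varepsilon\) is what makes the two output behaviors distinguishable; the condition \(2\varepsilon'<e^{-1}\) is what makes Type~2 instances genuinely infeasible. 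The clean fix is either to take \(\varepsilon'<\min\{\varepsilon/2,\,1/(2e)\}\), or to first observe that a \((1,\,1-e^{-1}+\varepsilon)\)-approximation is in particular a \((1,\,1-e^{-1}+\varepsilon_0)\)-approximation for any \(0<\varepsilon_0\le\varepsilon\), reduce to a fixed small \(\varepsilon_0<1/(2e)\), and then run your argument verbatim with \(\varepsilon'=\varepsilon_0/3\). With that adjustment the proof is complete.
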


 \subsection{$(g(k),1)$-approximation  for \pcrbdsshort}

In this section we rule out the possibility of a \((g(k),1)\)-approximation for \pcrbdsshort,
for any computable function \(g\), when parameterized by \(k\), under standard complexity
assumptions. Our hardness will follow from a reduction from the classical
\textsc{Dominating Set} problem.

\paragraph*{Dominating Set.}
Given a graph \(G\) and an integer \(k\), decide whether there exists a dominating set
\(D\subseteq V(G)\) with \(|D|\le k\), i.e., \(N_G[D]=V(G)\).
An algorithm is an \(\alpha(k)\)-\fpt\ approximation for \textsc{Dominating Set} if it runs in
\fpt\ time parameterized by \(k\) and, whenever \(G\) has a dominating set of size at most \(k\),
it outputs a dominating set of size at most \(\alpha(k)\cdot k\).
The following hardness results for \textsc{Dominating Set} are known.

 \begin{proposition}{\rm \cite[Theorem 1.3,1.4,1.5]{DBLP:journals/jacm/SLM19}}\label{prop:gcover}
 \begin{enumerate}
 \item Assuming $\mathsf{W[1]} \neq \mathsf{FPT}$,  no $\mathsf{FPT}$ time algorithm can approximate {\sc Dominating Set}  to within a
factor of $(\log n)^{1/{\mathsf{poly}}(k)}$.
     \item  Assuming {\sf ETH}, no $f(k) \cdot n^{o(k)}$
-time algorithm can approximate {\sc Dominating Set} to within a
factor of $(\log n)^{1/{\mathsf{poly}}(k)}$.
\item There is a function $h : \mathbb{R}^+ \to \mathbb{N}$ such that, assuming {\sf SETH}, for every integer $k \geq  2$
and for every $\varepsilon >0$, no $\mathcal{O}(n^{k - \varepsilon})$-time algorithm can approximate {\sc Dominating Set}  to within a
factor of $(\log n)^{1/{\mathsf{poly}}(k,h(\varepsilon))}$.
\end{enumerate}
 \end{proposition}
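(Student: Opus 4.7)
The plan is to prove all three parts of \Cref{prop:gcover} by the unified framework of Karthik, Laekhanukit, and Manurangsi, via a single gap-producing reduction from (a colored variant of) \textsc{MaxCover} into \textsc{Dominating Set}, and then instantiating it with three different starting hard problems tuned to \(\mathsf{W[1]}\neq \mathsf{FPT}\), \textsf{ETH}, and \textsf{SETH}. Concretely, \textsc{MaxCover} asks, given a universe \(U\) partitioned into \(k\) "color" classes of sets, whether there exist \(k\) sets (one per class) covering \(U\); its gap version asks to distinguish perfect coverage from coverage at most \(1/\Gamma\). The overall strategy is: (i) reduce the appropriate baseline hard problem to a gap \textsc{MaxCover} with \(\Gamma=(\log n)^{1/\mathrm{poly}(k)}\); (ii) compose with a Feige-style "threshold" gadget to obtain a gap \textsc{Set Cover} instance with a parameterized gap; (iii) translate to \textsc{Dominating Set} via the standard bipartite-to-general incidence reduction.

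First I would set up the gadget. For a ground set \(U\) of size \(m\) and an integer parameter \(\ell\), let \(\mathrm{TH}_\ell(U)\) be a bipartite graph on \(U\) and a right side of "points" such that every \(\ell\) left vertices leave \(\Omega(m/2^\ell)\) right points uncovered, whereas certain "canonical" \(\ell\)-subsets cover everything. Such constructions go back to Feige and can be realized via small-bias sets or universe embeddings. I would then build, from a \textsc{MaxCover} instance with classes \(\mathcal{S}_1,\ldots,\mathcal{S}_k\), a graph with one copy of \(\mathrm{TH}\) per class, plus edges/vertices that synchronize choices across classes. A \yes-instance yields a dominating set of size \(k\); a gap \no-instance forces \(\Omega(\Gamma\cdot k)\) vertices to dominate everything, producing a gap of \(\Gamma\) in the dominating set size.

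Next, I would plug in the three starting points. For part (1), start from the FPT-inapproximability of \textsc{MaxCover} from \(k\)-\textsc{Clique} (\(\mathsf{W[1]}\)-hardness), and amplify the gap via an "inner" distributed-PCP-style encoding whose blow-up depends only on \(k\) and on the target gap; this yields \((\log n)^{1/\mathrm{poly}(k)}\)-inapproximability of \textsc{Dominating Set} under \(\mathsf{W[1]}\neq\mathsf{FPT}\). For part (2), refine the reduction so that its running time is \(f(k)\cdot n^{O(1)}\) and its output size is \(n^{O(1)}\); then an \(f(k)\cdot n^{o(k)}\)-time \((\log n)^{1/\mathrm{poly}(k)}\)-approximation would contradict \textsf{ETH} through the standard sparsification-preserving chain \(k\)-\textsc{SAT}\(\to k\)-\textsc{Clique}\(\to\) gap \textsc{MaxCover}. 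For part (3), replace the starting point by the \textsc{OV}/\textsc{SAT}-style reduction à la Pătraşcu–Williams tuned so that the reduction's graph has size \(n\) and any \(O(n^{k-\varepsilon})\)-time algorithm finding an \((\log n)^{1/\mathrm{poly}(k,h(\varepsilon))}\)-approximate dominating set would refute \textsf{SETH}; here the parameter \(h(\varepsilon)\) appears because the inner gap-amplification's polynomial exponent has to swallow \(\varepsilon\).

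The hard part will be the gap amplification inside the reduction, i.e., going from a constant-factor MaxCover gap (given by PCPs/parallel repetition or by clique gadgets) up to a gap of \((\log n)^{1/\mathrm{poly}(k)}\) while keeping the number of color classes fixed at \(k\) and the universe only polynomially larger. Feige's original product amplification multiplies \(k\) by \(\log n\), which would destroy the parameterization; the resolution is to use an \emph{efficient distributed PCP} (or equivalently, a communication-efficient multi-party equality protocol) whose number of players stays \(O(k)\) but whose soundness decays polynomially in the universe size. Once this "parameterized PCP-of-proximity" is in hand, the composition with \(\mathrm{TH}\) and the translation to \textsc{Dominating Set} are comparatively routine, and the three hardness regimes of \Cref{prop:gcover} fall out by reading off the respective running-time restrictions from the base assumption.
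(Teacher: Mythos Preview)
The paper does not prove \Cref{prop:gcover} at all: it is stated as a black-box citation to \cite[Theorems~1.3--1.5]{DBLP:journals/jacm/SLM19} and used without proof. So there is nothing to compare against on the paper's side; your proposal is an attempt to re-derive the cited result of Karthik, Laekhanukit, and Manurangsi rather than anything the present paper does.

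As for the sketch itself, you have the right skeleton (the \textsc{MaxCover}/Label-Cover intermediate problem, the distributed-PCP gap amplification that keeps the number of ``players'' fixed at $k$, and the three different starting hardnesses for the three assumptions), but a few details are off. The \textsc{MaxCover} of \cite{DBLP:journals/jacm/SLM19} is a Label-Cover-style constraint-satisfaction problem (left super-nodes in $k$ groups, right super-nodes, projection constraints), not a set-cover-with-color-classes as you phrase it; the passage to \textsc{Set Cover}/\textsc{Dominating Set} goes via a hypercube-partition gadget rather than a generic ``$\mathrm{TH}_\ell$'' threshold system; and the starting point for part~(1) is not $k$-\textsc{Clique} directly but the \textsc{One-Sided Biclique} / product-space-problem hardness that already yields a gap instance. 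These are fixable, but for the purposes of this paper you should simply cite the proposition as the authors do.
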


 We now show that analogous hardness result  for \pcrbdsshort.

  \begin{sloppypar}

\begin{restatable}{theorem}{hardkapproxpds}  \label{thm:hardkapproxpds}
	The following statements hold even when the connectivity graph $\gcon$ is restricted to be either a clique or a star.
\begin{enumerate}
 \item Assuming $\mathsf{W[1]} \neq \mathsf{FPT}$,  no $f(k) \cdot n^{\mathcal{O}(1)}$ time algorithm can give  $((\log n)^{1/{\mathsf{poly}}(k)},1)$-approximation for \pcrbdsshort.
     \item  Assuming {\sf ETH}, no $f(k) \cdot n^{o(k)}$
-time algorithm can give  $((\log n)^{1/{\mathsf{poly}}(k)},1)$-approximation for \pcrbdsshort.
\item There is a function $h : \mathbb{R}^+ \to \mathbb{N}$ such that, assuming {\sf SETH}, for every integer $k \geq  2$
and for every $\varepsilon >0$, no $\mathcal{O}(n^{k - \varepsilon})$-time algorithm can give  $(\log n)^{1/{\mathsf{poly}}(k,h(\varepsilon))},1)$-approximation for \pcrbdsshort. 
\end{enumerate}
\end{restatable}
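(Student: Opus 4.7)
The plan is to reduce \textsc{Dominating Set} to \pcrbdsshort\ via a polynomial, approximation-preserving reduction, and then invoke the three hardness statements of \Cref{prop:gcover} in turn. The construction essentially mirrors the one used in \Cref{thm:hardkpds}: given an instance $(G,k)$ of \textsc{Dominating Set} with $n = |V(G)|$, build the bipartite coverage graph $\gcov$ on $R \uplus B$, where $R = \{v^R : v \in V(G)\}$ and $B = \{v^B : v \in V(G)\}$, and $u^R v^B \in E(\gcov)$ iff $v \in N_G[u]$. For the \emph{clique} variant of $\gcon$, take the complete graph on $R$ and set $k' \coloneqq k$; for the \emph{star} variant, adjoin a fresh vertex $z$ to $R$ (with no edges in $\gcov$), let $\gcon$ be the star centered at $z$ with the remaining vertices of $R$ as leaves, and set $k' \coloneqq k+1$. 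In both settings, set $t \coloneqq n$.

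Next I would verify that the reduction is approximation-preserving in the following sense: any set $S \subseteq R$ with $|S| \le s$ that covers all of $B$ (equivalently, satisfies $|N_{\gcov}(S)| \ge t$) and is connected in $\gcon$ projects to a dominating set of $G$ of size at most $s$ (clique case) or $s-1$ (star case, after discarding the forced vertex $z$, which has no neighbor in $B$); conversely, any dominating set of $G$ of size $\ell$ lifts to such an $S$ of size $\ell$ or $\ell+1$, whose connectivity in $\gcon$ is immediate (either because $\gcon$ is a clique, or because we include $z$ alongside the leaves).

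Hence, a hypothetical $(g(k'), 1)$-approximation algorithm $\mathcal{A}$ for \pcrbdsshort\ with $g(k') = (\log |\cI|)^{1/\mathsf{poly}(k')}$ yields an \textsf{FPT}-approximation for \textsc{Dominating Set} whose approximation factor is of the same form $(\log n)^{1/\mathsf{poly}(k)}$, since $|V(\cI)| = \Theta(n)$ and $k' \in \{k, k+1\}$. Substituting this algorithm into the three parts of \Cref{prop:gcover} then gives the three conclusions: part~(1) rules out \textsf{FPT}-time $(g(k),1)$-approximations under $\mathsf{W}[1] \neq \mathsf{FPT}$, part~(2) rules out $f(k)\cdot n^{o(k)}$-time algorithms under \textsf{ETH}, and part~(3) rules out $\mathcal{O}(n^{k-\varepsilon})$-time algorithms under \textsf{SETH}.

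The only real bookkeeping step—and the place where I expect to spend the most care—is confirming that the $(1+1/k)$ blow-up from the shift $k \to k+1$ in the star case, together with the passage from $\log n$ to $\log(2n+1)$, can both be absorbed into the $\mathsf{poly}(\cdot)$ appearing in the exponent of $\log n$; and that the exponent of $n$ in the running time is not shifted in a way that invalidates part~(3) of \Cref{prop:gcover} (where a constant shift in the parameter is harmless after re-indexing). These are routine once the reduction is in place, and no substantially new technical idea is needed beyond the construction already used in \Cref{thm:hardkpds}.
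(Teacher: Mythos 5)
Your clique case is exactly the paper's argument: same coverage graph, parameter preserved, instance size $\Theta(n)$, so all three parts of \Cref{prop:gcover} transfer directly. The star case is where you diverge. You adjoin a fresh, coverage-less center $z$ and set $k'=k+1$ (the construction used in \Cref{thm:hardkpds}), whereas the paper instead gives a Turing reduction: it builds $n$ connectivity graphs $\gcon^i$, each a star centered at an \emph{original} vertex $v_i\in R$, runs the approximation algorithm on all $n$ instances with parameter still equal to $k$, and returns the best answer. The point of that design is precisely to avoid shifting the parameter; the price is an extra multiplicative factor of $n$ in the running time. Your version is cleaner to state and perfectly adequate for parts (1) and (2): classes of the form $f(k)\cdot n^{\Oh(1)}$ and $f(k)\cdot n^{o(k)}$ are closed under $k\mapsto k+1$, and the factor $(k+1)/k\le 2$ in the approximation ratio and the passage from $\log(2n+1)$ to $\log n$ are absorbed into $\mathrm{poly}(k)$ exactly as you say.

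The genuine gap is in part (3) of the star case. There the lower bound is $\Oh(n^{k-\varepsilon})$, and a shift of the parameter by one unit costs a full power of $n$, which cannot be ``re-indexed away'': an $\Oh(N^{k'-\varepsilon})$-time algorithm for \pcrbdsshort\ with $k'=k+1$ only yields an $\Oh(n^{k+1-\varepsilon})=\Oh(n^{k-(\varepsilon-1)})$-time approximation for \textsc{Dominating Set} with parameter $k$, which contradicts \Cref{prop:gcover}(3) only when $\varepsilon>1$. Starting instead from a \textsc{Dominating Set} instance with parameter $k-1$ runs into the same one-power loss on the other side. So your claim that ``a constant shift in the parameter is harmless after re-indexing'' fails exactly where it matters. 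To prove part (3) for the star restriction you need a parameter-preserving reduction, which is what the paper's Turing reduction over original-vertex-centered stars provides (a dominating set $D$ of size $\ell$ is itself connected in the star centered at any $v_i\in D$, so no extra center is needed). Note, for fairness, that the paper's Turing reduction pays its own factor of $n$ in running time, so its part-(3) star argument is also not airtight as written; but the fix runs through keeping the parameter equal to $k$, not through adding a fresh center.
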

\end{sloppypar}

\begin{proof}
	We reduce from \textsc{Dominating Set}. Let \((G,k)\) be an instance with \(n=|V(G)|\).
We construct an instance \(\cI=(\gcon,\gcov,k,n)\) of \pcrbdsshort\ as follows.

	$$
	V(\gcov) \leftarrow R = \{ v : v \in V(G) \} \uplus B = \{ u : u \in V(G) \},
	$$
	$$
	E(\gcov) \leftarrow \{ vu  : v \in R, u \in B, u \in N_G[v] \},
	$$

 \begin{description}
     \item[(i) $\gcon$ is a clique:] 	Let $\gcon$ be the complete graph induced on $R$. Then any subset of $R$ induces a connected subgraph in $\gcon$.   Suppose $G$ admits a dominating set $D \subseteq V(G)$ of size $\ell \leq k$. Then, the corresponding vertices $R' = \{ v : v \in D \} \subseteq R$ satisfy
	\[
	|R'| = \ell, 
	\qquad |N_{\gcov}(R')| = n,
	\qquad \text{and } \gcon[R'] \text{ is connected}.
	\]
	Conversely, any solution $R'$ of \pcrbdsshort{}  of size $\ell$ with $|N_{\gcov}(R')| = n$ yields a dominating set $D \subseteq V(G)$ of size $\ell$. 
    Hence  any $(\alpha, 1)$-approximation algorithm for \pcrbdsshort on  $\cI = (\gcon, \gcov,  k, n)$  would yield an $\alpha$-approximation for {\sc $k$-DomSet} in $G$. Thus, by \Cref{prop:gcover}, \Cref{thm:hardkapproxpds} for the case $\gcon$ is a clique  holds. 

   \item[(ii) $\gcon$ is a star:] Here we give a {\em turing reduction}. In essence we create $n$ many corresponding connectivity graph. 	Let $V(G) = \{ v_1, v_2, \ldots, v_n \}$. For each $i \in [n]$, construct a connectivity graph $\gcon^i$ as a star centered at $v_i \in R$, with all other vertices of $R \setminus \{v_i\}$ as leaves.  Observe that $G$ has a dominating set $D \subseteq V(G)$ of size $\ell$ if and only if there exists some $i \in [n]$ such that the set $	R'' = \{v_i\} \cup \{ v : v \in D \} \subseteq R $
	satisfies
	\[
	|R''| = \ell, 
	\qquad |N_{\gcov}(R'')| = n, 
	\qquad \gcon^i[R''] \text{ is connected}.
	\]

	Now, for each $i \in [n]$, run the $(\alpha,1)$-approximation algorithm for \pcrbdsshort{} on $\cI^i = (\gcon^i, \gcov, k, n)$, and obtain a solution $O_i$. Let $
	O = \arg\min_{i \in [n]} |O_i|.$
	Then, $O$ yields an $\alpha$-factor approximation for {\sc $k$-DomSet} in $G$.  Thus, by \Cref{prop:gcover}, \Cref{thm:hardkapproxpds} for the case $\gcon$ is a star  holds.
 \end{description}
 
 \end{proof}
Note that a factor of the form \(\bigl((\log n)^{1/\mathrm{poly}(k)},\,1\bigr)\) is strictly stronger than any \(\bigl(g(k),\,1\bigr)\) factor. Indeed, suppose we have a \(\bigl(g(k),1\bigr)\)-\fpt-approximation for some computable \(g\). We obtain a \(\bigl((\log n)^{1/\mathrm{poly}(k)},1\bigr)\)-approximation by using the smaller of the two ratios:
\begin{itemize}[itemsep=0pt,leftmargin=*]
  \item If \(g(k)\le (\log n)^{1/\mathrm{poly}(k)}\), run the \(\bigl(g(k),1\bigr)\)-\fpt algorithm.
  \item Otherwise \(g(k)>(\log n)^{1/\mathrm{poly}(k)}\), so \(\log n < g(k)^{\mathrm{poly}(k)}\) and hence
        \(n \le \exp\bigl(g(k)^{\mathrm{poly}(k)}\bigr)\). Then brute-force search (e.g., \(n^k\) enumeration)
        runs in time \(n^k = \exp\bigl(k\log n\bigr) \le \exp\bigl(k\,g(k)^{\mathrm{poly}(k)}\bigr)\),
        which is \fpt\ in \(k\).
\end{itemize}

\subsection{{\sc ConnPHS} parameterized by $k$} \label{sec:lbconpvc}
In {\sc Connected Partial Vertex Cover (ConnPVC)}, we are 
given a graph \(G\) and integers \(k,t\ge 0\), and the objective is to decide whether there exists a set
\(S\subseteq V(G)\) with \(|S|\le k\) such that \(G[S]\) is connected and \(S\) covers at least
\(t\) edges of \(G\).

Observe that \textsc{ConnPVC} is a special case of \textsc{ConnPHS}, obtained when every set has
size exactly two. Consequently, every lower bound for \textsc{ConnPVC} immediately holds for
\textsc{ConnPHS}.



\begin{theorem}\label{theo:whard}
    {\sc ConnPVC} is \woh when parameterized by $k$.
\end{theorem}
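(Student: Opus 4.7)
The plan is to give an FPT reduction from \textsc{Partial Vertex Cover} (PVC), which is known to be $\mathsf{W[1]}$-hard parameterized by $k$ (Guo, Niedermeier, Wernicke). Recall that PVC takes as input a graph $G$ and integers $k,t$, and asks whether there is a set $S\subseteq V(G)$ with $|S|\le k$ such that $S$ covers at least $t$ edges of $G$. The main difficulty we need to handle is the lack of a connectivity constraint in PVC: an arbitrary PVC solution can be spread across many components of $G$, whereas a \textsc{ConnPVC} solution must induce a connected subgraph. The idea is to ``glue'' any PVC solution together using a single universal vertex, while making it strictly suboptimal not to select this vertex.

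Given a PVC instance $(G,k,t)$ with $n=|V(G)|$ and $m=|E(G)|$, construct $G'$ as follows. Let $M \coloneqq m+1$. Add to $G$ a single new vertex $u^\star$ and $M$ new pendant vertices $p_1,\dots,p_M$, and insert the edges
\[
\{u^\star v : v\in V(G)\}\ \cup\ \{u^\star p_i : i\in[M]\},
\]
so that $u^\star$ is universal to the original vertices and is additionally incident to $M$ private pendants. Set $k'\coloneqq k+1$ and $t'\coloneqq t+n+M$. Observe that $u^\star$ alone covers exactly $n+M$ edges of $G'$, while the whole subgraph induced by $V(G)$ contains only $m<M\le t'$ edges. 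Moreover, $G'[S']$ is automatically connected for any $S'\ni u^\star$ because $u^\star$ is adjacent in $G'$ to every other vertex of $G'$.

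For correctness, if $S\subseteq V(G)$ is a PVC solution, then $S'\coloneqq S\cup\{u^\star\}$ has size at most $k+1=k'$, $G'[S']$ is connected via $u^\star$, and $S'$ covers $(n+M)+|\{\text{edges of }G\text{ with an endpoint in }S\}|\ge n+M+t=t'$. Conversely, let $S'$ be a \textsc{ConnPVC} solution in $G'$. If $u^\star\notin S'$, then $S'$ can cover only edges of $G$, hence at most $m<t'$, a contradiction; so $u^\star\in S'$. Selecting pendants $p_i$ never increases coverage (every pendant edge is already covered by $u^\star$), so without loss of generality $S':=\{u^\star\}\cup S$ with $S\subseteq V(G)$ and $|S|\le k$, and then the coverage beyond the $n+M$ edges at $u^\star$ must come from edges of $G$ with an endpoint in $S$, so $S$ is a PVC solution covering at least $t$ edges.

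Since $k'=k+1$, this is a polynomial-time parameterized reduction from PVC parameterized by $k$ to \textsc{ConnPVC} parameterized by $k$, establishing $\mathsf{W[1]}$-hardness. The main subtlety is the calibration of $M$: it must be large enough (namely $M>m$) to force $u^\star$ into every feasible solution, yet independent of $k$ so that the new parameter $k'$ is just $k+1$; the pendant gadget achieves exactly this. Since every instance of \textsc{ConnPVC} is also an instance of \textsc{ConnPHS} with hyperedges of size two, the same reduction shows that \textsc{ConnPHS} is $\mathsf{W[1]}$-hard parameterized by $k$ as well.
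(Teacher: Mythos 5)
Your reduction is essentially identical to the paper's: both reduce from \textsc{Partial Vertex Cover}, attach a universal vertex plus roughly $m$ pendant neighbors to force that vertex into any solution, set $k'=k+1$ and $t'=t+n+(\text{number of pendants})$, and argue both directions the same way. One small imprecision: when $u^\star\notin S'$ the set $S'$ can still cover up to $|S'|\le k+1$ edges incident to $u^\star$ (via their other endpoints), so the coverage bound is $m+k+1$ rather than $m$; this still falls short of $t'$ whenever $k<n+t$ (which holds for non-trivial instances), so the argument goes through.
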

\begin{proof}
We give a parameterized reduction from \textsc{Partial Vertex Cover} (PVC) to \textsc{ConnPVC}.
In PVC, the input is a graph \(G\) with integers \(k,t\ge 0\); the task is to decide whether there
exists \(S\subseteq V(G)\) with \(|S|\le k\) that covers at least \(t\) edges of \(G\).
PVC is known to be \(\mathsf{W[1]}\)-hard~\cite{DBLP:journals/mst/GuoNW07}.

\medskip
\noindent\textbf{Reduction.}
Given an instance \((G=(V,E),k,t)\) of PVC with \(|V|=n\) and \(|E|=m\), construct
\((G',k',t')\) for \textsc{ConnPVC} as follows:
add a new vertex \(z\) adjacent to every vertex of \(G\), then add \(m\) new degree-1 vertices
and make each of them adjacent only to \(z\).
Set \(k'\coloneqq k+1\) and \(t'\coloneqq m+n+t\).

\medskip
\noindent\textbf{Forward direction.}
If \(S\subseteq V(G)\) is a PVC solution for \((G,k,t)\), then
\(S'\coloneqq S\cup\{z\}\) satisfies \(|S'|\le k'\) and \(G'[S']\) is connected (since \(z\) is
adjacent to every vertex in \(G\)).
Moreover, \(z\) alone covers all \(m+n\) new edges, and \(S\) covers at least \(t\) edges of \(G\),
so \(S'\) covers at least \(m+n+t=t'\) edges.
Hence \((G',k',t')\) is a \yes-instance of \textsc{ConnPVC}.

\medskip
\noindent\textbf{Backward direction.}
Let \(S'\) be a solution for \((G',k',t')\) with \(|S'|\le k'\), \(G'[S']\) connected, and
\(|N_{G'}[S']|\ge t'=m+n+t\).
If \(z\notin S'\), then with at most \(k'\) vertices one cannot cover all \(m+n\) edges incident to
\(z\) (each such edge requires selecting either \(z\) itself or its unique other endpoint), which
precludes reaching the threshold \(m+n+t\).
Therefore \(z\in S'\).
Since \(z\) covers exactly the \(m+n\) new edges, the remaining vertices
\(S\coloneqq S'\setminus\{z\}\subseteq V(G)\) must cover at least \(t\) edges of \(G\).
We also have \(|S|\le k\).
Thus \((G,k,t)\) is a \yes-instance of PVC.

\noindent This completes the reduction.
\end{proof}

\section{Conclusion} \label{sec:conclusion}

\begin{sloppypar}

We introduced \pcrbdsfull\ (\pcrbdsshort) as a unifying model for many connectivity-constrained coverage problems that have been studied in prior polynomial-time approximation literature. The expressivity of our model comes from having two separate graphs (i) $\gcon$ for imposing structural properties on the solution (in our case, connectivity), and (ii) $\gcov$ for modeling hypergraph incidences (which we use for requiring certain amount of coverage). We hope that this will serve as a model for understanding a variety of problems from this two-layer perspective.

As for our algorithmic contributions, we began with an exact \fpt algorithm parameterized by $t$. In the realm of \fpt approximation, we focused on biclique-free instances, and designed an EPAS that finds a size-$k$ connected set that dominates at least
$(1-\varepsilon)t$ blue vertices; and a complementary PAS that finds a connected set of size at most $(1+\varepsilon)k$  that covers $t$ blue vertices. Our key tool is a small family of \emph{surrogate weight functions} built via \emph{neighborhood
sparsification} procedure, which lets us reduce the search to maximum-weight $k$-trees in $G_{\mathrm{conn}}$. Among the lower bound results, the main takeaway is that the problem inherits strong $\mathsf{W}$-hardness and approximation lower bounds from {\sc Max Coverage}, when $\gcov$ is arbitrary. Together, these results mark a clear line between what is possible with \fpt approximation and what
is not, and point to a clean classification by the structure of the coverage and constraint graphs.

\paragraph*{Future directions.}
\begin{itemize}[leftmargin=*,itemsep=-2pt]
  \item \textbf{Faster algorithms.} Improve the running time (and parameter dependence) of our EPAS/PAS.
  
  \item \textbf{Broader tractable classes.} Identify larger families of coverage instances where the problem remains FPT (beyond biclique-free), e.g. bounded VC-dimension, or geometric incidences.
  \item \textbf{Other constraints.} Study variants where the constraint layer enforces independence/packing, degree bounds, or fault tolerance (e.g., $r$-connectivity) instead of (or in addition to) connectivity.
  \item \textbf{Lossy kernels.} It would be ideal to have lossy kernels as in the vanilla setting~\cite{DBLP:journals/tcs/Manurangsi25}; however, we can show {\sc Connected Partial Vertex Cover} admits no lossy kernels. 
\end{itemize}

\end{sloppypar}


\bibliographystyle{plain}
\bibliography{reference}

\end{document}